\definecolor{darkgreen}{rgb}{0,0.5,0}
\newcommand{\A}{\mathcal{A}}
\newcommand{\VV}{\mathcal{V}}
\newcommand{\SSS}{\mathcal{S}}
\newcommand{\DDD}{\mathcal{D}}
\newcommand{\EE}{\mathcal{E}}
\newcommand{\CC}{\mathcal{C}}
\newcommand{\M}{\mathcal{M}}
\newcommand{\LCL}{\mathsf{LCL}}
\newcommand{\LOCAL}{\mathsf{LOCAL}}
\newcommand{\CONGEST}{\mathsf{CONGEST}}
\renewcommand{\P}{\mathsf{P}}
\newcommand{\EXP}{\mathsf{EXP}}
\DeclareMathOperator{\poly}{poly}
\newcommand{\trim}{\mathsf{trim}}
\newcommand{\flexSCC}{\mathsf{flexible}\text{-}\mathsf{SCC}}
\newcommand{\flexibility}{\mathsf{flexibility}}
\newcommand{\indeg}{\deg_{\mathsf{in}}}
\newcommand{\outdeg}{\deg_{\mathsf{out}}}
\newcommand{\PathPi}{\Pi^{\mathrm{path}}}
\newcommand{\PathCC}{\mathcal{C}^{\mathrm{path}}}
\newcommand{\VC}[1]{{V}_{#1}^\mathsf{C}}
\newcommand{\VR}[1]{{V}_{#1}^\mathsf{R}}
\newcommand{\SigmaC}[1]{{\Sigma}_{#1}^\mathsf{C}}
\newcommand{\SigmaR}[1]{{\Sigma}_{#1}^\mathsf{R}}
\newcommand{\sR}{\mathsf{R}}
\newcommand{\sC}{\mathsf{C}}
\newcommand{\rake}{{\sf rake}}
\newcommand{\compress}{{\sf compress}}
\crefname{theorem}{Theorem}{Theorems}
\Crefname{lemma}{Lemma}{Lemmas}
\Crefname{figure}{Figure}{Figures}
\Crefname{claim}{Claim}{Claims}
\Crefname{observation}{Observation}{Observations}
\newtheorem{theorem}{Theorem}[section]
\newtheorem{lemma}{Lemma}[section]
\newtheorem{definition}{Definition}[section]
\definecolor{myblue}{HTML}{0088cc}
\definecolor{myorange}{HTML}{f26924}
\title{Efficient Classification of Locally Checkable \\ Problems in Regular Trees}
\begin{document}

\date{}
\author{
Alkida Balliu
\\
\normalsize Gran Sasso Science Institute \\
\normalsize \texttt{alkida.balliu@gssi.it} \\
\and
Sebastian Brandt
\\
\normalsize CISPA Helmholtz Center for Information Security \\
\normalsize \texttt{brandt@cispa.de} \\
\and
Yi-Jun Chang
\\
\normalsize National University of Singapore \\
\normalsize \texttt{cyijun@nus.edu.sg} \\
\and
Dennis Olivetti
\\
\normalsize Gran Sasso Science Institute \\
\normalsize \texttt{dennis.olivetti@gssi.it} \\
\and
Jan Studen\'y
\\
\normalsize Aalto University \\
\normalsize \texttt{jan.studeny@aalto.fi} \\
\and
Jukka Suomela
\\
\normalsize Aalto University \\
\normalsize \texttt{jukka.suomela@aalto.fi} \\
}

\maketitle

\begin{abstract}
We give practical, efficient algorithms that automatically determine the asymptotic distributed round complexity of a given locally checkable graph problem in the $[\Theta(\log n), \Theta(n)]$ region, in two settings. We present one algorithm for unrooted regular trees and another algorithm for rooted regular trees. The algorithms take the description of a locally checkable labeling problem as input, and the running time is polynomial in the size of the problem description. The algorithms decide if the problem is solvable in $O(\log n)$ rounds. If not, it is known that the complexity has to be $\Theta(n^{1/k})$ for some $k = 1, 2, \dotsc$, and in this case the algorithms also output the right value of the exponent $k$.

In rooted trees in the $O(\log n)$ case we can then further determine the exact complexity class by using algorithms from prior work; for unrooted trees the more fine-grained classification in the $O(\log n)$ region remains an open question.
\end{abstract}

\section{Introduction}

We give practical, efficient algorithms that automatically determine the asymptotic distributed round complexity of a given \emph{locally checkable} graph problem in \emph{rooted or unrooted regular trees} in the $[\Theta(\log n), \Theta(n)]$ region, for both $\LOCAL$  and $\CONGEST$ models, see \cref{sec:prelim} for the precise definitions. In these cases, the distributed round complexity of any locally checkable problem is known to fall in one of the classes shown in \cref{fig:contrib} \cite{CP19timeHierarchy, CKP19exponential, BBOS18almostGlobal, Chang20, BHOS19HomogeneousLCL, brandt21trees, smallmessages}. Our algorithms are able to distinguish between all higher complexity classes from $\Theta(\log n)$ to $\Theta(n)$.

\subsection{State of the art}

Since 2016, there has been a large body of work studying the possible complexities of $\LCL$ problems. After an impressive sequence of works, the complexity landscape of $\LCL$ problems on bounded-degree general graphs, trees, and paths is now  well-understood.  For example, it is known that there are no $\LCL$s with deterministic complexity between $\omega(\log^\ast n)$ and $o(\log n)$. The proofs of some of the complexity gaps implies that the design of asymptotically optimal distributed algorithms can be \emph{automated} in certain settings, leading to a series of research studying the computational complexity of automated design of asymptotically optimal distributed algorithms. See \cref{sec:relatedwork} for more details.

The most recent paper~\cite{balliu21rooted-trees} in this line of research presented an algorithm that takes as input the description of an $\LCL$ problem defined in \emph{rooted regular trees} and classifies the problem into one of the four  complexity classes $O(1)$, $\Theta(\log^* n)$, $\Theta(\log n)$, and $n^{\Theta(1)}$.  The classification applies to both the  $\LOCAL$  and $\CONGEST$ models of distributed computing, both for randomized and deterministic algorithms.

To illustrate the setting of locally checkable problems in rooted regular trees, consider, for example, the following problem, which is meaningful for rooted binary trees:
\begin{quote}
Each node is labeled with 1 or 2. If the label of an internal node is 1, exactly one of its two children must have label 1, and if the label of an internal node is 2, both of its children must have label 1.
\end{quote}
We can represent it in a concise manner as a problem $\CC = \{1 : 12, 2 : 11\}$, where $a : bc$ indicates that a node of label $a$ can have its two children labeled with $b$ and $c$, in some order. We can take such a description, feed it to the algorithm from \cite{balliu21rooted-trees}, and it will output that this problem requires $\Theta(\log n)$ rounds in order to be solved in a rooted tree with $n$ nodes.

\begin{figure}
{\footnotesize
\raggedright
\newcommand{\mybrace}[4]{$\color{#1}\underbrace{\quad}_{\mathclap{\substack{\mathstrut\smash{\text{#2}}\\\mathstrut\smash{\text{#3}}\\\mathstrut\smash{\text{#4}}}}}$}
\newcommand{\priorE}{\mybrace{myblue}{prior}{work}{($\EXP$)}}
\newcommand{\priorP}{\mybrace{myblue}{prior}{work}{($\P$)}}
\newcommand{\newP}{\mybrace{myorange}{this}{work}{($\P$)}}
\newcommand{\open}{\mybrace{black!60!white}{unknown}{}{}}
\newcommand{\remarkline}[1]{\multicolumn{16}{@{}l@{}}{#1}\\[3ex]}
\begin{tabular}{@{}l@{}c@{ }c@{ }c@{ }c@{ }c@{ }c@{ }c@{ }c@{ }c@{ }c@{ }c@{ }c@{ }c@{ }c@{ }c@{}}
\remarkline{(a) Rooted regular trees in deterministic and randomized $\CONGEST$ and $\LOCAL$:}
\hspace*{2ex}
& $\Theta(1)$
& \priorE
& $\Theta(\log^* n)$
& \priorE
& $\Theta(\log n)$
& \priorP
& \ldots
& \newP
& $\Theta(n^{1/3})$
& \newP
& $\Theta(n^{1/2})$
& \newP
& $\Theta(n)$
\end{tabular}\\[4ex]
\begin{tabular}{@{}l@{}c@{ }c@{ }c@{ }c@{ }c@{ }c@{ }c@{ }c@{ }c@{ }c@{ }c@{ }c@{ }c@{ }c@{ }c@{}}
\remarkline{(b) Unrooted regular trees in deterministic $\CONGEST$ and $\LOCAL$:}
\hspace*{2ex}
& $\Theta(1)$
& \open
& $\Theta(\log^* n)$
& \open
& $\Theta(\log n)$
& \newP
& \ldots
& \newP
& $\Theta(n^{1/3})$
& \newP
& $\Theta(n^{1/2})$
& \newP
& $\Theta(n)$
\end{tabular}\\[4ex]
\begin{tabular}{@{}l@{}c@{ }c@{ }c@{ }c@{ }c@{ }c@{ }c@{ }c@{ }c@{ }c@{ }c@{ }c@{ }c@{ }c@{ }c@{ }c@{ }c@{}}
\remarkline{(c) Unrooted regular trees in randomized $\CONGEST$ and $\LOCAL$:}
\hspace*{2ex}
& $\Theta(1)$
& \open
& $\Theta(\log^* n)$
& \open
& $\Theta(\log \log n)$
& \open
& $\Theta(\log n)$
& \newP
& \ldots
& \newP
& $\Theta(n^{1/3})$
& \newP
& $\Theta(n^{1/2})$
& \newP
& $\Theta(n)$
\end{tabular}
\caption{The most efficient algorithms for the classification of distributed round complexities. In the figure we show all possible complexity classes. Each gap between two classes corresponds to a natural decision problem: given a locally checkable problem, determine on which side of the gap its complexity is. For each gap we indicate whether a practical algorithm was provided already by prior work \cite{balliu21rooted-trees}, whether it is first presented in this work, or whether the existence of such a routine is still an open question. The figure also indicates whether the algorithms are in $\P$ (polynomial time in the size of the problem description) or in $\EXP$ (exponential time in the size of the problem description).}\label{fig:contrib}
}
\end{figure}

\subsection{What was missing}

What the prior algorithm from \cite{balliu21rooted-trees} can do  is classifying a given problem into one of the four main complexity classes $O(1)$, $\Theta(\log^* n)$, $\Theta(\log n)$, and $n^{\Theta(1)}$. However, if the complexity is $n^{\Theta(1)}$, we do not learn whether its complexity is, say, $\Theta(n)$ or $\Theta(\sqrt{n})$ or maybe $\Theta(n^{1/10})$. There are locally checkable problems of complexity $\Theta(n^{1/k})$ for every $k = 1, 2, \dotsc$, and there have not been any \emph{practical} algorithm that would determine the value of the exponent $k$ for any given problem.

Furthermore, the algorithm from \cite{balliu21rooted-trees} is only applicable in rooted regular trees, while the case of unrooted trees is perhaps even more interesting.

It has been known that the problem of distinguishing between e.g.\ $\Theta(n)$ and $\Theta(\sqrt{n})$ is \emph{in principle} decidable, due to the algorithm of~\cite{Chang20}. This algorithm is, however, best seen as a theoretical construction. To the best of our knowledge, nobody has implemented it, there are no plans of implementing it, and it seems unlikely that one could classify any nontrivial problem with it using any real-world computer, due to its doubly exponential time complexity. This is the missing piece that we provide in this work.

\subsection{Contributions and motivations}

We present polynomial-time algorithms that determine not only whether the round complexity of a given $\LCL$ problem is $\Theta(n^{1/k})$ for some $k$, but they also determine the exact value of $k$. We give one algorithm for the case of unrooted trees (\cref{sec:unrooted}) and one algorithm for the case of rooted trees (\cref{sec:rooted}). 

Our algorithms not only determine the asymptotic round complexity, but they also output a description of a distributed algorithm attaining this complexity. If the given $\LCL$ problem $\Pi$ has optimal complexity $\Theta(n^{1/k})$, then our algorithms will output a description of a deterministic distributed algorithm that solves $\Pi$ in $O(n^{1/k})$ rounds in the $\CONGEST$ model. Similarly, if the given $\LCL$ problem $\Pi$ has optimal complexity $O(\log n)$, then our algorithms will output a description of a deterministic distributed algorithm that solves $\Pi$ in $O(\log n)$ rounds in the $\CONGEST$ model.

 We have implemented both algorithms for the case of $3$-regular trees, the proof-of-concept implementations are freely available online,\footnote{\url{https://github.com/jendas1/poly-classifier}} and they work fast also in practice.

From a practical point of view,
together with prior work from \cite{balliu21rooted-trees}, there is now a practical algorithm that is able to \emph{completely determine} the complexity of any $\LCL$  problem in rooted regular trees.\footnote{Even though some algorithms in \cite{balliu21rooted-trees} are exponential in the size of the description of the problem, they are nevertheless very efficient in practice. In fact, the authors of \cite{balliu21rooted-trees} have implemented them for the case of binary rooted trees and they are indeed very fast in practice \cite{AnonymousRepo}.
} In the case of unrooted regular trees deciding between the lower complexity classes below $o(\log n)$ remains an open question.

From a theoretical point of view, this work \emph{significantly expands} the class of $\LCL$ problems whose  optimal complexity is known to be decidable in polynomial time. See \cref{fig:contrib} for a summary of the current state of the art on the classification of $\LCL$ complexities for regular trees, showing where the new algorithms are applicable and where the state of the art is given by existing results.

We note that the problem of determining the optimal complexity of an $\LCL$ problem is computationally hard in general: It is undecidable in general~\cite{NaorStockmeyer95}, EXPTIME-hard even for bounded-degree trees~\cite{Chang20}, and PSPACE-hard even for paths and cycles with input labels~\cite{balliu19lcl-decidability}. Hence, in order to understand whether polynomial-time algorithms are even possible, we must restrict our consideration to restricted cases, such as $\LCL$s with no inputs defined on regular trees. In fact, it is known that it is possible to use $\LCL$s with no inputs defined on non-regular trees to encode $\LCL$s with inputs, and hence, by allowing inputs, or constraints that depend on the degree of the nodes, we would make decidability at least PSPACE-hard. 

\subparagraph{Motivations}
Studying $\LCL$s is interesting because, on the one hand, this class of problems is large enough to contain a significant fraction of problems that are commonly studied in the context of the $\LOCAL$ model (e.g., $(\Delta+1)$-coloring, $(2\Delta-1)$-edge coloring, $\Delta$-coloring, weak $2$-coloring, maximal matching, maximal independent set, sinkless orientation, many other orientation problems, edge splitting problems, locally maximal cut, defective colorings, \ldots), but, on the other hand, it is restricted enough so that we can prove interesting results about them, such as decidability and complexity gaps. Moreover, techniques used to prove results on $\LCL$s have been already shown to be extremely useful outside the $\LCL$ context: for example, all recent results about lower bounds for locally checkable problems in the unbounded degree case---e.g., for MIS, maximal matching, ruling sets, and other fundamental problems---use techniques that originally were introduced in the context of $\LCL$s~\cite{BBHORS19MMlowerBound, trulytight, BBOrs, BBKO21, BBKU22}.

In this work, we restrict our attention to the case of regular trees. The study of $\LCL$s on trees is related with our understanding of graph problems in the general setting. Actually, for many problems of interest, unrooted regular trees are hard instances, and hence understanding the complexity of $\LCL$s on trees could help us in understanding the complexity of problems in general unbounded-degree graphs. In fact, a relatively new and promising technique called \emph{round elimination} has been used to prove tight lower bounds for interesting graph problems such as maximal matchings, maximal independent sets, and ruling sets, even if, for now, we are only able to apply this technique for proving lower bounds on trees \cite{Brandt19RE,Olivetti2019REtor, BBHORS19MMlowerBound, BBOrs, binary_lcls, trulytight, BBKO21, BBKU22}.

As for the more restrictive setting of \emph{regular} trees, we would like to point out that many natural $\LCL$ problems have the same optimal complexity in both bounded-degree trees and regular trees. This includes, for example, the  $k$-coloring problem. For any tree $T$ whose maximum degree is at most $\Delta$, we may consider the $\Delta$-regular tree $T^\ast$ which is the result of appending degree-$1$ nodes to all nodes $v$ in $T$ with $1 < \deg(v) < \Delta$ to increase the degree of $v$ to $\Delta$. We may locally simulate $T^\ast$ in the network $T$. As any proper $k$-coloring of $T^\ast$ restricting to $T$ is also a proper $k$-coloring, this reduces the $k$-coloring problem on bounded-degree trees to the same problem on regular trees, showing that the $k$-coloring problem has the same optimal complexity in both graph classes. 
More generally, if an $\LCL$ problem  $\Pi$ has the property that removing degree-1 nodes preserves the correctness of a solution, then $\Pi$ has the same optimal complexity in both bounded-degree trees and regular trees, so our results in this work also apply to these $\LCL$s on bounded-degree trees.

\section{Related work}\label{sec:relatedwork}
Locally Checkable Labeling problems have been introduced by Naor and Stockmeyer \cite{NaorStockmeyer95}, but the class of locally checkable problems has been studied in the distributed setting even before (e.g., in the context of self-stabilisation \cite{AfekKY97}). For many locally checkable problems, researchers have been trying to understand the exact time complexity, and while in many cases upper bounds have been known since the 80s, matching lower bound have been discovered only recently. Examples of this line of research relate to the problems of colorings, matchings, and independent sets, see e.g. \cite{ColeVishkin86, Linial92, Luby1986, panconesi01simple, fraigniaud16local, fischer17improved, BBHORS19MMlowerBound, RG20NetDecomposition, MausT20, BBKU22, Ghaffari-Kuhn}.

In parallel, there have been many works that tried to understand these problems from a \emph{complexity theory} point of view, trying to develop general techniques for classifying problems, understanding which complexities can actually exist, and developing generic algorithmic techniques to solve whole classes of problems at once. In particular, a broad class\footnote{For example, our definition of $\LCL$ does not allow an infinite number of labels, so it does not capture some locally checkable problems such as fractional matching.} of locally checkable problems, called \emph{Locally Checkable Labelings} ($\LCL$s), has been studied in the $\LOCAL$ model of distributed computing, which will be formally defined later.

\subparagraph{Paths and cycles} The first graph topologies on which promising results have been proved are paths and cycles. In these graphs, we now know that there are problems with the following \emph{three} possible time complexities:
\begin{itemize}
    \item $O(1)$: this class contains, among others, trivial problems, e.g.~problems that require every node to output the same label.
    \item $\Theta(\log^* n)$: this class contains, for example, the $3$-coloring problem \cite{ColeVishkin86, Linial92, Naor1991}. 
    \item $\Theta(n)$: this class contains hard problems, for example the problem of consistently orient the edges of a cycle, or the $2$-coloring problem.
\end{itemize}
For $\LCL$s in paths and cycles, we know that there are no other possible complexities, that is, there are \emph{gaps} between the above classes. In other words, there are no $\LCL$s with a time complexity that lies between $\omega(1)$ and $o(\log^* n)$ \cite{NaorStockmeyer95}, and no $\LCL$s with a time complexity that lies between $\omega(\log^* n)$ and $o(n)$ \cite{CKP19exponential}. These results hold also for \emph{randomized} algorithms, and they are constructive: if for example we find a way to design an $O(\log n)$-rounds randomized algorithm for a problem, then we can automatically convert it into an $O(\log^* n)$-round deterministic algorithm.

Moreover, in paths and cycles, given an $\LCL$ problem, we can \emph{decide} its time complexity. In particular, it turns out that for problems with no inputs defined on directed cycles, deciding the complexity of an $\LCL$ is as easy as drawing a diagram and staring at it for few seconds~\cite{Brandt2017}. This result has later been extended to undirected cycles with no inputs~\cite{lcls_on_paths_and_cycles}. Unfortunately, as soon as we consider $\LCL$s where the constraints of the problem may depend on the given inputs, decidability becomes much harder, and it is now known to be PSPACE-hard~\cite{balliu19lcl-decidability}, even for paths and cycles.

\subparagraph{Trees}
Another class of graphs that has been studied quite a lot is the one containing \emph{trees}. While there are still problems with complexities $O(1)$, $\Theta(\log^* n)$, and $\Theta(n)$, there are also additional complexity classes, and sometimes here randomness can help. For example, there are problems that require $\Theta(\log n)$ rounds for both deterministic and randomized algorithms, while there are problems, like \emph{sinkless orientation}, that require $\Theta(\log n)$ rounds for deterministic algorithms and $\Theta(\log \log n)$ rounds for randomized ones \cite{BFHKLRSU16,CKP19exponential,ghaffari17distributed}. Moreover, there are problems with complexity $\Theta(n^{1/k})$, for any natural number $k \ge 1$ \cite{CP19timeHierarchy}. It is known that these are the only possible time complexities in trees \cite{CP19timeHierarchy, CKP19exponential, BBOS18almostGlobal, Chang20, BHOS19HomogeneousLCL, brandt21trees}. In \cite{smallmessages}, it has been shown that the same results hold also in a more restrictive model of distributed computing, called $\CONGEST$ model, and that for any given problem, its complexities in the $\LOCAL$ and in the $\CONGEST$ model, on trees, are actually the same. 

Concerning decidability, the picture is not as clear as in the case of paths and cycles. As discussed in the introduction, it is decidable, \emph{in theory}, if a problem requires $n^{\Omega(1)}$ rounds, and, in that case, it is also decidable to determine the exact exponent \cite{CP19timeHierarchy,Chang20}, but the algorithm is very far from being practical, and in this work we address exactly this issue. Moreover, for \emph{lower} complexities, the problem is still open.
Different works tried to tackle this issue by considering restricted cases. In \cite{binary_lcls}, authors showed that it is indeed possible to achieve decidability in some cases, that is, when problems are restricted to the case of unrooted regular trees, where leaves are unconstrained, and the problem uses only \emph{two} labels.
Then, promising results have been achieved in \cite{balliu21rooted-trees}, where it has been shown that, if we consider \emph{rooted} trees, then we can decide the complexity of $\LCL$s even for $n^{o(1)}$ complexities. Unfortunately, it is very unclear if such techniques can be used to solve the problem in the general case.
In fact, we still do not know if it is decidable whether a problem can be solved in $O(1)$ rounds or it requires $\Omega(\log^* n)$ rounds, and it is not known if it is decidable whether a problem can be solved in $O(\log^* n)$ rounds or it requires $\Omega(\log n)$ for deterministic algorithms and $\Omega(\log \log n)$ for randomized ones. These two questions are very important, and understanding them may also help in understanding problems that are not restricted to regular trees of bounded degree. This is because, as already mentioned before, for many problems it happens that unrooted regular trees are hard instances, and studying the complexity of problems in these instances may give insights for understanding problems in the general setting.

\subparagraph{General graphs}
In general graphs, many more $\LCL$ complexities are possible. For example, there is a gap similar to the one between $\omega(1)$ and $o(\log^* n)$ of trees, but now it holds only up to $o(\log \log^* n)$, and we know that there are problems in the region between $\Omega(\log \log^* n)$ and $o(\log^* n)$. In fact, for any rational $\alpha \ge 1$, it is possible to construct problems with complexity $\Theta(\log^\alpha \log^* n)$ \cite{BHKLOS18lclComplexity}.
A similar statement holds for complexities between $\Omega(\log n)$ and $O(n)$ \cite{BHKLOS18lclComplexity,BBOS18almostGlobal}. 

There are still complexity regions in which we do not know if there are problems or not. For example, while it is known that any problem that has randomized complexity $o(\log n)$ can be sped up to $O(T_{\mathrm{LLL}})$ \cite{CP19timeHierarchy}, where $T_{\mathrm{LLL}}$ is the distributed complexity of the constructive version of the Lov\'asz $\LOCAL$  Lemma, the exact value of $T_{\mathrm{LLL}}$ is unknown, and we only known that it lies between $\Omega(\log \log n)$ and $O(\poly \log \log n)$ \cite{BFHKLRSU16,CPS17DistrLLL, FischerGhaffari17LLL, RG20NetDecomposition}. Another problem that falls in this region is the $\Delta$-coloring problem, for which we still do not know the exact complexity. 

Another open question regards the role of randomness. In general graphs, we know that randomness can also help outside the $O(\log n)$ region \cite{BBOS20paddedLCL}, but we still do not know exactly when it can help and how much.

In general graphs, unfortunately, determining the complexity of a given $\LCL$ problem is undecidable. In fact, we know that this question is undecidable even on grids \cite{NaorStockmeyer95}.

\section{Preliminaries}\label{sec:prelim}

\subparagraph{Graphs}
Let $G = (V,E)$ be a graph. We denote with $n = |V|$ the number of nodes of $G$, with $\Delta$ the maximum degree of $G$, and with $\deg(v)$, for $v \in V$, the degree of $v$. If $G$ is a directed graph, we denote with $\indeg(v)$ and $\outdeg(v)$, the indegree and the outdegree of $v$, respectively. The radius-$r$ neighborhood of a node $v$ is defined to be the subgraph of $G$ induced by the nodes at distance at most $r$ from $v$.

\subparagraph{Model of computing}
In the $\LOCAL$ model of distributed computing, the network is represented with a graph $G = (V,E)$, where the nodes correspond to computational entities, and the edges correspond to communication links. In this model, the computational power of the nodes is unrestricted, and nodes can send arbitrarily large messages to each other.

This model is synchronous, and computation proceeds in rounds. Nodes all start the computation at the same time, and at the beginning they know $n$ (the total number of nodes), $\Delta$ (the maximum degree of the graph), and a unique ID in $\{1,\ldots,n^c\}$, for some constant $c \ge 1$, assigned to them. Then, the computation proceeds in rounds, and at each round nodes can send (possibly different) messages to each neighbor, receive messages, and perform some $\LOCAL$  computation. 

At the end of the computation, each node must produce its own part of the solution. For example, in the case of the $(\Delta+1)$-coloring problem, each node must output its own color, that must be different from the ones of its neighbors. The time complexity is measured as the worst case number of rounds required to terminate, and it is typically expressed as a function of $n$, $\Delta$, and $c$.

\section{Technical overview}\label{sec:overview}

Our new results build on several techniques developed in previous works~\cite{balliu21rooted-trees,lcls_on_paths_and_cycles} designing polynomial-time algorithms that determine the distributed complexity of $\LCL$ problems. In this section, we first give a brief overview of these techniques, then we discuss how in this paper we build upon them and obtain our new results. The aim of this section is to present the intuition behind the results. To keep the discussion at a high level, the presentation here will be a bit imprecise, see \cref{sec:unrooted,sec:rooted} for the precise statements of our results.

\subsection{The high-level framework}
Existing algorithms for deciding the complexity of a given $\LCL$ problem are often based on the following approach.
\begin{enumerate}
    \item Define some combinatorial property $P$ of $\LCL$ problems.
    \item Show that computing $P(\Pi)$ for a given problem $\Pi$ can be done efficiently.
    \item Show that $\Pi$ is in a certain complexity class if and only if $P(\Pi)$ holds.
\end{enumerate}

As discussed in~\cite{Brandt2017,lcls_on_paths_and_cycles}, any $\LCL$  $\Pi$ on directed paths can be viewed as a regular language. Taking the corresponding non-deterministic automaton, we obtain a directed graph $G(\Pi)$ that represents $\Pi$ on directed paths. 

For example, the maximal independent set problem can be described as the automaton with states $V=\{00,01,10\}$ and transitions $E = \{00 \rightarrow 01, 01 \rightarrow 10, 10 \rightarrow 00, 10 \rightarrow 01\}$. Each state corresponds to a possible labeling of the two endpoints $u$ and $v$ of a directed edge $u \rightarrow v$.  Each transition describes a valid configuration of two neighboring directed edges $u \rightarrow v$ and $v \rightarrow w$.

It has been shown~\cite{balliu21rooted-trees,Brandt2017,brandt2021local,lcls_on_paths_and_cycles} that in several cases the distributed complexity of an $\LCL$ can be characterized by simple graph properties of  $G(\Pi)$, even if the underlying graph class is much more complicated than directed paths. The precise definition of $G(\Pi)$ will depend on the choice of the $\LCL$ formalism. 

\subsection{Paths and cycles}
It was shown in~\cite{Brandt2017,lcls_on_paths_and_cycles} that the distributed complexity and solvability of $\Pi$ on paths and cycles can be characterized by simple graph properties of $G(\Pi)$. In particular, $\Pi$ on directed cycles is solvable in $O(\log^* n)$ rounds if and only if $G(\Pi)$ contains a node $v$ that is \emph{path-flexible}, in the sense that there exists a number $K$ such that, in $G(\Pi)$, there is a length-$k$ returning walk for $v$, for each $k \geq K$. 
If such a path-flexible node $v$ exists in $G(\Pi)$, then $\Pi$ on directed cycles can be solved in $O(\log^\ast n)$ rounds in the following manner.
\begin{enumerate}
    \item 
In $O(\log^\ast n)$ rounds, compute an independent set $I$ such that the distance between the nodes in $I$ is at least $K$ and at most $2K$.
\item  Fix the labels for the nodes in $I$ according to the path-flexible node $v$ in $G(\Pi)$.
\item  By the path-flexibility of $v$, this partial labeling can be completed into a correct complete labeling. 
\end{enumerate}

For example, in the automaton for maximal independent set described above, the state $01$ is flexible, as for each $k \geq 5$, there is a length-$k$ walk starting and ending at $01$, so a maximal independent set can be found in $O(\log^\ast n)$ rounds on directed cycles via the above algorithm.

The above characterization can be generalized to both paths and cycles, undirected and directed, after some minor modifications, see~\cite{lcls_on_paths_and_cycles} for the details. For further examples of representing $\LCL$s as automata and how the round complexity of an $\LCL$ can be inferred from basic properties of its associated automaton, see~\cite[Fig.~3]{Brandt2017} and~\cite[Fig.~1 and 3]{lcls_on_paths_and_cycles}.

\subsection{\boldmath The  \texorpdfstring{$O(\log n)$}{O(log n)} complexity class in regular trees}
Subsequently, it was shown in~\cite{balliu21rooted-trees,brandt2021local} that the class of $O(\log n)$-round solvable $\LCL$ problems on rooted and unrooted regular trees can be characterized in a similar way, based on the notion of path-flexibility in the directed graph $G(\Pi)$. To keep the discussion at a high level, we do not discuss the difference between rooted and unrooted trees here. Roughly speaking, $\Pi$ can be solved in $O(\log n)$ rounds on rooted or unrooted regular trees if and only if there exists a subset of labels $S$ such that, if we restrict $\Pi$ to $S$, then its corresponding directed graph is strongly connected and contains a path-flexible node. Such a set $S$ of labels is also called a \emph{certificate} for $O(\log n)$-round solvability.\footnote{Although the certificate described in~\cite{balliu21rooted-trees} also includes the steps in the construction of $S$, the set $S$ alone suffices to certify that $\Pi$ can be solved in $O(\log n)$ rounds, as the $O(\log n)$-round algorithm described in~\cite{balliu21rooted-trees} uses only $S$.}

A key property of such a directed graph is that there exists a number $K$ such that, for each pair of nodes $(u,v)$, and for each integer $k \geq K$, there is a length-$k$ walk from $u$ to $v$ (here we allow the possibility of $u=v$). The property can be described in the following more intuitive manner. For any path of length at least $K$, regardless of how we fix the labels of its two endpoints using $S$, it is always possible to complete the partial labeling into a correct labeling~w.r.t.~$\Pi$ of the entire path using only labels in $S$.

The intuition behind such a characterization is the fact~\cite{CP19timeHierarchy} that all $\LCL$s solvable in  $O(\log n)$ rounds on bounded-degree trees can be solved in a canonical way based on \emph{rake-and-compress decompositions}.  Roughly speaking, a rake-and-compress process is a procedure that decomposes a tree by iteratively removing degree-$1$ nodes (rake) and removing degree-$2$ nodes (compress). This process partitions the set of nodes into several parts:
\[V = \VR{1} \cup \VC{1} \cup \VR{2} \cup \VC{2}   \cup \cdots \cup \VR{L},\] 
where $\VR{i}$ is the set of nodes removed by the rake operation in the $i$th iteration and $\VC{i}$ is the set of nodes removed by the compress operation in the $i$th iteration. It can be shown that $L = O(\log n)$~\cite{Miller1985}.

There are several variants of a rake-and-compress process. Here the considered variant is such that, in the compress operation, a degree-$2$ node $v$ is removed if $v$ belongs to a path whose length is at least $\ell$, so we may assume that the connected components in the subgraph induced by $\VC{i}$ are paths with length at least $\ell$.  

Let $\Pi$ be any $\LCL$ problem satisfying the combinatorial characterization for $O(\log n)$-round solvability discussed above, and let the set of labels $S$ be a certificate for $O(\log n)$-round solvability.
By setting $\ell = K$ in the property of the combinatorial characterization, we may obtain an $O(\log n)$-round algorithm solving the given $\LCL$ problem $\Pi$ using only the labels in $S$. The high-level idea is that we can label the tree in an order that is the reverse of the one of the rake-and-compress procedure: $\VR{L},  \ldots, \VC{2}, \VR{2}, \VC{1}, \VR{1}$, as we observe that the property of the combinatorial characterization discussed above ensures that any correct labeling of $\VR{L} \cup \cdots \cup \VR{i}$  can be extended to a correct labeling of $\VR{L} \cup \cdots \cup \VR{i} \cup \VC{i-1}$ and similarly any  correct labeling of $\VR{L} \cup \cdots \cup \VC{i}$  can be extended to a correct labeling of $\VR{L} \cup \cdots \cup \VC{i} \cup \VR{i}$.

The requirement that $\Pi$ is an $\LCL$ problem defined on \emph{regular} trees is \emph{critical} in the above approach, as this requirement ensures that for each non-leaf node, the set of constraints is the same, so we do not need to worry about the possibility for different nodes in the tree to have different sets of constraints in  $\Pi$. Indeed, if we allow nodes of different degrees to have different sets of constraints, then the problem of determining the distributed complexity of an $\LCL$ in bounded-degree trees becomes EXPTIME-hard~\cite{Chang20}.

\subsection{The polynomial complexity region in regular trees}
In this work, we will extend the above approach to cover all complexity classes in the $[\Theta(\log n), \Theta(n)]$ region. By~\cite{BBOS18almostGlobal,Chang20,CP19timeHierarchy}, we know that the possible complexity classes in this region are $\Theta(\log n)$ and $\Theta(n^{1/k})$ for all positive integers $k$. Similar to the complexity class $O(\log n)$, any $\LCL$ problem $\Pi$ solvable in $O(n^{1/k})$ rounds can be solved in a canonical way  in  $O(n^{1/k})$ rounds using a variant of rake-and-compress decomposition~\cite{Chang20}.

Specifically, $\Pi$ is  $O(n^{1/k})$-round solvable if and only if it can be solved in a canonical way  using a rake-and-compress decomposition, where in each iteration, we perform $\gamma = O(n^{1/k})$ rake operations and one compress operation. Similar to the case of complexity class $O(\log n)$, in the compress operation, a degree-$2$ node $v$ is removed if  $v$ belongs to a path whose length is at least $\ell$, where $\ell = O(1)$ is some sufficiently large number depending only on the $\LCL$ problem $\Pi$.
It can be shown~\cite{Chang20} that by selecting $\gamma = O(n^{1/k})$ to be large enough, the number of layers $L$ in the decomposition $V = \VR{1} \cup \VC{1} \cup \VR{2} \cup \VC{2}   \cup \cdots \cup \VR{L}$ is $k$, and such a decomposition can be computed in $O(n^{1/k})$ rounds.  

To derive a certificate for $O(n^{1/k})$-round solvability based on the result of~\cite{Chang20}, we will need to take into consideration the following properties about the variant of the rake-and-compress decomposition described above. 
\begin{itemize}
    \item The number of layers $L = k$ is now a \emph{finite} number independent of the size of the graph~$n$. For technical reasons, this means that the certificate for $O(n^{1/k})$-round solvability cannot be based on a single set of labels $S$, as the certificate for $O(\log n)$-round solvability~\cite{balliu21rooted-trees,brandt2021local}. We need to consider the possibility that different sets of labels are used for different layers in the design of the  certificate for $O(n^{1/k})$-round solvability.
    \item The number of rake operations for a layer can be unbounded as $n$ goes to infinity. That is, $\VR{i}$  is no longer an independent set, and each connected component in the subgraph induced by $\VR{i}$ can be a very large tree.  
\end{itemize}

\subparagraph{The certificate}
Our certificate for $O(n^{1/k})$-round solvability will be based on the notion of a \emph{good} sequence of sets of labels. The definition of a good sequence relies on two functions on a set of labels: $\trim$ and $\flexSCC$. As we will later see, these two functions correspond to rake and compress, respectively. Given an $\LCL$ problem $\Pi$ and a set of labels $S$,   $\trim(S)$ and $\flexSCC$ are defined as follows.
\begin{itemize}
    \item $\trim(S)$ is the subset of $S$ resulting from removing all labels $\sigma \in S$ meeting the following conditions: There exists some number $i$ such that if the root of the complete regular tree $T$ of height $i$ is labeled by $\sigma$, then we are not able to complete the labeling of $T$  using only labels in $S$ such that the overall labeling is correct w.r.t.~$\Pi$.
    \item $\flexSCC(S)$ is a collection of disjoint subsets of $S$ defined as follows. Consider the directed graph representing the $\LCL$ problem $\Pi$ restricted to $S$. Let $\flexSCC(S)$ be the set of strongly connected components that have a path-flexible node. The intuition behind this definition is similar to the intuition behind the certificate for $O(\log n)$-round solvability.
\end{itemize}

We briefly explain the connection between $\trim$ and rake. Suppose we want to find a correct labeling of a regular tree $T$ using only the labels in $S$. If a label $\sigma$ is in $\trim(S)$, then $\sigma$ can only be used in places that are sufficiently close to a leaf. To put it another way, if we do a large number of rakes to $T$, then the labels in $\trim(S)$ can only be used to label the nodes that removed due to a rake operation. 

The connection between $\flexSCC$ to compress is due to the fact that the nodes removed due to a compress operation form long paths, and we know that in order to label long paths efficiently in $O(\log^\ast n)$ rounds, it is necessary to use labels corresponding to path-flexible nodes, due to the existing automata-theoretic characterization~\cite{Brandt2017,lcls_on_paths_and_cycles} of round complexity of $\LCL$s on paths and cycles.

We say that a sequence  $(\SigmaR{1}, \SigmaC{1}, \SigmaR{2}, \SigmaC{2}, \ldots, \SigmaR{k})$
is \emph{good} if it satisfies the following rules, where $\Sigma$ is the set of all labels of $\Pi$.
\begin{align*}
    \SigmaR{i} & =
    \begin{cases}
    \trim(\Sigma) & \text{if $i=1$},\\
    \trim(\SigmaC{i-1}) &  \text{if $i>1$}.\\
    \end{cases}\\
    \SigmaC{i} &\in \flexSCC(\SigmaR{i}).\\
    \SigmaR{k} &\neq \emptyset.
\end{align*}
The only nondeterminism in the above rules is the choice of $\SigmaC{i} \in \flexSCC(\SigmaR{i})$ for each $i$. We will show that such a sequence exists if and only if the underlying $\LCL$ problem can be solved in $O(n^{1/k})$ rounds. Intuitively, $\SigmaR{i}$ represents the set of labels that are eligible to label the nodes in $\VR{i}$, and similarly $\SigmaC{i}$ represents the set of labels that are eligible to label the nodes in $\VC{i}$.

\subparagraph{The classification}
The notion of a good sequence allows us to classify the complexity classes in the region $[\Theta(\log n), \Theta(n)]$.
Specifically, we define the \emph{depth} $d_\Pi$ of an $\LCL$ problem $\Pi$  as the largest $k$ such that a good sequence $(\SigmaR{1}, \SigmaC{1}, \SigmaR{2}, \SigmaC{2}, \ldots, \SigmaR{k})$ exists. 
If there is no good sequence, then we set $d_\Pi = 0$. 
If there is a good sequence $(\SigmaR{1}, \SigmaC{1}, \SigmaR{2}, \SigmaC{2}, \ldots, \SigmaR{k})$ for each positive integer $k$, then we set $d_\Pi = \infty$.
We will show that $d_\Pi$ characterizes the distributed complexity of $\Pi$ in the following manner.

\begin{itemize}
    \item If $d_\Pi = 0$, then $\Pi$ is unsolvable in the sense that there exists a regular tree such that there is no correct solution of $\Pi$ on this rooted tree. This follows from the definition of $\trim$ and the observation that  $d_\Pi = 0$ if $\trim(\Sigma) = \emptyset$.
    \item If $d_\Pi = k$ is a positive integer, then the distributed complexity of $\Pi$ is $\Theta(n^{1/k})$.
    \item If $d_\Pi = \infty$, then $\Pi$ can be solved in $O(\log n)$ rounds. If we can have a good sequence that is arbitrarily long, then there must be a \emph{fixed point} $S$ in the sequence such that $\trim(S) = S$ and $\flexSCC(S) = \{S\}$, because $\SigmaR{1} \supseteq \SigmaC{1} \supseteq \cdots \supseteq \SigmaR{k}$. We will show that the fixed point $S$ qualifies to be a certificate for $O(\log n)$-round solvability. 
\end{itemize}

The fixed point phenomenon explains why the  notion of good sequence was not 
needed in~\cite{balliu21rooted-trees,brandt2021local}, as the existence of a fixed point for the case  $\Pi$ is $O(\log n)$-round solvable implies that we may apply the same strategy according to the fixed point to label each layer of the rake-and-compress decomposition to solve $\Pi$ in $O(\log n)$ rounds.

\subparagraph{The proof ideas}

To show the correctness and efficiency of our characterization, we need to do the following.
\begin{description}
    \item[Upper bound:] Given a good sequence $(\SigmaR{1}, \SigmaC{1}, \SigmaR{2}, \SigmaC{2}, \ldots, \SigmaR{k})$, show that there exists an $O(n^{1/k})$-round algorithm solving $\Pi$. Therefore, $d_\Pi = k$ implies $O(n^{1/k})$-round solvability.
    \item[Lower bound:] Given an $o(n^{1/k})$-round algorithm solving $\Pi$, show that a  good sequence $(\SigmaR{1}, \SigmaC{1}, \SigmaR{2}, \SigmaC{2}, \ldots, \SigmaR{k+1})$ exists.  Therefore, $d_\Pi = k$ implies $\Omega(n^{1/k})$-round solvability.
    \item[Efficiency:] Design a polynomial-time algorithm that computes $d_\Pi$ for any given  description of an $\LCL$ problem $\Pi$.
\end{description}

The upper bound proof is relatively simple. Similar to the certificate $O(\log n)$-round solvability, we just need to show that $\Pi$ can be solved in $O(n^{1/k})$ rounds using rake-and-compress decompositions given that a good sequence $(\SigmaR{1}, \SigmaC{1}, \SigmaR{2}, \SigmaC{2}, \ldots, \SigmaR{k})$ exists. 

The lower bound proof is much more complicated. Given an algorithm $\mathcal{A}$ solving $\Pi$ in $t = o(n^{1/k})$ rounds, we will consider a tree $G$ that is a result of a hierarchical combination of complete trees and paths of length greater than $t$. Intuitively, $G$ is chosen to be the fullest possible tree that can be partitioned into  $V = \VR{1} \cup \VC{1} \cup \VR{2} \cup \VC{2}   \cup \cdots \cup \VR{k+1}$ with a rake-and-compress decomposition of~\cite{Chang20} with $L = k+1$ layers. We will prove by induction that if we take $\SigmaR{i}$ to be the set of possible output labels of  $\mathcal{A}$  for $\VR{i} \cup \VC{i}   \cup \cdots \cup \VR{k+1}$  and take $\SigmaC{i}$ to be the set of possible output labels of  $\mathcal{A}$  for $\VC{i} \cup \VR{i+1}   \cup \cdots \cup \VR{k+1}$, then $(\SigmaR{1}, \SigmaC{1}, \SigmaR{2}, \SigmaC{2}, \ldots, \SigmaR{k+1})$ must be a good sequence. In particular, the non-emptiness of $\SigmaR{k+1}$ follows from the correctness of $\mathcal{A}$.

To design a polynomial-time algorithm computing $d_\Pi$, we recall that the only nondeterminism in the  rules for a good sequence is the choice of $\SigmaC{i} \in \flexSCC(\SigmaR{i})$, so we will just do a brute-force search for all possibilities. Although this seems very inefficient, we recall that $\flexSCC(\SigmaR{i})$ is a collection of disjoint subsets of $\SigmaR{i}$, so the summation of the size of all sets of labels considered in each level is at most  the total number of labels $|\Sigma|$ in $\Pi$. The number of levels we need to explore is also bounded, as $\SigmaR{1} \supseteq \SigmaC{1} \supseteq \cdots \supseteq \SigmaR{k}$. If $k$ exceeds $|\Sigma|$, then we know that there is a fixed point $\SigmaR{i}$ such that   $\SigmaR{i} = \SigmaC{i} = \SigmaR{i+1} = \SigmaC{i+1} = \cdots$, so $d_\Pi = \infty$.

\subparagraph{The differences between rooted and unrooted trees}
The high-level proof strategy presented in this technical overview applies to both rooted and unrooted regular trees, showing that these two graph classes behave very similarly in the complexity region $[\Theta(\log n), \Theta(n)]$. 
There are still some technical differences between rooted and unrooted trees. 

\begin{itemize}
    \item The formalisms for representing $\LCL$ problems are different for rooted and unrooted trees. In the case of rooted trees, the problem can refer to orientations. For example, what is permitted for a parent can be different from what is permitted for a child. Instead of specifying node and edge configurations, we follow \cite{balliu21rooted-trees} and specify what are permitted multisets of child labels for each node label.
    \item  For the upper bound, we need to generalize the rake-and-compress decomposition of~\cite{Chang20}  so that it is applicable in rooted trees.
    \item For the lower bound, the lower bound graph for unrooted trees  does not work for the rooted trees. Roughly speaking, this is because the presence of edge orientation increases the symmetry breaking capability of nodes, so some indistinguishability arguments in the lower bound proof for unrooted trees do not work for rooted trees. Therefore, we will need to consider a different approach for crafting the lower bound graph for rooted trees.
\end{itemize}

\section{Unrooted trees}\label{sec:unrooted}

In this section, we give a polynomial-time-computable characterization of $\LCL$ problems for regular unrooted trees with complexity $O(\log n)$ or $\Theta(n^{1/k})$  for any positive integer $k$.

\subsection{Locally checkable labeling for unrooted trees} 

A $\Delta$-regular tree is a tree where the degree of each node is either $1$ or $\Delta$.
An $\LCL$ problem for $\Delta$-regular unrooted trees is defined as follows.

\begin{definition}[$\LCL$ problems for regular unrooted trees]\label{def-lcl-unrooted}
For unrooted trees, an $\LCL$ problem $\Pi=(\Delta, \Sigma,\VV,\EE)$ is defined by the following  components.
\begin{itemize}
    \item $\Delta$ is a positive integer specifying the maximum degree. 
    \item $\Sigma$ is a finite set of labels.
    \item $\VV$ is a set of size-$\Delta$ multisets of labels in $\Sigma$ specifying the node constraint.
    \item $\EE$ is a set of size-$2$ multisets of labels in $\Sigma$ specifying the edge constraint.
\end{itemize}
\end{definition}

We call a size-$\Delta$ multiset $C$ of labels in $\Sigma$ a \emph{node configuration}.
A node configuration  $C$  is correct with respect to $\Pi=(\Delta, \Sigma,\VV,\EE)$  if $C \in \VV$.
We call a size-$2$ multiset $D$ of labels in $\Sigma$ an \emph{edge configuration}.
An edge  configuration  $D$  is correct with respect to $\Pi=(\Delta, \Sigma,\VV,\EE)$  if $D \in \EE$.
We define the correctness criteria for a  labeling of  a $\Delta$-regular tree  in \cref{def-correctness-unrooted}.

\begin{definition}[Correctness criteria]\label{def-correctness-unrooted}
Let $G=(V,E)$ be a tree whose maximum degree is at most $\Delta$. For each edge $e=\{u,v\}$ in the tree, there are two half-edges  $(u,e)$ and  $(v,e)$. A solution of $\Pi=(\Delta,\Sigma,\VV,\EE)$ on $G$ is a labeling that assigns a label in $\Sigma$ to each half-edge in $G$.
\begin{itemize}
    \item For each node $v\in V$ with $\deg(v) = \Delta$ its node configuration $C$ is  the multiset of $\Delta$ half-edge labels of $(v,e_1)$, $(v, e_2)$, $\ldots$, $(v, e_{\Delta})$, where $e_1, e_2, \ldots, e_\Delta$ are the $\Delta$ edges incident to $v$. We say that the labeling is locally-consistent on $v$ if $C \in \VV$.
    \item For each edge $e= \{u,v\}\in E$, its edge configuration $D$ is the  multiset of two half-edge labels of $(u,e)$ and $(v,e)$. We say that the labeling is locally-consistent on $e$ if $D \in \EE$.
\end{itemize}
 The labeling  is a correct solution if it is locally-consistent on all $v\in V$ with $\deg(v) = \Delta$ and all $e \in E$.
\end{definition}

In other words, a labeling of $G=(V,E)$ is correct if the edge configuration for each $e\in E$ is correct  and  the node configuration for each $v\in V$ with $\deg(v) = \Delta$ is correct.  All nodes whose degree is not $\Delta$ are unconstrained.

 Although  $\Pi=(\Delta,\Sigma,\VV,\EE)$ is defined for $\Delta$-regular unrooted trees, \cref{def-correctness-unrooted} applies to all trees whose maximum degree is at most $\Delta$.
 We emphasize that all nodes $v$ whose degree is not $\Delta$ are \emph{unconstrained} in that there is no requirement about the node configuration of $v$.
 Nevertheless, we may focus on $\Delta$-regular unrooted trees without loss of generality. The reason is that for any unrooted tree $G$ whose maximum degree is at most $\Delta$, we may consider the unrooted tree $G^\ast$ which is the result of appending degree-$1$ nodes to all nodes $v$ in $G$ with $1 < \deg(v) < \Delta$ to increase the degree of $v$ to $\Delta$. This only blows up the number of nodes by at most a $\Delta$ factor. 
 We claim that the asymptotic optimal round complexity of $\Pi$ is the same in both $G$ and $G^\ast$.
 Any correct solution of $\Pi$ on $G^\ast$ restricted to $G$ is a correct solution of  $\Pi$ on $G$, as all nodes whose degree is not $\Delta$ are unconstrained. Therefore, if we have an algorithm for $\Pi$ in $\Delta$-regular unrooted trees, then the same algorithm also allows us to solve $\Pi$ in unrooted trees with maximum degree $\Delta$ in the same asymptotic round complexity.

\begin{definition}[Complete trees of height $i$]\label{def-complete-trees} We define the rooted trees $T_i$ and $T_i^\ast$ recursively as follows.
 \begin{itemize}
     \item $T_0$ is the trivial tree with only one node.
     \item $T_i$ is the result of appending $\Delta-1$ trees $T_{i-1}$ to the root $r$.
     \item $T_i^\ast$ is the result of appending $\Delta$ trees $T_{i-1}$ to the root $r$.
 \end{itemize}
\end{definition} 

Observe that $T_i^\ast$ is the unique maximum-size tree of maximum degree $\Delta$ and height $i$. All nodes within distance $i-1$ to the root $r$ in  $T_i^\ast$ have degree $\Delta$. All nodes whose distance to $r$ is exactly $i$ are degree-$1$ nodes. Although  $T_i$ and $T_i^\ast$ are defined as rooted trees, they can also be viewed as unrooted trees.

\begin{definition}[Trimming] 
Given an $\LCL$ problem $\Pi=(\Delta,\Sigma,\VV,\EE)$  and a subset $\SSS \subseteq \VV$ of node configurations, we define $\trim(\SSS)$ as the set of all node configurations $C \in \SSS$ such that for each $i \geq 1$ it is possible to find a correct labeling of $T_i^\ast$ such that the node configuration of the root is $C$ and the node configurations of the remaining degree-$\Delta$ nodes are in $\SSS$.
\end{definition}

In the definition, note that if for some $i \geq 1$ it is not possible to find such a labeling of $T_i^\ast$, then it is also not possible for any larger $i$. The reason is that if such a labeling for larger $i$ exists, then by taking subgraph, we obtain such a labeling for of $T_i^\ast$. Here we use the fact that nodes  by taking subgraph, and using the fact that all nodes whose degree is not $\Delta$ are unconstrained. 

Intuitively, $\trim(\SSS)$ is the subset of $\SSS$ resulting from removing all node configurations in $\SSS$ that are not usable in a correct labeling of a sufficiently large $\Delta$-regular tree using only node configurations in $\SSS$.

In fact, given any tree $G$ of maximum degree $\Delta$ and a node $v$ of degree $\Delta$ in $G$, after labeling the half-edges surrounding $v$ using a node configuration in $\trim(\SSS)$, it is always possible to extend this labeling to a complete correct labeling of $G$ using only node configurations  in $\trim(\SSS)$. Such a labeling extension is possible due to \cref{lem-trimming}.

\begin{lemma}[Property of trimming]\label{lem-trimming}
Let  $\SSS \subseteq \VV$ such that  $\trim(\SSS) \neq \emptyset$. 
For each node configuration $C \in \trim(\SSS)$ and each label $\sigma \in C$, there exist a node configuration $C' \in \trim(\SSS)$ and a label $\sigma' \in C'$ such that the multiset $\{\sigma, \sigma'\}$ is in $\EE$.
\end{lemma}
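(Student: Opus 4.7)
The plan is to exploit the fact that $C \in \trim(\SSS)$ gives a correct labeling of $T_i^\ast$ for every $i$, pick a candidate neighbor configuration from each one, and then use pigeonhole over the finite set $\SSS \times \Sigma$ to obtain a single pair $(C',\sigma')$ that simultaneously certifies $C' \in \trim(\SSS)$ at every depth. Concretely, for each $i \geq 2$ I would fix a correct labeling $L_i$ of $T_i^\ast$ whose root $r$ carries configuration $C$ and whose remaining degree-$\Delta$ nodes carry configurations in $\SSS$, and then fix one half-edge at $r$ that $L_i$ labels $\sigma$ (at least one exists because $\sigma \in C$). Let $u_i$ be the other endpoint of that edge, let $C'_i \in \SSS$ be its node configuration under $L_i$, and let $\sigma'_i \in C'_i$ be the label $L_i$ places on the half-edge at $u_i$; by edge-consistency of $L_i$, $\{\sigma,\sigma'_i\} \in \EE$ for every $i$.

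Because $(C'_i,\sigma'_i)$ lies in the finite set $\SSS \times \Sigma$, some pair $(C',\sigma')$ is realised by infinitely many $i$. This pair already satisfies $\sigma' \in C'$ and $\{\sigma,\sigma'\} \in \EE$, so the only remaining task is to show that $C' \in \trim(\SSS)$. Unfolding the definition, I need to produce, for each $j \geq 1$, a correct labeling of $T_j^\ast$ whose root has configuration $C'$ and whose other degree-$\Delta$ nodes have configurations in $\SSS$. To do so I would pick any $i \geq j+1$ with $(C'_i,\sigma'_i) = (C',\sigma')$—possible by the pigeonhole—and restrict $L_i$ to the radius-$j$ ball $B$ around $u_i$ in $T_i^\ast$.

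The one geometric claim that requires care, and which I expect to be the main obstacle in the write-up, is that $B$, viewed as an unrooted tree, is isomorphic to $T_j^\ast$ centred at $u_i$. In $T_i^\ast$ seen from $u_i$, the side containing $r$ reaches depth $i$ whereas the $\Delta-1$ subtrees below $u_i$ reach depth only $i-1$, so the ball is not automatically symmetric; however, as soon as $i \geq j+1$ both sides still extend to depth at least $j$. Under that hypothesis every node within distance $j-1$ of $u_i$ still has its full degree $\Delta$ present inside $B$, and every node at distance exactly $j$ has exactly one $B$-neighbour, matching the degree-$\Delta$ internal nodes and degree-$1$ leaves of $T_j^\ast$. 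Restricting $L_i$ to $B$ therefore yields a correct labeling of $T_j^\ast$ in which $u_i$ is the root with configuration $C'$ and every other degree-$\Delta$ node has a configuration in $\SSS$, establishing $C' \in \trim(\SSS)$ and completing the argument.
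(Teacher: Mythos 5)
Your proof is correct. The key objects are the same as in the paper's argument -- you look at the neighbour $u_i$ of the root across a half-edge labelled $\sigma$ in a correct labeling of $T_i^\ast$, and you rely on the geometric fact that the radius-$j$ ball around $u_i$ is isomorphic to $T_j^\ast$ once $i \geq j+1$ (your verification of this, including the check that every node within distance $j-1$ of $u_i$ retains full degree $\Delta$ and that the distance-$j$ nodes become degree-$1$ leaves, is sound; note only that the remaining degree-$\Delta$ nodes of the ball include the original root $r$, whose configuration is $C \in \trim(\SSS) \subseteq \SSS$, so the restriction condition is still met). The logical packaging, however, is genuinely different. The paper argues by contradiction: it picks the smallest depth $s$ at which configurations in $\SSS \setminus \trim(\SSS)$ fail, observes that under the negation of the lemma the neighbour $w$ of the root in $T_{s+1}^\ast$ must carry such a configuration, and then contradicts minimality using the ball around $w$. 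You instead argue directly, extracting one candidate pair $(C_i', \sigma_i')$ from each depth $i$ and using pigeonhole over the finite set $\SSS \times \Sigma$ to find a single pair witnessing all depths simultaneously. Your route buys a constructive flavour and avoids the minimal-counterexample bookkeeping, at the cost of invoking finiteness of the label set (which the paper's version does not need at this point); both are equally rigorous here since $\SSS$ and $\Sigma$ are finite by definition.
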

\begin{proof}
Assuming that such $C'$ and $\sigma'$ do not exist,  we derive a contradiction as follows.
We pick $s$ to be the smallest number such that there is no correct labeling of $T_s^\ast$ where the node configuration of the root $r$ is in  $\SSS \setminus \trim(\SSS)$ and the node configuration of each remaining degree-$\Delta$ node of $T_s^\ast$ is in $\SSS$.   Such a number $s$ exists due to the definition of $\trim$.

Now consider a correct labeling of $T_{s+1}^\ast$ where the node configuration of the root $r$ is $C$ and the node configuration of each remaining degree-$\Delta$ node is in $\SSS$. Such a correct labeling exists due to the fact that $C \in \trim(\SSS)$.  Our assumption on the non-existence of $C'$ and $\sigma'$ implies that the node configuration $\tilde{C}$ of one child $w$ of the root $r$  of $T_{s+1}^\ast$ must be in $\SSS \setminus \trim(\SSS)$. However, the radius-$s$ neighborhood  of $w$ in  $T_{s+1}^\ast$ is isomorphic to $T_{s}^\ast$ rooted at $w$. Since the node configuration of $w$ is in  $\SSS \setminus \trim(\SSS)$, our choice of $s$ implies that the labeling of the radius-$s$ neighborhood  of $w$ cannot be correct, which is a contradiction.
\end{proof}

\subparagraph{Path-form of an LCL problem}
Given an $\LCL$ problem $\Pi=(\Delta,\Sigma,\VV,\EE)$ and a subset $\SSS \subseteq \VV$ of node configurations, we define 
\[\DDD_\SSS = \text{the set of all size-$2$ multisets $D$ such that  $D$ is a  sub-multiset of $C$ for some $C \in \SSS$}.\]

To understand the intuition behind the definition $\DDD_\SSS$, define the length-$k$ hairy path $H_k$ as the result obtained by starting from a length-$k$ path $P = (v_1, v_2, \ldots, v_{k+1})$ and then adding degree-$1$ nodes to make $\deg(v_i) = \Delta$ for all $1 \leq i \leq k+1$. If our task is to label hairy paths using node configurations in $\SSS$, then this task is identical to labeling paths using node configurations in $\DDD_\SSS$.
In other words, the $\LCL$ problem $(\Delta, \Sigma, \SSS, \EE)$ on hairy paths is equivalent to the $\LCL$ problem $(2, \Sigma, \DDD_\SSS, \EE)$ on paths. Hence $(2, \Sigma, \DDD_\SSS, \EE)$ is the \emph{path-form} of $(\Delta, \Sigma, \SSS, \EE)$.

\subparagraph{Automaton for the path-form of an LCL problem}
Given a set $\DDD$ of size-$2$ multisets whose elements are in $\Sigma$, we define the directed graph $\M_{\DDD}$ as follows. The node set $V(\M_{\DDD})$  of $\M_{\DDD}$ is the set of all pairs $(a,b) \in \Sigma^2$  such that the multiset $\{a,b\}$ is in $\DDD$. 
The edge set $E(\M_{\DDD})$ of $\M_{\DDD}$ is defined as follows.
For any two pairs $(a,b)\in V(\M_{\DDD})$  and $(c,d)\in V(\M_{\DDD})$, we add a directed edge $(a,b) \rightarrow (c,d)$ if the multiset $\{b,c\}$ is an edge configuration in $\EE$. Note that $\M_{\DDD}$ could contain self-loops.

The motivation for considering $\M_{\DDD}$ is that it can be seen as an automaton recognizing the correct solutions for the $\LCL$ problem $(2, \Sigma, \DDD, \EE)$ on paths, as each length-$k$  walk $(a_1, b_1) \rightarrow (a_2, b_2) \rightarrow \cdots \rightarrow (a_{k+1}, b_{k+1})$ of $\M_{\DDD}$ corresponds to a correct labeling of a length-$k$ path $(v_1, v_2, \ldots, v_{k+1})$ where the labeling of half-edge $(v_i, \{v_{i-1}, v_{i}\})$ is $a_i$ and the labeling of half-edge $(v_i, \{v_i, v_{i+1}\})$ is $b_i$.

\subparagraph{Path-flexibility}   With respect to the directed graph $\M_{\DDD}$,  we say that $(a,b) \in V(\M_{\DDD})$ is path-flexible if   there exists an integer $K$ such that for each integer $k \geq K$,  there exist length-$k$ walks  $(a,b) \leadsto (a,b)$,  $(a,b) \leadsto (b,a)$,  $(b,a) \leadsto (a,b)$, and  $(b,a) \leadsto (b,a)$ in  $\M_{\DDD}$. Throughout this paper, we write $u \leadsto v$ to denote a walk starting from $u$ and ending at $v$.

It is clear that  $(a,b)$ is path-flexible if and only if  $(b,a)$ is path-flexible. Hence we may extend the notion of path-flexibility from $V(\M_{\DDD})$ to $\DDD$.   That is, we say that a size-$2$ multiset $\{a,b\} \in \DDD$ is path-flexible if $(a,b)$ is path-flexible.

The following lemma is useful in lower bound proofs. For any  $\{a,b\} \in \DDD$ that is not path-flexible, the following lemma shows that there are infinitely many path lengths $k$ such that there is no length-$k$ $s \leadsto t$  walk for some $s \in \{(a,b),(b,a)\}$ and $t \in \{(a,b),(b,a)\}$. As we will later see, this inflexibility in the possible path lengths implies lower bounds for distributed algorithms that may use the configuration $\{a,b\}$.

\begin{lemma}[Property of path-inflexibility]\label{lem-inflex}
Suppose that the size-2 multiset $\{a,b\} \in \DDD$ is not path-flexible. Then one of the following holds.
\begin{itemize}
    \item There is no $s \leadsto t$  walk for at least one choice of $s \in \{(a,b),(b,a)\}$ and $t \in \{(a,b),(b,a)\}$.
    \item There is an integer $2 \leq x \leq |\Sigma|^2$ such that for any positive integer $k$ that is not an integer multiple of $x$, there are no length-$k$ walks $(a,b) \leadsto (a,b)$  and $(b,a) \leadsto (b,a)$ in $\M_{\DDD}$.
\end{itemize}
\end{lemma}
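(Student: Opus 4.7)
The plan is to dichotomize on reachability in $\M_{\DDD}$ and then apply a classical period argument. First I would ask whether all four walks $(a,b) \leadsto (a,b)$, $(a,b) \leadsto (b,a)$, $(b,a) \leadsto (a,b)$, $(b,a) \leadsto (b,a)$ exist in $\M_{\DDD}$. If at least one is missing, then the first bullet of the lemma already holds and there is nothing left to prove. So from now on assume all four walks exist; in particular $(a,b)$ and $(b,a)$ lie in a common strongly connected component $\mathcal{C}$ of $\M_{\DDD}$.

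Next I would introduce the \emph{period} $x$ of $\mathcal{C}$, defined as the greatest common divisor of the lengths of all closed walks through $(a,b)$. It is a standard fact that this gcd is independent of the chosen base vertex inside the SCC, so it equally describes closed walks through $(b,a)$. Since $\mathcal{C}$ contains at least one directed cycle (of length at most $|V(\M_{\DDD})| \leq |\Sigma|^2$) and $x$ divides every closed-walk length, we have $1 \leq x \leq |\Sigma|^2$. The heart of the argument is ruling out $x=1$: I would invoke the classical theorem that an \emph{aperiodic} strongly connected digraph admits closed walks of every sufficiently large length at each vertex, and hence, by prepending a fixed $u \leadsto v$ walk, admits $u \leadsto v$ walks of every sufficiently large length for any ordered pair $u,v \in \mathcal{C}$. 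Applying this to the four ordered pairs in $\{(a,b),(b,a)\}^2$ gives a uniform $K$ witnessing path-flexibility of $\{a,b\}$, contradicting the hypothesis. Therefore $x \geq 2$.

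Having established $2 \leq x \leq |\Sigma|^2$, the second bullet follows directly from the definition of $x$: every closed walk at $(a,b)$ and every closed walk at $(b,a)$ has length divisible by $x$, so whenever $k$ is not a multiple of $x$ neither length-$k$ closed walk can exist.

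The main obstacle is the aperiodicity step, i.e., the assertion that $x=1$ forces walks of every sufficiently large length between every pair of vertices of $\mathcal{C}$. This rests on the elementary number-theoretic fact that a set of positive integers with gcd $1$ generates all sufficiently large positive integers as nonnegative linear combinations (Sylvester--Frobenius), combined with concatenation of closed walks and a fixed connecting walk. Once this standard ingredient is invoked, the rest of the argument is bookkeeping about the SCC containing $(a,b)$ and $(b,a)$.
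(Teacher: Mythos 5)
Your proposal is correct and follows essentially the same route as the paper: dichotomize on the existence of the four $s \leadsto t$ walks, take $x$ to be the gcd of closed-walk lengths at $(a,b)$, rule out $x=1$ via the Sylvester--Frobenius fact (which is exactly how the paper argues, by noting two coprime lengths would yield all sufficiently large closed-walk lengths and hence flexibility), and bound $x \leq |\Sigma|^2$ by the shortest closed walk. The only cosmetic difference is that you package the argument as the standard ``period of an SCC'' fact, while the paper works directly with the set $U$ of closed-walk lengths and uses the reversal symmetry of $\M_{\DDD}$ to transfer the conclusion from $(a,b)$ to $(b,a)$.
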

\begin{proof}
Suppose that $\{a,b\} \in \DDD$ is not path-flexible. 
We assume that there are $s \leadsto t$  walks for all choices of   $s \in \{(a,b),(b,a)\}$ and  $t \in \{(a,b),(b,a)\}$. To prove this lemma, it suffices to show that there is an integer $2 \leq x \leq |\Sigma|^2$ such that for any positive integer $k$ that is not an integer multiple of $x$, there are no length-$k$ walks $(a,b) \leadsto (a,b)$ and $(b,a) \leadsto (b,a)$.

First of all, we claim that for any integer $K$ there is an integer $k \geq K$ such that there is no length-$k$ walk $(a,b) \leadsto (a,b)$. If this claim does not hold, then there is an integer $K$ such that there is a length-$k$ walk $(a,b) \leadsto (a,b)$ for each $k \geq K$.  Combining these walks with existing walks  $(a,b) \leadsto (b,a)$ and $(b,a) \leadsto (a,b)$, we infer that there exists an integer $K'$ such that for each integer $k \geq K'$, there exist length-$k$ walks  $(a,b) \leadsto (a,b)$,  $(a,b) \leadsto (b,a)$,  $(b,a) \leadsto (a,b)$, and  $(b,a) \leadsto (b,a)$ in  $\M_{\DDD}$, contradicting the assumption that $\{a,b\} \in \DDD$ is not path-flexible.

Let $U$ be the set of integers $k$ such that there is a length-$k$ walk $(a,b) \leadsto (a,b)$. Note that by taking reversal, the existence of a length-$k$ walk $(a,b) \leadsto (a,b)$ implies the existence of a length-$k$ walk $(b,a) \leadsto (b,a)$, and vice versa. 
 Our assumption on the existence of a walk $(a,b) \leadsto (a,b)$ implies  $U \neq \emptyset$.
We choose $x=\gcd(U)$ to be the greatest common divisor of $U$, so that  for any integer $k$ that is not an integer multiple of $x$,  there are no length-$k$ walks $(a,b) \leadsto (a,b)$  and $(b,a) \leadsto (b,a)$ in $\M_{\DDD}$.
We must have $x \geq 2$ because there cannot be two co-prime numbers in $U$, since otherwise there exists an integer $K$ such that $U$ includes all integers that are at least $K$, contradicting the claim proved above. Specifically, if the two co-prime numbers are $k_1$ and $k_2$, then we may set $K = g(k_1, k_2) + 1 = k_1 k_2 - k_1 - k_2 + 1$, where $g(k_1, k_2)$ is the Frobenius number of the set $\{k_1, k_2\}$~\cite{10.2307/2369536}.
We also have $x \leq  |\Sigma|^2$, since the smallest number in $U$ is at most the number of nodes in $\M_\DDD$, which is  upper bounded by $|\Sigma|^2$.
\end{proof}

For the special case of $|\Sigma| = 1$ and $\DDD \neq \emptyset$, we must have $a = b$ in \cref{lem-inflex}. Since there is no integer $x$ satisfying $2 \leq x \leq |\Sigma|^2$ when $|\Sigma| = 1$, 
\cref{lem-inflex} implies that if $\{a,a\}$ is not path-flexible, then  there is no walk $(a,a) \leadsto (a,a)$, where $\{a,a\}$ is the unique element in $\DDD$.

\subparagraph{Path-flexible  strongly connected components} Since each  $\{a,b\} \in \DDD$ corresponds to two nodes $(a,b)$ and $(b,a)$ in  $\M_{\DDD}$, we will consider a different notion of a strongly connected component. In \cref{def-scc}, we do not require the elements $a$, $b$, $c$, and $d$ to be distinct. For example, we may have $\{a,b\} = \{c,d\}$ or $a = b$.

\begin{definition}[Strongly connected components]\label{def-scc}
Let $\DDD$ be a set of size-$2$ multisets of elements in $\Sigma$. For each $\{a,b\} \in \DDD$ and  $\{c,d\} \in \DDD$, we write $\{a,b\} \sim \{c,d\}$ if  there is a walk  $s \leadsto t$ in $\M_{\DDD}$ for each choice of $s \in \{(a,b), (b,a)\}$ and  $t \in \{(c,d), (d,c)\}$.

Let $\DDD^\sim$ be the set of all $\{a,b\} \in \DDD$ such that $\{a,b\} \sim \{a,b\}$. Then we define the strongly connected components of $\DDD$ as the equivalence classes of $\sim$ over $\DDD^\sim$.
\end{definition}

By taking reversal, the existence of an  $(a,b) \leadsto (c,d)$ walk implies the existence of a $(d,c) \leadsto (b,a)$ walk. Therefore, if  there is a walk  $s \leadsto t$ in $\M_{\DDD}$ for each choice of $s \in \{(a,b), (b,a)\}$ and  $t \in \{(c,d), (d,c)\}$, then there is also a walk  $t \leadsto s$ in $\M_{\DDD}$ for each choice of $s \in \{(a,b), (b,a)\}$ and  $t \in \{(c,d), (d,c)\}$. Hence the relation  $\sim$ in \cref{def-scc} is symmetric over $\DDD$.
It is clear from the definition of $\sim$ in \cref{def-scc} that it is transitive over $\DDD$ and it is reflexive over $\DDD^\sim$, so $\sim$ is indeed an equivalence relation over $\DDD^\sim$.

For any strongly connected component $\DDD'$ of $\DDD$, it is clear that either  all $\{a,b\} \in \DDD'$ are path-flexible or  all $\{a,b\} \in \DDD'$ are not path-flexible. We say that a strongly connected component $\DDD'$ is path-flexible  if  all $\{a,b\} \in \DDD'$ are path-flexible. We define
$\flexibility(\DDD')$ as the minimum number $K$ such that  for each integer $k \geq K$ there is an $(a,b) \leadsto (c,d)$ walk of length $k$ for all choices of $a$, $b$, $c$, and $d$ such that $\{a,b\} \in \DDD'$ and  $\{c,d\} \in \DDD'$.
It is clear that such a number $K$ exists given that $\DDD'$ is a path-flexible strongly connected component.
 We define
\[
\flexSCC(\DDD) = \text{\parbox{0.5\textwidth}{the set of all subsets of $\DDD$ that are a path-flexible strongly connected component of $\DDD$.}}
\]

Clearly, elements in $\flexSCC(\DDD)$ are disjoint subsets of $\DDD$. It is possible that $\flexSCC(\DDD)$ is an empty set, and this happens when all nodes in the directed graph $\M_{\DDD}$ are not path-flexible.

\subparagraph{Restriction of a set of node configurations}
Given an $\LCL$ problem $\Pi=(\Delta,\Sigma,\VV,\EE)$, a subset $\SSS \subseteq \VV$   of node configurations, and a set $\DDD$ of size-$2$ multisets whose elements are in $\Sigma$, we define the restriction of $\SSS$ to $\DDD$ as follows.
\[
\SSS\upharpoonright_\DDD = \{ C \in \SSS \ | \ \text{all size-$2$ sub-multisets of $C$ are in $\DDD$}\}.
\]

\cref{lem-flex-scc} shows that if we label the two endpoints of a sufficiently long path using node configurations in $\SSS\upharpoonright_{\DDD^\ast}$, where $\DDD^\ast \in \flexSCC(\DDD_\SSS)$, then it is always possible to complete the labeling of the path using only node configurations in $\SSS$ in such a way that the entire labeling is correct.
Specifically, consider a path $P=(v_1, v_2, \ldots, v_{d+1})$ of length  $d \geq \flexibility({\DDD^\ast})$. Assume that the node configuration of $v_1$ is already fixed to be $C \in \SSS\upharpoonright_{\DDD^\ast}$ where the half-edge $(v_1, \{v_1, v_2\})$ is labeled by $\beta \in C$ and the node configuration of $v_{d+1}$ is already fixed to be $C' \in \SSS\upharpoonright_{\DDD^\ast}$ where the half-edge $(v_{d+1}, \{v_{d}, v_{d+1}\})$ is labeled by $\alpha' \in C'$. \cref{lem-flex-scc} shows that it is possible to complete the labeling of $P$ using only node configurations in $\SSS$, as we may label $v_i$ using the node configuration $C_i$ where the two half-edges $(v_i, \{v_{i-1}, v_i\})$ and $(v_i, \{v_{i}, v_{i+1}\})$ are labeled by $\alpha_i$ and $\beta_i$, for each  $2 \leq i \leq d$.

\begin{lemma}[Property of path-flexible strongly connected components]\label{lem-flex-scc}
Let $\SSS \subseteq \VV$ be a set of node configurations, and let $\DDD^\ast \in \flexSCC(\DDD_\SSS)$.
For any choices of $C \in \SSS\upharpoonright_{\DDD^\ast}$, $C' \in \SSS\upharpoonright_{\DDD^\ast}$,  size-2 sub-multisets $\{\alpha, \beta\} \subseteq C$, $\{\alpha', \beta'\} \subseteq C'$, and a number $d \geq \flexibility({\DDD^\ast})$, there exists a sequence
\[\alpha_1, C_1, \beta_1, \alpha_2, C_2, \beta_2, \ldots, \alpha_{d+1}, C_{d+1}, \beta_{d+1}\]
satisfying the following conditions.
\begin{itemize}
    \item First endpoint: $\alpha_1 = \alpha$, $\beta_1 = \beta$, and $C_1 = C$.
    \item Last endpoint: $\alpha_{d+1} = \alpha'$, $\beta_{d+1} = \beta'$, and $C_{d+1} = C'$.
    \item Node configurations: for $1 \leq i \leq d+1$, $\{\alpha_i, \beta_i\}$ is a size-2 sub-multiset of $C_i$, and $C_i \in \SSS$.
    \item Edge configurations: for $1 \leq i \leq d$, $\{\beta_{i}, \alpha_{i+1}\} \in \EE$.
\end{itemize}
\end{lemma}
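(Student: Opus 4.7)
The plan is to read the statement as a direct translation between walks in the automaton $\M_{\DDD_\SSS}$ and labelings of the path, and then invoke the definition of $\flexibility(\DDD^\ast)$. The key preliminary observation is that the hypothesis $C, C' \in \SSS\upharpoonright_{\DDD^\ast}$ guarantees that every size-2 sub-multiset of $C$ or $C'$ belongs to $\DDD^\ast$; in particular, $\{\alpha,\beta\} \in \DDD^\ast$ and $\{\alpha',\beta'\} \in \DDD^\ast$, so $(\alpha,\beta)$ and $(\alpha',\beta')$ are bona fide vertices of $\M_{\DDD_\SSS}$ that sit in the path-flexible strongly connected component corresponding to $\DDD^\ast$.

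Next, I would apply the definition of $\flexibility(\DDD^\ast)$ to the number $d \geq \flexibility(\DDD^\ast)$ and to the pair of endpoints $\{\alpha,\beta\}$ and $\{\alpha',\beta'\}$: this produces a length-$d$ walk
\[
(\alpha_1,\beta_1) \to (\alpha_2,\beta_2) \to \cdots \to (\alpha_{d+1},\beta_{d+1})
\]
in $\M_{\DDD_\SSS}$ with $(\alpha_1,\beta_1) = (\alpha,\beta)$ and $(\alpha_{d+1},\beta_{d+1}) = (\alpha',\beta')$. I would then promote each pair $(\alpha_i,\beta_i)$ to a node configuration $C_i$: since each $(\alpha_i,\beta_i) \in V(\M_{\DDD_\SSS})$, the multiset $\{\alpha_i,\beta_i\}$ belongs to $\DDD_\SSS$, and by the very definition of $\DDD_\SSS$ there is some $C_i \in \SSS$ of which $\{\alpha_i,\beta_i\}$ is a sub-multiset. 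For internal indices $2 \leq i \leq d$ I pick any such $C_i$; at the endpoints I force $C_1 := C$ and $C_{d+1} := C'$, which is consistent because $\{\alpha,\beta\} \subseteq C$ and $\{\alpha',\beta'\} \subseteq C'$ by hypothesis.

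Finally, I would verify the four bullet conditions one by one. The two endpoint conditions hold by construction. The node-configuration condition holds because each $C_i$ was chosen in $\SSS$ with $\{\alpha_i,\beta_i\}$ as a sub-multiset. The edge-configuration condition $\{\beta_i,\alpha_{i+1}\} \in \EE$ is precisely the defining criterion for the existence of the directed edge $(\alpha_i,\beta_i) \to (\alpha_{i+1},\beta_{i+1})$ in $\M_{\DDD_\SSS}$, which is guaranteed for every consecutive pair of vertices of the walk.

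There is essentially no hard step: the lemma is a dictionary translation. The only point that requires mild care is confirming that the walk supplied by $\flexibility(\DDD^\ast)$ really does what the sequence asks — namely, that its vertices automatically lie in $V(\M_{\DDD_\SSS})$ (so that each $\{\alpha_i,\beta_i\}$ admits a witness $C_i \in \SSS$) and that the walk's edges encode exactly the edge-constraint condition $\{\beta_i,\alpha_{i+1}\} \in \EE$. Both are immediate from the definitions of $V(\M_{\DDD_\SSS})$, $E(\M_{\DDD_\SSS})$, and $\DDD_\SSS$, so no further combinatorial work beyond unwinding definitions is needed.
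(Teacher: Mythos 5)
Your proposal is correct and follows essentially the same route as the paper's proof: extract a length-$d$ walk $(\alpha,\beta)\leadsto(\alpha',\beta')$ in $\M_{\DDD_\SSS}$ from the definition of $\flexibility(\DDD^\ast)$, read off the edge constraints from the automaton's edges, and lift each intermediate vertex $\{\alpha_i,\beta_i\}\in\DDD_\SSS$ to some $C_i\in\SSS$ by the definition of $\DDD_\SSS$. Your explicit preliminary check that $\{\alpha,\beta\},\{\alpha',\beta'\}\in\DDD^\ast$ (hence are vertices in the flexible component) is a detail the paper leaves implicit, but otherwise the arguments coincide.
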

\begin{proof}
By the path-flexibility of $\DDD^\ast$, there exists a length-$d$ walk $(\alpha,\beta) \leadsto (\alpha',\beta')$ in $\M_{\DDD_\SSS}$. We fix \[(\alpha_1, \beta_1) \rightarrow (\alpha_2, \beta_2) \rightarrow \cdots \rightarrow (\alpha_{d+1}, \beta_{d+1})\] to be any such walk. This implies that $\{\beta_{i}, \alpha_{i+1}\} \in \EE$ for each $1 \leq i \leq d$. Since $\{\alpha_i, \beta_i\}$ is a size-2 multiset of $\DDD_\SSS$, there exists a choice of $C_i \in \SSS$ for each $2 \leq i \leq d$ such that  $\{\alpha_i, \beta_i\}$ is a sub-multiset of $C_i$. 
\end{proof}

\subparagraph{Good sequences} Given an $\LCL$ problem $\Pi=(\Delta, \Sigma,\VV,\EE)$ on $\Delta$-regular trees, we say that a sequence
\[(\VV_1, \DDD_1, \VV_2, \DDD_2, \ldots, \VV_k)\]
is \emph{good} if it satisfies the following requirements.
\begin{itemize}
    \item $\VV_1 = \trim(\VV)$. That is, we start the sequence from the result of trimming the set $\VV$ of all node configurations in the given $\LCL$ problem $\Pi=(\Delta, \Sigma,\VV,\EE)$.
    \item  For each $1 \leq i \leq k-1$, $\DDD_i \in \flexSCC(\DDD_{\VV_{i}})$.  That is, $\DDD_i$ is a path-flexible strongly connected component of the automaton associated with the path-form of the $\LCL$ problem $(\Delta, \Sigma,\VV_i,\EE)$, which is $\Pi$ restricted to the set of node configurations $\VV_i$.
    \item  For each $2 \leq i \leq k$, $\VV_i = \trim(\VV_{i-1}\upharpoonright_{\DDD_{i-1}})$. That is, $\VV_i$ is the result of taking the restriction of the set of node configurations $\VV_{i-1}$ to $\DDD_{i-1}$ and then performing a trimming.
    \item $\VV_k \neq \emptyset$. That is, we require that the last set of node configurations is non-empty.
\end{itemize}

It is straightforward to see that $\VV_1 \supseteq \VV_2 \supseteq \cdots \supseteq \VV_k$ since $\VV_i = \trim(\VV_{i-1}\upharpoonright_{\DDD_{i-1}})$ is always a subset of $\VV_{i-1}$. Similarly, we also have $\DDD_1 \supseteq \DDD_2 \supseteq \cdots \supseteq \DDD_{k-1}$, as $\DDD_i \in \flexSCC(\DDD_{\VV_{i}})$ is a subset of $\DDD_{\VV_{i}}$ and $\DDD_{\VV_{i}}$ is a subset of $\DDD_{i-1}$ due to the definition $\VV_i = \trim(\VV_{i-1}\upharpoonright_{\DDD_{i-1}})$.

\subparagraph{Depth of an LCL problem}
We define the depth $d_\Pi$ of an $\LCL$ problem $\Pi=(\Delta, \Sigma,\VV,\EE)$ on $\Delta$-regular trees as follows.
If there is no good sequence, then we set $d_\Pi = 0$.
If there is a good sequence $(\VV_1, \DDD_1, \VV_2, \DDD_2, \ldots, \VV_k)$ for each positive integer $k$, then we set $d_\Pi = \infty$.
Otherwise, we set  $d_\Pi$ as  the largest integer $k$ such that there is a good sequence $(\VV_1, \DDD_1, \VV_2, \DDD_2, \ldots, \VV_k)$.
We  prove the following results.

\begin{theorem}[Characterization of complexity classes]\label{thm-unrooted-characterization}
Let $\Pi=(\Delta, \Sigma,\VV,\EE)$ be an $\LCL$ problem on $\Delta$-regular trees. We have the following.
\begin{itemize}
    \item If $d_\Pi = 0$, then $\Pi$ is unsolvable in the sense that there exists a tree of maximum degree $\Delta$ such that there is no correct solution of $\Pi$ on this tree.
    \item If $d_\Pi = k$ is a positive integer, then the optimal round complexity of $\Pi$ is $\Theta(n^{1/k})$.
    \item If $d_\Pi = \infty$, then $\Pi$ can be solved in $O(\log n)$ rounds.
\end{itemize}
\end{theorem}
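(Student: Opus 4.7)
The plan is to treat the three cases of \cref{thm-unrooted-characterization} separately, following the outline in \cref{sec:overview}. The case $d_\Pi = 0$ is immediate from the definitions: if no good sequence exists then $\trim(\VV) = \emptyset$, so by definition of $\trim$ each $C \in \VV$ has a finite witness $i_C$ for which $T_{i_C}^\ast$ admits no correct labeling with root configuration $C$ using configurations in $\VV$; the tree $T_{i^\ast}^\ast$ with $i^\ast = \max_{C \in \VV} i_C$ then has no correct labeling at all, witnessing unsolvability. For $d_\Pi = \infty$, the chain $\VV_1 \supseteq \VV_2 \supseteq \cdots$ of subsets of the finite set $\VV$ must stabilize once its length exceeds $|\VV|$, yielding a fixed point where $\VV_i = \trim(\VV_i\upharpoonright_{\DDD_i})$ and $\DDD_i \in \flexSCC(\DDD_{\VV_i})$. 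This fixed point is exactly the $O(\log n)$-round certificate of~\cite{balliu21rooted-trees,brandt2021local}: one computes a Miller--Reif rake-and-compress decomposition of depth $O(\log n)$ and labels it layer by layer using $\VV_i$ and $\DDD_i$, with \cref{lem-trimming} extending through the rake subtrees and \cref{lem-flex-scc} filling in the compress paths.

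The finite case $d_\Pi = k$ requires matching bounds. For the upper bound we compute Chang's variant~\cite{Chang20} of rake-and-compress with compress threshold $\ell \geq \max_j \flexibility(\DDD_j)$ in $O(n^{1/k})$ rounds, producing exactly $k$ rake layers. The tree is then labeled in reverse order of the decomposition: at each rake layer we invoke \cref{lem-trimming} to extend the already-fixed labeling through the subtrees using configurations in $\VV_j$, and at each compress layer we invoke \cref{lem-flex-scc} to realize a correct labeling of a path of length at least $\flexibility(\DDD_j)$ whose two endpoint labels are already fixed, using configurations in $\VV_j\upharpoonright_{\DDD_j}$. The good-sequence recurrences ensure that the local extensions compose into a globally correct solution.

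The lower bound is the main obstacle and is argued contrapositively: from any algorithm $\A$ solving $\Pi$ in $t = o(n^{1/k})$ rounds we construct a good sequence of length $k+1$, contradicting $d_\Pi = k$. The plan is to design a tree $G_\A$ by alternating, over $k+1$ levels, deeply nested complete subtrees $T_s^\ast$ (with $s$ chosen to exceed the trimming witnesses of $\Pi$) joined by compress paths of various lengths much larger than $t$, so that $G_\A$ naturally decomposes into $k+1$ rake layers and $k$ compress layers. Let $\VV_j^\A$ denote the set of node configurations that $\A$ may produce at nodes of the $j$-th rake layer or deeper over all valid identifier assignments, and $\DDD_j^\A$ the analogous set of size-$2$ sub-multisets arising inside the $j$-th compress paths. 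The $t$-hop locality of $\A$ combined with the nested structure of $G_\A$ forces the recurrences $\VV_{j+1}^\A = \trim(\VV_j^\A\upharpoonright_{\DDD_j^\A})$, while $\VV_{k+1}^\A \neq \emptyset$ follows from the mere existence of a valid output of $\A$ on the innermost subtree. The most delicate step is proving $\DDD_j^\A \in \flexSCC(\DDD_{\VV_j^\A})$: here \cref{lem-inflex} is applied in the contrapositive form that any non-path-flexible pair in $\DDD_j^\A$ would force the admissible compress-path lengths to be constrained modulo some $x \geq 2$ or to miss an $s \leadsto t$ walk, contradictions that we rule out by including in $G_\A$ compress paths of lengths covering all residues and all reachability patterns simultaneously. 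Together these ingredients yield the required good sequence of length $k+1$.
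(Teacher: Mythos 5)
Your proposal follows essentially the same route as the paper: the same trimming witness for unsolvability, the same fixed-point/stabilization argument for $d_\Pi=\infty$, the same rake-and-compress upper bound via \cref{lem-trimming} and \cref{lem-flex-scc}, and the same lower-bound construction of a $(k+1)$-level nest of complete trees joined by long paths, with \cref{lem-inflex} and paths realizing all residues used to certify path-flexibility. The only detail worth noting when you write this up is that the induction hypothesis should be a containment ($\VV_j^\A \subseteq \VV_j$, $\DDD_j^\A \subseteq \DDD_j$) rather than an equality, and the $\DDD_j^\A$-containment must be established not only for nodes interior to the compress paths but for every node bordering the next rake layer, which the paper handles by showing each such node lies on a path within the layer connecting two central nodes.
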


In \cref{thm-unrooted-characterization}, all the upper bounds hold in the $\CONGEST$ model, and all the lower bounds hold in the $\LOCAL$ model. For example, if $d_\Pi = 5$, then $\Pi$ can be solved in $O(n^{1/5})$ rounds in the $\CONGEST$ model, and there is a matching lower bound $\Omega(n^{1/5})$ in the $\LOCAL$ model.

We note that there are several natural definitions of unsolvability of an $\LCL$ w.r.t.~a given graph class~\cite{lcls_on_paths_and_cycles} that are different from the one in \cref{thm-unrooted-characterization}. 

\begin{theorem}[Complexity of the characterization]\label{thm-unrooted-poly-time}
There is a polynomial-time algorithm $\A$ that computes $d_\Pi$  for any given  $\LCL$ problem $\Pi=(\Delta, \Sigma,\VV,\EE)$ on $\Delta$-regular trees. If $d_\Pi = k$ is a positive integer, then  $\A$ also outputs a description of an  $O(n^{1/k})$-round algorithm for $\Pi$. If $d_\Pi = \infty$, then $\A$ also outputs a description of an  $O(\log n)$-round algorithm for $\Pi$.
\end{theorem}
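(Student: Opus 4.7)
The algorithm $\A$ has three components: a subroutine that evaluates $\trim(\cdot)$, a subroutine that evaluates $\flexSCC(\cdot)$, and a driver that searches for the longest good sequence (or detects that it is infinite). Once a witnessing good sequence or fixed point is in hand, the description of the distributed algorithm is obtained by plugging it into the constructive upper bound promised by the first two bullets of \cref{thm-unrooted-characterization}, so the core of the proof is showing that these three tasks together run in polynomial time.

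For $\trim(\SSS)$ with $\SSS \subseteq \VV$, \cref{lem-trimming} identifies $\trim(\SSS)$ as the greatest fixed point of the monotone operator that removes from a candidate $\TT \subseteq \SSS$ any $C \in \TT$ containing some label $\sigma$ for which no $C' \in \TT$ and $\sigma' \in C'$ satisfy $\{\sigma,\sigma'\} \in \EE$. Iteratively deleting violators starting from $\SSS$ stabilizes in at most $|\SSS|$ rounds, each of polynomial cost. For $\flexSCC(\DDD)$, I would explicitly build $\M_\DDD$ (which has $O(|\Sigma|^2)$ nodes), compute its strongly connected components, and discard components that either are not closed under the $(a,b)\mapsto(b,a)$ involution or fail path-flexibility. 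The aperiodicity condition characterizing path-flexibility, exploited in the proof of \cref{lem-inflex} (gcd of cycle lengths equal to $1$), is decidable in polynomial time by computing residues within each SCC via a BFS from an arbitrary base vertex and taking the gcd of all back-edge deviations.

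The search proceeds as a recursive exploration. Starting from $\VV_1 = \trim(\VV)$, at each current set $\VV'$ the driver enumerates all $\DDD \in \flexSCC(\DDD_{\VV'})$, recursively evaluates the depth achievable from $\VV'' = \trim(\VV'\upharpoonright_\DDD)$ when $\VV'' \neq \emptyset$, and returns one plus the maximum; if $\flexSCC(\DDD_{\VV'}) = \emptyset$ or every child is empty, it returns $1$. The driver reports $d_\Pi = \infty$ as soon as a branch yields $\VV'' = \VV'$, because the same choice of $\DDD$ can then be repeated indefinitely to generate an arbitrarily long good sequence; otherwise $\VV_1 \supsetneq \VV_2 \supsetneq \cdots$ decreases strictly and every branch terminates within $|\VV|$ levels. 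The essential efficiency claim is that at every level $i$ of the search tree the sets $\VV_i$ appearing across all branches are pairwise disjoint. This is shown by induction: distinct elements of $\flexSCC(\DDD_{\VV'})$ are disjoint by the remark preceding \cref{lem-flex-scc}, and for $\Delta \geq 2$ any $C \in \trim(\VV'\upharpoonright_\DDD) \cap \trim(\VV'\upharpoonright_{\DDD'})$ would have all its size-$2$ sub-multisets simultaneously in $\DDD$ and $\DDD'$, hence in $\DDD \cap \DDD' = \emptyset$, which is impossible; the case $\Delta = 1$ is trivial and dispatched separately. Consequently at each level the total number of branches is at most $|\VV|$, the depth before a fixed point is detected is at most $|\VV|$, and the total work is polynomial in $|\VV|$, $|\EE|$, $|\Sigma|$, and $\Delta$.

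To emit the algorithm description, $\A$ simply outputs the tuple $(\VV_1,\DDD_1,\ldots,\VV_k)$ when $d_\Pi = k$, or the fixed point $\VV^\star$ together with a chosen $\DDD^\star \in \flexSCC(\DDD_{\VV^\star})$ satisfying $\trim(\VV^\star\upharpoonright_{\DDD^\star}) = \VV^\star$ when $d_\Pi = \infty$, and pairs this data with the generic layered labeling procedure used in the constructive part of \cref{thm-unrooted-characterization}: compute a rake-and-compress decomposition and label rake layer $i$ using configurations from $\VV_i$ and the adjacent compress layer using a path labeling drawn from $\DDD_i$, where \cref{lem-trimming} guarantees extendability across rake steps and \cref{lem-flex-scc} guarantees extendability across compress steps once the neighboring rake layers are fixed. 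The main obstacle in executing this plan is the disjointness argument at each level of the search tree: without it a naive bound on the recursion is exponential, so a clean inductive formulation together with careful handling of the $\Delta = 1$ edge case is what ultimately forces the running time to be polynomial.
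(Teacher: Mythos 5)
Your proposal is correct and follows essentially the same route as the paper: polynomial subroutines for $\trim$ and $\flexSCC$, a branching search over the choices $\DDD_i \in \flexSCC(\DDD_{\VV_i})$ whose total work is controlled by the disjointness of the sets appearing at each recursion level, fixed-point detection to report $d_\Pi = \infty$, and the upper-bound construction to emit the distributed algorithm. The only (harmless) deviations are implementation-level: you compute $\trim(\SSS)$ as the greatest fixed point of a violator-removal operator on configurations (justified via \cref{lem-trimming} plus the easy converse extension argument), whereas the paper computes it bottom-up via the sets of labels completable in $T_i'$, and your gcd-of-residues test for path-flexibility replaces the paper's layered reachability computation; both are standard and polynomial.
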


The distributed algorithms returned by the polynomial-time algorithm $\A$ in \cref{thm-unrooted-poly-time} are also in the $\CONGEST$ model.

\subsection{Upper bounds}\label{sect-upper}

In this section, we prove the upper bound part of \cref{thm-unrooted-characterization}. If a good sequence $(\VV_1$, $\DDD_1$, $\VV_2$, $\DDD_2$, $\ldots$, $\VV_k)$ exists for some positive integer $k$, we show that the $\LCL$ problem $\Pi=(\Delta, \Sigma,\VV,\EE)$ can be solved in $O(n^{1/k})$ rounds. If a good sequence $(\VV_1$, $\DDD_1$, $\VV_2$, $\DDD_2$, $\ldots$, $\VV_k)$ exists for all positive integers $k$, then we show that $\Pi=(\Delta, \Sigma,\VV,\EE)$ can be solved in $O(\log n)$ rounds. All these algorithms do not require sending large messages and can be implemented in the $\CONGEST$ model.

\subparagraph{Rake-and-compress decompositions} 
Roughly speaking, a rake-and-compress process is a procedure that decomposes a tree by iteratively removing degree-$1$ nodes (rake) and removing degree-$2$ nodes (compress). There are several tree decompositions resulting from variants of a rake-and-compress process. Here we  use a variant of decomposition considered in~\cite{Chang20} that is parameterized by three positive integers $\gamma$, $\ell$, and $L$.
A $(\gamma, \ell, L)$ decomposition of a tree $G=(V,E)$ is a partition of the node set 
\[V = \VR{1} \cup \VC{1} \cup \VR{2} \cup \VC{2}   \cup \cdots \cup \VR{L}\] 
satisfying the following requirements.

\subparagraph{Requirements for $\VR{i}$}
For each connected component $S$  of the subgraph  of $G$ induced by $\VR{i}$, it is required that there is a root $z \in S$  meeting the following conditions.
\begin{itemize}
\item $z$ has at most one neighbor in $\VC{i} \cup \VR{i+1} \cup \cdots \cup  \VR{L}$.
\item All nodes in $S \setminus \{z\}$ have no neighbor in $\VC{i} \cup \VR{i+1} \cup \cdots \cup  \VR{L}$.
\item All nodes in $S \setminus \{z\}$ are within distance $\gamma-1$ to $z$.
\end{itemize}  

Intuitively, each connected component $S$ of the subgraph  of $G$ induced by $\VR{i}$ is a rooted tree of height at most $\gamma-1$. 
The root can have at most one neighbor residing in the higher layers of the decomposition. The remaining nodes in $S$ cannot have neighbors in the higher layers of the decomposition.
For the special case of $\gamma = 1$, the set $\VR{i}$ is an independent set.

\subparagraph{Requirements for $\VC{i}$}
For each connected component $S$  of the subgraph  of $G$ induced by $\VC{i}$,  $S$ is a path $(v_1, v_2, \ldots, v_s)$ of $s \in [\ell, 2\ell]$ nodes  meeting the following conditions.
\begin{itemize}
\item There exist two nodes $u$ and $w$ in $\VR{i+1} \cup \VC{i+1} \cup \cdots \cup  \VR{L}$  such that $u$ is adjacent to $v_1$ and $w$ is adjacent to $v_s$.
\item For each $1 \leq j  \leq s$, $v_j$ has no neighbor in $\bigl(\VR{i+1} \cup \VC{i+1} \cup \cdots \cup  \VR{L}\bigr) \setminus \{u,w\}$.
\end{itemize} 

Intuitively, each connected component $S$  of the subgraph  of $G$ induced by $\VC{i}$ is a path. Only the endpoints of the paths can have neighbors residing in the higher layers of the decomposition. The remaining nodes in $S$ cannot have neighbors in the higher layers of the decomposition.

\subparagraph{Existing algorithms for rake-and-compress decompositions}
For any $k = O(1)$ and $\ell = O(1)$, there is an $O(n^{1/k})$-round algorithm computing a $(\gamma, \ell, L)$ decomposition of a tree $G=(V,E)$ with $\gamma =O(n^{1/k})$ and $L = k$~\cite{Chang20}. For any  $\ell = O(1)$, there is an $O(\log n)$-round algorithm computing a $(\gamma, \ell, L)$ decomposition of a tree $G=(V,E)$ with $\gamma = 1$ and $L = O(\log n)$~\cite{CP19timeHierarchy}. These algorithms are deterministic  and can be implemented in the $\CONGEST$ model. We will employ these algorithms as subroutines to prove the upper bound part of  \cref{thm-unrooted-characterization}. 

\begin{lemma}[Solving $\Pi$ using rake-and-compress decompositions]\label{lem-rake-and-compress-use}
 Suppose we are given an $\LCL$ problem $\Pi=(\Delta, \Sigma,\VV,\EE)$ on $\Delta$-regular trees that admits a good sequence
\[(\VV_1, \DDD_1, \VV_2, \DDD_2, \ldots, \VV_k).\]
Suppose we are given a $(\gamma, \ell, L)$ decomposition of an $n$-node tree $G=(V,E)$ of maximum degree at most $\Delta$
\[V = \VR{1} \cup \VC{1} \cup \VR{2} \cup \VC{2}   \cup \cdots \cup  \VR{L}\] 
with $L = k$ and $\ell = \max\{1, \flexibility(\DDD_1)-1, \ldots, \flexibility(\DDD_{k-1})-1\}$. 
Then a correct solution of  $\Pi$ on $G$ can be computed in $O((\gamma+\ell)L)$ rounds in the $\CONGEST$ model.
\end{lemma}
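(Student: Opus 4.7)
The plan is to label $G$ one layer at a time, processing the $(\gamma,\ell,L)$-decomposition in reverse rake-and-compress order $\VR{L}, \VC{L-1}, \VR{L-1}, \ldots, \VC{1}, \VR{1}$. I will maintain the invariant that every already-labeled node in $\VR{j}$ carries a node configuration from $\VV_j$, and every already-labeled node in $\VC{j}$ carries a configuration from $\VV_j$, with the additional restriction that the two endpoint nodes of each $\VC{j}$-path carry configurations from the smaller set $\VV_j\upharpoonright_{\DDD_j}$. The inclusions $\VV_{j+1} = \trim(\VV_j\upharpoonright_{\DDD_j}) \subseteq \VV_j\upharpoonright_{\DDD_j} \subseteq \VV_j$ supplied by a good sequence are precisely what will make every extension step succeed: whenever we are about to process a rake layer $\VR{i}$, every previously committed half-edge label on the boundary lies in some configuration contained in $\VV_i$, and whenever we are about to process a compress layer $\VC{i}$, the analogous bound is $\VV_{i+1}$.

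For a rake layer $\VR{i}$, I treat each connected component $S$ rooted at $z$ independently. The root $z$ has at most one already-labeled higher-layer neighbor $u$; let $\sigma$ be the label of $u$ on the edge $\{u,z\}$, which by the invariant sits in some configuration in $\VV_i$. Applying \cref{lem-trimming} with $\SSS = \VV_{i-1}\upharpoonright_{\DDD_{i-1}}$ (using the convention $\VV_0\upharpoonright_{\DDD_0} := \VV$), whose trim is the nonempty set $\VV_i$, yields a $C_z \in \VV_i$ together with a label $\sigma_z \in C_z$ satisfying $\{\sigma,\sigma_z\} \in \EE$; I assign $C_z$ to $z$ with $\sigma_z$ on the side toward $u$. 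I then sweep outward by BFS inside $S$: whenever a parent $x$ has been labeled by $C_x \in \VV_i$, each unlabeled child $y$ in $S$ picks a configuration $C_y \in \VV_i$ matching the label already fixed on the $x$-side of the edge $\{x,y\}$, again by \cref{lem-trimming}. The leftover slots of each chosen $C_x$ are used for half-edges toward lower-layer neighbors and will be matched when those layers are processed. At the top layer $i=L$ there is no higher-layer neighbor, so $C_z$ is simply chosen arbitrarily from $\VV_L$, nonempty because the good sequence has length $L$.

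For a compress layer $\VC{i}$, I treat each path component $S = (v_1,\ldots,v_s)$ independently. The higher-layer neighbors $u$ of $v_1$ and $w$ of $v_s$ already carry configurations whose labels lie in $\VV_{i+1}$, so two applications of \cref{lem-trimming} with $\SSS = \VV_i\upharpoonright_{\DDD_i}$, whose trim is the nonempty set $\VV_{i+1}$, produce endpoint configurations $C_1,C_s \in \VV_{i+1} \subseteq \VV_i\upharpoonright_{\DDD_i}$ matching $u$ and $w$ on the outside half-edges of $v_1,v_s$. I then pick any remaining label of $C_1$ as the path-side label at $v_1$, and symmetrically for $v_s$, and invoke \cref{lem-flex-scc} with $\SSS = \VV_i$ and $\DDD^\ast = \DDD_i$; the choice of $\ell$ makes the path sufficiently long relative to $\flexibility(\DDD_i)$, so the lemma supplies valid interior configurations $C_2,\ldots,C_{s-1} \in \VV_i$ together with consistent edge configurations along the path. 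The remaining slots of each $C_j$ are again used for half-edges toward lower-layer neighbors.

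For the round complexity, independent components within a single layer can be processed in parallel, and every exchange carries only $O(\log|\Sigma|)$ bits encoding a configuration or a label, so the whole algorithm runs in $\CONGEST$. A rake component has diameter $O(\gamma)$ and is labeled by a single BFS sweep from the root in $O(\gamma)$ rounds, while a compress component is a path on at most $2\ell$ nodes and is labeled by having the endpoints exchange their configurations and locally compute the walk from \cref{lem-flex-scc} in $O(\ell)$ rounds. Summed over the $L$ rake layers and the $L-1$ compress layers this gives the claimed $O((\gamma+\ell)L)$ bound. The main subtlety I anticipate is bookkeeping at the interfaces: at each transition between consecutive layers one must carefully identify which of the three sets $\VV_i$, $\VV_i\upharpoonright_{\DDD_i}$, $\VV_{i+1}$ hosts the currently committed configuration, so that the correct instance of \cref{lem-trimming} can be invoked to extend the partial labeling.
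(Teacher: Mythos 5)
Your overall strategy is exactly the paper's: process the layers in the order $\VR{L}, \VC{L-1}, \ldots, \VR{1}$, use \cref{lem-trimming} to extend the labeling through rake components and \cref{lem-flex-scc} to fill in compress paths, while maintaining the invariant that layer-$i$ nodes receive node configurations from $\VV_i$. The rake step, the $\CONGEST$ implementation, and the $O((\gamma+\ell)L)$ accounting all match the paper's proof.

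There is, however, a quantitative slip in your compress step that makes one invocation fail in the boundary case. You first fix configurations $C_1, C_s \in \VV_{i+1} \subseteq \VV_i\upharpoonright_{\DDD_i}$ at the two path endpoints $v_1$ and $v_s$ via \cref{lem-trimming}, and only then invoke \cref{lem-flex-scc} on the subpath from $v_1$ to $v_s$. That subpath has length $d = s-1$, and since $s$ can be as small as $\ell = \flexibility(\DDD_i)-1$, you may have $d = \flexibility(\DDD_i)-2 < \flexibility(\DDD_i)$, so the hypothesis $d \geq \flexibility(\DDD^\ast)$ of \cref{lem-flex-scc} is not satisfied and the required length-$d$ walk in $\M_{\DDD_{\VV_i}}$ need not exist. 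The paper avoids this (and the extra trimming step at $v_1,v_s$) by applying \cref{lem-flex-scc} directly to the full path $(u, v_1, \ldots, v_s, w)$, taking $C$ and $C'$ to be the already-fixed configurations of $u$ and $w$ themselves; these lie in $\VV_{i+1} = \trim(\VV_i\upharpoonright_{\DDD_i}) \subseteq \VV_i\upharpoonright_{\DDD_i}$, so they are admissible endpoints, and the relevant length becomes $d = s+1 \geq \ell+1 \geq \flexibility(\DDD_i)$, which is exactly what the stated choice of $\ell$ is calibrated for. With that one-line correction your argument goes through.
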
 
\begin{proof}
We present an $O((\gamma+\ell)L)$-round $\CONGEST$ algorithm finding a correct solution of  $\Pi$ on $G$.  
The algorithm labels the half-edges surrounding the nodes of the graph in the order $\VR{L}, \VC{L-1}, \ldots, \VR{1}$. Our algorithm has the property that it only uses the node configurations in $\VV_i$ to label the half-edges surrounding the nodes in $\VR{i}$ and $\VC{i}$. 

Not all nodes $v$ in $G$ have $\deg(v) = \Delta$. 
In general, for each node $v$ in $G$, we say that the node configuration of $v$ is in $\VV_i$ if the  multiset of the $\deg(v)$ half-edge labels surrounding $v$ is a sub-multiset of some $C \in \VV_i$. 

\subparagraph{Labeling $\VR{i}$} By induction hypothesis, assume the algorithm has finished labeling the half-edges surrounding the nodes in $\VC{i} \cup \VR{i+1} \cup \cdots \cup  \VR{L}$ in such a way that their node configurations are in $\VV_i$, as we recall that $\VV_i \supseteq \VV_{i+1} \supseteq \cdots \supseteq \VV_k$.
The algorithm then labels each connected component $S$  of the subgraph  of $G$ induced by $\VR{i}$, in parallel and using $O(\gamma)$ rounds in the $\CONGEST$ model, as follows.

The set $S$ has the property that there is at most one node $z \in S$ that may have a neighbor in  $\VC{i} \cup \VR{i+1} \cup \cdots \cup  \VR{L}$, and the number of neighbors of  $z$ in $\VC{i} \cup \VR{i+1} \cup \cdots \cup  \VR{L}$ is at most one. We claim that it is always possible to complete the labeling of half-edges surrounding the nodes in $S$ using only node configurations in $\VV_i$.
To see that this is possible, it suffices to consider the following situation. Given that the existing half-edge labels surrounding a node $v$ form a node configuration in $\VV_i$, consider a neighbor $u$ of $v$, and we want to label the half-edges surrounding $u$ in such a way that the edge configuration of $e=\{u,v\}$ is in $\EE$ and the node configuration of $u$ is in  $\VV_i$. This is always doable due to \cref{lem-trimming}, as we recall   from the definition of $\VV_i$ that $\VV_i=\trim(\SSS)$ for some set $\SSS \subseteq \VV$. 
The round complexity of labeling $S$ is $O(\gamma)$ because $S$ is a tree rooted at $z$ of depth at most $\gamma -1$.

\subparagraph{Labeling $\VC{i}$} Similarly, by induction hypothesis, assume the algorithm has already finished labeling  the half-edges surrounding  the nodes in $\VR{i+1} \cup \VC{i+1} \cup \cdots \cup  \VR{L}$ in such a way that their node configurations are in $\VV_{i+1}$, as we recall that $\VV_{i+1} \supseteq \VV_{i+2} \supseteq \cdots \supseteq \VV_k$.
The algorithm then labels each connected component $S$  of the subgraph  of $G$ induced by $\VC{i}$, in parallel and using $O(\ell)$ rounds in the $\CONGEST$ model, as follows.

The set $S$ has the property that there are exactly two nodes $u$ and $w$ in $\VR{i+1} \cup \VC{i+1} \cup \cdots \cup  \VR{L}$ adjacent to $S$, and the subgraph induced by  $S \cup \{u,w\}$ is a path $(u, v_1, v_2, \ldots, v_s, w)$, with $s \in [\ell, 2\ell]$. Hence the length of this path is $s+1 \geq \ell+1 \geq \flexibility(\DDD_i)$ by our choice of $\ell$.  

Recall that $\DDD_i \in \flexSCC(\DDD_{\VV_i})$ and the node configurations of $u$ and $w$ are in $\VV_{i+1}  = \trim(\VV_{i}\upharpoonright_{\DDD_{i}}) \subseteq \VV_{i}\upharpoonright_{\DDD_{i}}$, so \cref{lem-flex-scc}  ensures that we can label the half-edges surrounding the nodes $v_1, v_2, \ldots, v_s$ using only node configurations in $\VV_i$ in such a way that the edge configurations of all edges in the path  $(u, v_1, v_2, \ldots, v_s, w)$ are in $\EE$. 
The round complexity of labeling $S$ is $O(\ell)$ because $S$ is a path of at most $2 \ell$ nodes.

\subparagraph{Summary} The number of rounds spent on labeling each part $\VR{i}$ is $O(\gamma)$, and the number of rounds spent on labeling each part $\VC{i}$ is $O(\ell)$, so the overall round complexity for solving $\Pi$ given a  $(\gamma, \ell, L)$ decomposition is $O((\gamma+\ell)L)$ rounds in the $\CONGEST$ model.  
\end{proof}

Combining \cref{lem-rake-and-compress-use} with existing algorithms for computing $(\gamma, \ell, L)$ decompositions, we obtain the following results.

\begin{lemma}[Upper bound for the case $d_\Pi = k$]\label{lem-upper-finite}
If $d_\Pi = k$ for some positive integer $k$, then $\Pi$ can be solved in $O(n^{1/k})$ rounds in the $\CONGEST$ model.
\end{lemma}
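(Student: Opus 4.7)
The plan is to combine the good sequence guaranteed by $d_\Pi = k$ with the existing rake-and-compress decomposition algorithm of~\cite{Chang20} and then invoke \cref{lem-rake-and-compress-use}. Since $d_\Pi = k$, by definition there exists a good sequence $(\VV_1, \DDD_1, \VV_2, \DDD_2, \ldots, \VV_k)$ for the problem $\Pi$. I will fix any such sequence and set the compress-length parameter
\[
\ell \;=\; \max\bigl\{1,\; \flexibility(\DDD_1)-1,\; \ldots,\; \flexibility(\DDD_{k-1})-1\bigr\}.
\]
This value is a constant depending only on $\Pi$: the sequence has finite length $k \leq |\VV|$, and each $\flexibility(\DDD_i)$ is a fixed positive integer determined by the path-flexible strongly connected component $\DDD_i$ of the finite automaton $\M_{\DDD_{\VV_i}}$, which itself has size at most $|\Sigma|^2$.

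Next, I invoke the result of~\cite{Chang20} cited in the paragraph on existing algorithms for rake-and-compress decompositions: for any constants $k$ and $\ell$, there is a deterministic $O(n^{1/k})$-round $\CONGEST$ algorithm producing a $(\gamma, \ell, L)$ decomposition of the input tree $G = (V, E)$ with $\gamma = O(n^{1/k})$ and $L = k$. Applying this with our chosen $\ell$ gives such a decomposition in $O(n^{1/k})$ rounds.

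Finally, I feed this decomposition and the good sequence into \cref{lem-rake-and-compress-use}, which by construction satisfies all of its hypotheses (the decomposition has the required $L = k$ layers, and $\ell$ is chosen to be at least $\flexibility(\DDD_i) - 1$ for every $i \in \{1, \ldots, k-1\}$). That lemma then outputs a correct solution of $\Pi$ in $O((\gamma + \ell) L) = O((n^{1/k} + 1) \cdot k) = O(n^{1/k})$ rounds in the $\CONGEST$ model, where the last bound absorbs the constants $k$ and $\ell$ into the asymptotic notation. Summing the $O(n^{1/k})$ rounds spent to build the decomposition and the $O(n^{1/k})$ rounds spent to label it gives the desired bound.

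There is no real obstacle here; the lemma is essentially a bookkeeping corollary of \cref{lem-rake-and-compress-use} combined with the black-box decomposition algorithm of~\cite{Chang20}. The only point requiring minor care is verifying that $\ell$ is indeed a constant independent of $n$, which follows immediately from the fact that a good sequence is a sequence over finite sets of node configurations of the fixed problem $\Pi$, and that $\flexibility$ is defined on finitely many subsets of the finite set $\DDD_{\VV_1}$.
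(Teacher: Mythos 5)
Your proposal is correct and follows exactly the paper's own argument: fix a good sequence, compute a $(\gamma,\ell,L)$ decomposition with $\gamma = O(n^{1/k})$, $\ell = O(1)$, $L = k$ via~\cite{Chang20}, and conclude with \cref{lem-rake-and-compress-use}, noting that $k$ and $\ell$ depend only on $\Pi$. The extra care you take in verifying that $\ell$ is a constant independent of $n$ is a welcome elaboration of a point the paper states without detail.
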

\begin{proof}
In this case, a good sequence $(\VV_1$, $\DDD_1$, $\VV_2$, $\DDD_2$, $\ldots$, $\VV_k)$ exists. 
As a  $(\gamma, \ell, L)$ decomposition with $\gamma = O(n^{1/k})$, $\ell= O(1)$, and $L = k$  can be computed in $O(n^{1/k})$ rounds~\cite{Chang20}, $\Pi$ can be solved in $O(n^{1/k}) + O((\gamma+\ell)L) = O(n^{1/k})$ rounds  using the algorithm of \cref{lem-rake-and-compress-use}. Here both $k$ and $\ell$ are $O(1)$, as they are independent of the number of nodes~$n$.
\end{proof}

\begin{lemma}[Upper bound for the case $d_\Pi = \infty$]\label{lem-upper-infinite}
If $d_\Pi = \infty$, then $\Pi$ can be solved in $O(\log n)$ rounds in the $\CONGEST$ model.
\end{lemma}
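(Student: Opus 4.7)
The plan is to exploit the fact that $\VV$ is finite, so the monotone chain $\VV_1 \supseteq \VV_2 \supseteq \cdots$ appearing in any good sequence must eventually stabilise, producing a fixed point that lets us manufacture good sequences of arbitrary length from a bounded pool of ingredients.

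First, I would extract the fixed point. By hypothesis, a good sequence $(\VV_1, \DDD_1, \ldots, \VV_k)$ exists for every $k$; pick one with $k$ larger than $|\VV|+1$. Along such a sequence the weakly decreasing chain $\VV_1 \supseteq \cdots \supseteq \VV_k$ cannot be strictly decreasing, so two adjacent entries must coincide: $\VV_i = \VV_{i+1}$ for some index $i$. Setting $\VV^\ast \coloneqq \VV_i$ and $\DDD^\ast \coloneqq \DDD_i$, the good-sequence rules then yield the fixed-point identities $\VV^\ast = \trim(\VV^\ast\upharpoonright_{\DDD^\ast})$, $\DDD^\ast \in \flexSCC(\DDD_{\VV^\ast})$, and $\VV^\ast \supseteq \VV_k \neq \emptyset$.

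Second, for any target length $K$ I would build a good sequence of length $K$ by taking the prefix $(\VV_1, \DDD_1, \ldots, \VV_i)$ and then appending the pair $(\DDD^\ast, \VV^\ast)$ repeatedly until $\VV_K = \VV^\ast$. Every defining condition is preserved, because each transition past index $i$ is exactly the fixed-point identity above. Crucially, only finitely many distinct sets $\DDD_1, \ldots, \DDD_i, \DDD^\ast$ ever appear as some $\DDD_j$ across this entire family, so the quantity
\[
\ell^\ast \coloneqq \max\bigl\{1,\ \flexibility(\DDD_1)-1,\ \ldots,\ \flexibility(\DDD_i)-1,\ \flexibility(\DDD^\ast)-1\bigr\}
\]
is a constant depending only on $\Pi$, not on $K$.

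Third, given an $n$-node input, I would invoke the $\CONGEST$ algorithm of~\cite{CP19timeHierarchy} to compute a $(\gamma, \ell, L)$ decomposition with $\gamma = 1$, $\ell = \ell^\ast$, and $L = O(\log n)$ in $O(\log n)$ rounds, and pair it with a good sequence of length $L$ obtained via the fixed-point construction. Feeding both into \cref{lem-rake-and-compress-use} then produces a correct solution of $\Pi$ in $O\bigl((\gamma + \ell) L\bigr) = O(\log n)$ further $\CONGEST$ rounds, giving the desired $O(\log n)$ total. The main obstacle is the first step: formalising the stabilisation argument that extracts $(\VV^\ast, \DDD^\ast)$ from arbitrarily long good sequences and verifying that repeating the fixed point indefinitely satisfies every clause of the good-sequence definition. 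Once that is in hand, keeping $\ell$ bounded while $L$ scales with $n$ is immediate, and the remaining algorithmic work is a direct appeal to \cref{lem-rake-and-compress-use} and the rake-and-compress decomposition algorithm of~\cite{CP19timeHierarchy}.
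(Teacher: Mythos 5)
Your proposal is correct and follows essentially the same route as the paper: obtain an arbitrarily long good sequence, note that $\ell$ depends only on $\Pi$ and not on $n$, compute a $(1,\ell,O(\log n))$ decomposition via~\cite{CP19timeHierarchy}, and apply \cref{lem-rake-and-compress-use}. Your explicit fixed-point extraction is a valid (and slightly more detailed) way to justify that a length-$L$ good sequence with bounded $\ell$ exists; the paper simply invokes the hypothesis directly for $k=L$ and defers the stabilisation argument to \cref{sect-time}.
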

\begin{proof}
In this case, a good sequence $(\VV_1$, $\DDD_1$, $\VV_2$, $\DDD_2$, $\ldots$, $\VV_k)$ exists for all positive integers $k$. As a  $(\gamma, \ell, L)$ decomposition with $\gamma = 1$, $\ell= O(1)$, and $L = O(\log n)$ can be computed in $O(\log n)$ rounds~\cite{CP19timeHierarchy}, by choosing a good sequence $(\VV_1$, $\DDD_1$, $\VV_2$, $\DDD_2$, $\ldots$, $\VV_k)$ with $k = L$, $\Pi$ can be solved in  $O(\log n)+ O((\gamma+\ell)L) = O(\log n)$ rounds using the algorithm of \cref{lem-rake-and-compress-use}. Similarly, here $\ell = O(1)$, as it is independent of the number of nodes $n$.
\end{proof}

\subsection{Lower bounds}\label{sect-lower}

In this section, we prove the lower bound part of \cref{thm-unrooted-characterization}. In our lower bound proofs, we pick $\gamma$ to be the smallest integer satisfying the following requirements. For each subset $\SSS \subseteq \VV$ and each $C \in \SSS \setminus \trim(\SSS)$, there exists no correct labeling of $T_\gamma^\ast$ where the node configuration of the root $r$ is $C$ and the node configurations of the remaining degree-$\Delta$ nodes are in $\SSS$. Such a number $\gamma$ exists due to the definition of $\trim$.

\begin{lemma}[Unsolvability for the case $d_\Pi = 0$]\label{lem-lower-0}
If $d_\Pi = 0$, then $\Pi$ is unsolvable in the sense that there exists a tree $G$ of maximum degree $\Delta$ such that there is no correct solution of $\Pi$ on $G$.
\end{lemma}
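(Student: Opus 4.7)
The plan is to observe that $d_\Pi = 0$ is equivalent to $\trim(\VV) = \emptyset$, and then exhibit $T_\gamma^\ast$ itself as a tree that admits no correct solution.

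First I would unpack what $d_\Pi = 0$ means. Consider the length-$1$ sequence $(\VV_1)$ with $\VV_1 = \trim(\VV)$. This is automatically a good sequence whenever $\trim(\VV) \neq \emptyset$: the first condition defining a good sequence is satisfied by construction, the conditions indexed by $1 \leq i \leq k-1$ and $2 \leq i \leq k$ are vacuous for $k=1$, and the final non-emptiness condition $\VV_k \neq \emptyset$ becomes $\trim(\VV) \neq \emptyset$. Hence if no good sequence exists at all, we must have $\trim(\VV) = \emptyset$.

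Next I would invoke the parameter $\gamma$ fixed at the start of \cref{sect-lower}, instantiated with $\SSS = \VV$. Since $\trim(\VV) = \emptyset$, every $C \in \VV$ lies in $\SSS \setminus \trim(\SSS) = \VV$. By the defining property of $\gamma$, for each such $C$ there is no correct labeling of $T_\gamma^\ast$ in which the root receives the node configuration $C$ and the remaining degree-$\Delta$ nodes receive configurations in $\VV$.

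Finally I would set $G = T_\gamma^\ast$ and derive a contradiction from the existence of any correct solution on $G$. In any correct solution of $\Pi$ on $G$, the root (being of degree $\Delta$) must have some node configuration $C \in \VV$, and every other degree-$\Delta$ node must also have configuration in $\VV$, simply by \cref{def-correctness-unrooted}. But this is precisely the kind of labeling ruled out in the previous paragraph, a contradiction. Hence no correct solution of $\Pi$ exists on $G = T_\gamma^\ast$, which is a tree of maximum degree $\Delta$, proving unsolvability. There is no real obstacle here: the entire argument is a direct unwinding of the definitions of good sequence, $\trim$, and $\gamma$, with the only nontrivial point being the observation that the length-$1$ sequence witnesses $d_\Pi \geq 1$ as soon as $\trim(\VV) \neq \emptyset$.
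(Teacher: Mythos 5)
Your proposal is correct and follows exactly the paper's argument: the paper likewise observes that $d_\Pi = 0$ forces $\trim(\VV) = \emptyset$ and then takes $G = T_\gamma^\ast$, whose unsolvability follows from the defining property of $\gamma$. You have merely spelled out the two definitional unwindings (the length-$1$ good sequence and the role of $\gamma$) that the paper leaves implicit.
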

\begin{proof}
We take $G = T_\gamma^\ast$. Since $d_\Pi = 0$, we have $\trim(\VV) = \emptyset$. Our choice of $\gamma$ implies that there is no correct solution of $\Pi$ on $G = T_\gamma^\ast$.
\end{proof}

For the rest of this section, we focus on the case that $d_\Pi = k$ is a positive integer. We will prove that solving $\Pi$ requires $\Omega(n^{1/k})$ rounds in the $\LOCAL$ model. The exact choice of   $s = \Theta(t)$ in  \cref{def-lb-graphs}  is to be determined later. In our lower bound proof, we will assume that there exists an algorithm $\mathcal{A}$ violating the lower bound the parameter, and then we will derive a contradiction. As we will later see, the parameter $t$  in \cref{def-lb-graphs}  will corresponds to the time complexity of $\mathcal{A}$.

\begin{definition}[Lower bound graphs]\label{def-lb-graphs}  Let $t$ be any positive integer and choose $s = \Theta(t)$  to be a sufficiently large integer. 

\begin{itemize}
    \item  $G_{\sR,1}$ is the rooted tree $T_\gamma$, and $G_{\sR,1}^\ast$ is the rooted tree $T_\gamma^\ast$. All nodes in $G_{\sR,1}$ and $G_{\sR,1}^\ast$ are said to be in layer $(\sR,1)$.
    \item  For each integer $i \geq 1$,  $G_{\sC,i}$ is the result of the following construction. Start with an $s$-node path $(v_1, v_2, \ldots, v_s)$ and let $v_1$ be the root. For each $1 \leq i < s$, append $\Delta-2$ copies of $G_{\sR,i}$  to $v_i$. For $i = s$, append $\Delta-1$ copies of  $G_{\sR,i}$ to $v_s$. The nodes $v_1, v_2, \ldots, v_s$ are said to be in layer $(\sC,i)$.
    \item For each integer $i \geq 2$,  $G_{\sR,i}$  is the result of the following construction. Start with a rooted tree $T_\gamma$.  Append $\Delta-1$ copies of $G_{\sC,i-1}$ to each leaf in $T_\gamma$.
    All nodes in  $T_\gamma$ are said to be in layer $(\sR,i)$.    The rooted tree  $G_{\sR,i}^\ast$ is defined analogously by replacing $T_\gamma$ with $T_\gamma^\ast$ in the construction.
\end{itemize}
\end{definition}

Although the trees $G_{\sR,i}$, $G_{\sR,i}^\ast$, and $G_{\sC,i}$  are rooted in their definitions, we may also treat them as unrooted trees. Throughout our lower bound proof in \cref{sect-lower}, we fix our main lower bound graph $G=(V,E)$ to be the tree $G_{\sR,k+1}^\ast$, see \cref{fig:lb-graph} for an example.

\begin{figure}
	\centering
	\includegraphics[width=\textwidth]{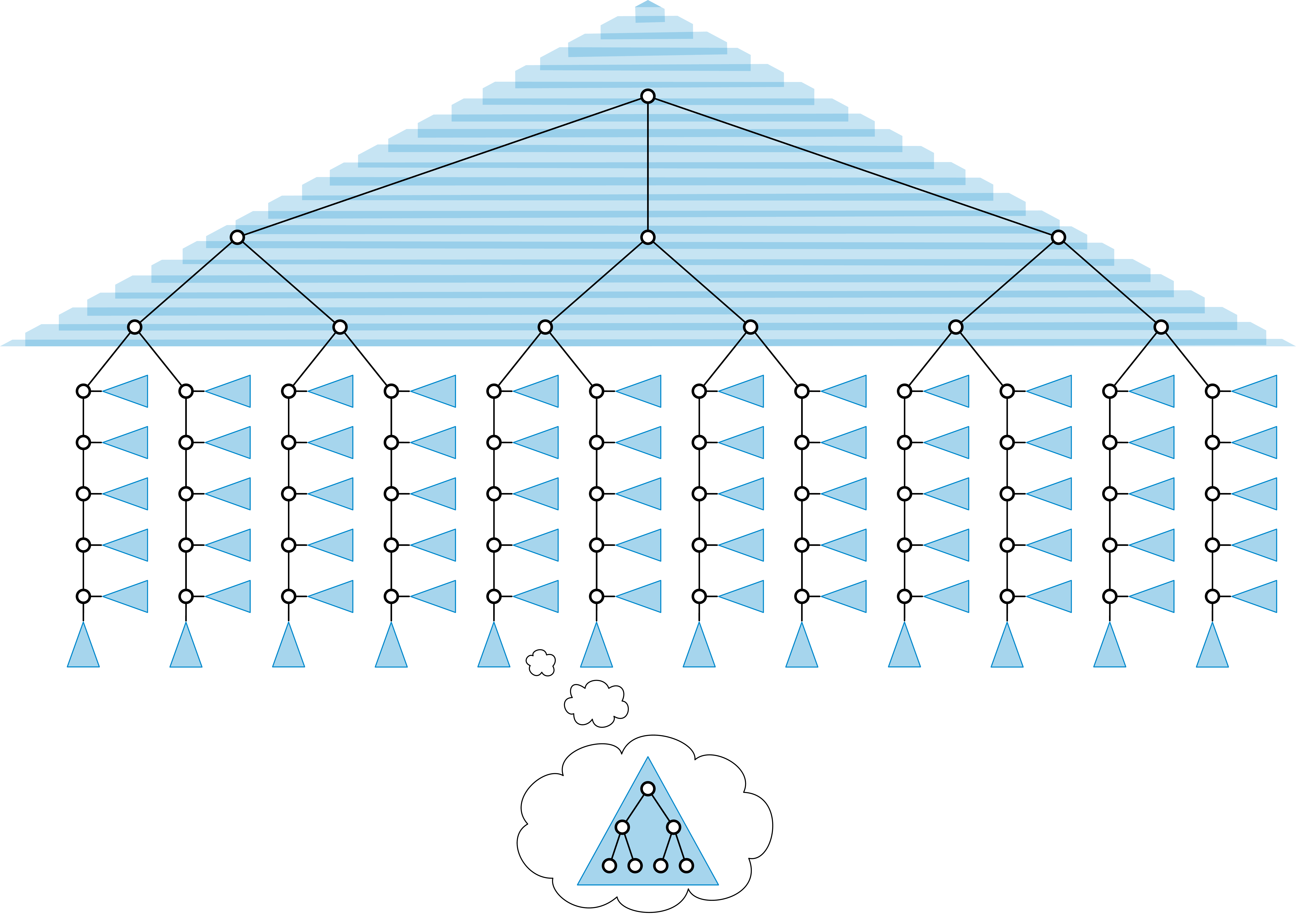}
	\caption{The graph $G_{\sR,2}^\ast$ with $\Delta = 3$, $s = 5$, and $\gamma = 2$.} 
	\label{fig:lb-graph}
\end{figure}

\begin{definition}[Main lower bound graph]\label{def-lb-graph-main}
Define $G=(V,E)$ as the tree $G_{\sR,k+1}^\ast$.
\end{definition}  

In the subsequent discussion, we focus on the tree $G = G_{\sR,k+1}^\ast$. It is clear from its construction that $G$ has $n = O(t^k)$ nodes, if we treat $\gamma$ as a constant independent of $t$. Indeed, $\gamma$ only depends on the underlying $\LCL$ problem $\Pi=(\Delta, \Sigma, \VV, \EE)$. More generally, the number of nodes in $G_{\sR,i}$ or $G_{\sR,i}^\ast$ is $O(t^{i-1})$, and the number of nodes in $G_{\sC,i}$  is $O(t^{i})$.
To show that $\Pi$ requires $\Omega(n^{1/k})$ rounds to solve, it suffices to show that $\Pi$ requires $t$ rounds to solve on $G$, for any positive integer $t$.

The nodes in $G = G_{\sR,k+1}^\ast$ are partitioned into layers $(\sR,1), (\sC,1), (\sR,2), (\sC,2), \ldots, (\sR,k+1)$ according to the rules in the above recursive construction. In the subsequent discussion, we order the layers by $(\sR,1) \prec (\sC,1)  \prec (\sR,2)  \prec (\sC,2)  \prec \cdots  \prec (\sR,k+1)$. For example, when we say layer $(\sR,i)$ or higher, we mean the set of all layers $(\sR,i), (\sC,i), \ldots, (\sR,k+1)$.

Intuitively, layers $(\sR,i)$ and $(\sC,i)$  resemble the parts $\VR{i}$ and $\VC{i}$ in a rake-and-compress decomposition. Except for some leaf nodes in layer $(\sR,1)$, all nodes in the graph have degree $\Delta$. Each connected component of the subgraph of $G$ induced by layer $(\sC,i)$ nodes is a path of $s$ nodes. 
Each connected component of the subgraph of $G$ induced by layer $(\sR,i)$ nodes is a  rooted tree $T_\gamma$ (if $1 \leq i \leq k$) or a rooted tree $T_\gamma^\ast$ (if $i = k+1$). We further classify the nodes in layer $(\sC,i)$ as follows. 

\begin{definition}[Classification of nodes in layer $(\sC,i)$]\label{def-layer-C}
The nodes in the $s$-node path  $(v_1, v_2, \ldots, v_s)$ in the construction of $G_{\sC,i}$ are classified as follows.
\begin{itemize}
    \item We say that $v_j$ is a front node if $1 \leq j \leq t$.
    \item We say that $v_j$ is a central node if $t+1 \leq j \leq s-t$.
    \item We say that $v_j$ is a rear node if $s-t+1 \leq j \leq s$.
\end{itemize}
\end{definition}

Based on \cref{def-layer-C}, we define the following subsets of nodes in $G = G_{\sR,k+1}^\ast$. We assume that $s = \Theta(t)$ is chosen to be sufficiently large so that central nodes exist.

\begin{definition}[Subsets of nodes in $G$]\label{def-subsets}
We define the following subsets of nodes in $G = G_{\sR,k+1}^\ast$.
\begin{itemize}
   \item Define $S_{\sR,1}$ as the set of nodes $v$ in $G$ such that the radius-$\gamma$ neighborhood of $v$ is isomorphic to $T_\gamma^\ast$.
    \item For $2 \leq i \leq k+1$, define $S_{\sR,i}$ as the set of nodes $v$ in $G$ such that the radius-$\gamma$ neighborhood of $v$ is isomorphic to $T_\gamma^\ast$
    and contains only nodes in $S_{\sC,i-1}$.
    \item For $1 \leq i \leq k$, define $S_{\sC,i}$ as the set of nodes $v$ in $G$  that are in layer $(\sR,i+1)$ or above or are central or front nodes in layer $(\sC,i)$.
\end{itemize}
\end{definition}

We prove some basic properties of the sets in \cref{def-subsets}.

\begin{lemma}[Subset containment]\label{lem-containment}
We have $S_{\sR,1} \supseteq S_{\sC,1} \supseteq  \cdots \supseteq S_{\sR,k+1} \neq \emptyset$.
\end{lemma}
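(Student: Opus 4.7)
The plan is to establish the chain $S_{\sR,1} \supseteq S_{\sC,1} \supseteq \cdots \supseteq S_{\sR,k+1}$ by separating it into two kinds of inclusions, and then handle the non-emptiness of $S_{\sR,k+1}$ by pointing at a specific node.

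First I would dispatch the inclusions of the form $S_{\sR,i+1} \subseteq S_{\sC,i}$. These are immediate from the definition: if $v \in S_{\sR,i+1}$, then the radius-$\gamma$ neighborhood of $v$ is by definition contained in $S_{\sC,i}$, and since $v$ belongs to its own neighborhood we have $v \in S_{\sC,i}$. No distance computation is needed here.

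The substantive inclusions are those of the form $S_{\sC,i} \subseteq S_{\sR,i}$. Given $v \in S_{\sC,i}$, I would verify the two requirements defining $S_{\sR,i}$: (a) that the radius-$\gamma$ neighborhood of $v$ is isomorphic to $T_\gamma^\ast$, and, when $i \geq 2$, (b) that every node in that neighborhood lies in $S_{\sC,i-1}$. For (a), the key structural fact is that the only degree-one nodes of $G$ are leaves of the $T_\gamma$ copies in layer $(\sR,1)$. A short case analysis---over whether $v$ lies in layer $(\sR,i+1)$ or above, or is a front/central node of $(\sC,i)$---shows that $v$ is at distance strictly greater than $\gamma$ from every such leaf (the worst case is $v$ being a front/central node of $(\sC,1)$, where the nearest $(\sR,1)$ leaf sits in an attached $T_\gamma = G_{\sR,1}$ at distance exactly $\gamma + 1$). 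Hence every node at distance at most $\gamma - 1$ from $v$ has full degree $\Delta$ in $G$, and since $G$ is a tree, the radius-$\gamma$ neighborhood is forced to be the unique $\Delta$-regular rooted tree of radius $\gamma$ centered at $v$, namely $T_\gamma^\ast$. For (b), I would trace the recursive construction to observe that reaching any node of layer $(\sC,i-1)$ or below from $v$ requires passing through an entire $T_\gamma$ subtree of a copy of $G_{\sR,i}$, which yields a distance strictly greater than $\gamma$; so the whole neighborhood lies in layers $(\sR,i)$ or above, which is contained in $S_{\sC,i-1}$.

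For non-emptiness of $S_{\sR,k+1}$ I would exhibit the root $r$ of $G = G_{\sR,k+1}^\ast$: by construction the radius-$\gamma$ neighborhood of $r$ is exactly the top copy of $T_\gamma^\ast$, all of whose nodes lie in layer $(\sR,k+1)$ and hence in $S_{\sC,k}$; so $r \in S_{\sR,k+1}$. The main obstacle is the detailed distance bookkeeping inside (a) and (b): I must be careful that boundary effects---a front node $v_j$ of $(\sC,i)$ with small $j$ whose neighborhood leaks upward through the parent of $v_1$ into $(\sR,i+1)$, or the rear ends of paths---never place a degree-one node or a $(\sC,i-1)$ node inside the radius-$\gamma$ ball. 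This is exactly what the parameter choice $s = \Theta(t)$ with $s$ sufficiently large (and the tacit assumption that $t$ is at least $\gamma$) is designed to rule out; once these inequalities are in hand, each subcase reduces to a routine distance computation on the recursive tree.
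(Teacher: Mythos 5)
Your proof is correct and follows essentially the same route as the paper's: the inclusions $S_{\sR,i+1}\subseteq S_{\sC,i}$ are read off the definition, the inclusions $S_{\sC,i}\subseteq S_{\sR,i}$ come from the structural facts that the radius-$\gamma$ ball of any node in layer $(\sC,i)$ or above is isomorphic to $T_\gamma^\ast$ and stays within layer $(\sR,i)$ or above (hence within $S_{\sC,i-1}$), and non-emptiness is witnessed by the root of $G_{\sR,k+1}^\ast$; you merely make explicit the distance bookkeeping that the paper asserts in one sentence. One small remark: no assumption $t\geq\gamma$ is actually needed, since a front node's ball leaking upward into layer $(\sR,i+1)$ or into sibling $(\sC,i)$ paths is harmless (those nodes still lie in layer $(\sR,i)$ or above and have degree $\Delta$), and the nearest degree-$1$ or layer-$(\sC,i-1)$ node is always at distance at least $\gamma+1$ by the recursive construction alone.
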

\begin{proof}
We have $S_{\sC,i} \supseteq S_{\sR,i+1}$ since it follows from the definition of $S_{\sR,i+1}$ that $v \in S_{\sC,i}$ is a necessary condition for $v \in S_{\sR,i+1}$.
The claim that $S_{\sR,i} \supseteq S_{\sC,i}$ follows from the fact that each $v \in S_{\sC,i}$ is in layer $(\sC,i)$ or above: The radius-$\gamma$ neighborhood of any such node $v$ is  isomorphic to $T_\gamma^\ast$ and contains only nodes in layer $(\sR,i)$ or above, and we know that all nodes in layer $(\sR,i)$ or above are in $S_{\sC,i-1}$. Hence $v \in S_{\sC,i}$ implies $v \in S_{\sR,i}$. 

To see that $S_{\sR,k+1} \neq \emptyset$, consider the root $r$ of  $G = G_{\sR,k+1}^\ast$. 
The radius-$\gamma$ neighborhood of $r$ is  isomorphic to $T_\gamma^\ast$ and contains only nodes in layer $(\sR,k+1)$. We know that all nodes in layer $(\sR,k+1)$ are in $S_{\sC,k}$, so $r \in S_{\sR,k+1}$.
\end{proof}

\begin{lemma}[Property of $S_{\sC,i}$]\label{lem-layer-C-property}
For each node $v \in S_{\sC,i}$, one of the following holds.
\begin{itemize}
    \item $v$ is a central node in layer $(\sC,i)$.
    \item For each neighbor $u$ of $v$ such that $u \in S_{\sC,i}$, there exists a path $P = (v, u, \ldots, w)$ such that $w$ is a central node in layer $(\sC,i)$ and all nodes in $P$ are in $S_{\sC,i}$.
\end{itemize}
\end{lemma}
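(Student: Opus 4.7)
The plan is to reduce the lemma to two structural facts about $G[S_{\sC,i}]$, the subgraph of $G$ induced by $S_{\sC,i}$: (a) $G[S_{\sC,i}]$ is connected and therefore a subtree of $G$; and (b) every leaf of $G[S_{\sC,i}]$ is a central node of some layer-$(\sC,i)$ path. Granting (a) and (b), the lemma follows by a short component argument. Since $v$ is in $S_{\sC,i}$ and is not central in $(\sC,i)$, by (b) it is not a leaf of $G[S_{\sC,i}]$, so removing $v$ from the tree $G[S_{\sC,i}]$ yields a forest whose components correspond to the $S_{\sC,i}$-neighbors of $v$. For each such neighbor $u$, the component $T_u$ containing $u$ is a non-empty subtree; either $T_u = \{u\}$ and then $u$ itself is a leaf of $G[S_{\sC,i}]$ and hence central, or $T_u$ has at least two leaves, at least one of which is distinct from $u$ and is therefore a leaf of $G[S_{\sC,i}]$, i.e., a central node $w$. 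In either case, the unique $u$-to-$w$ path in $T_u$ lies entirely in $S_{\sC,i}$, and prepending $v$ yields the desired $(v, u, \ldots, w)$.

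For (a), I plan to read off connectedness directly from \cref{def-lb-graphs}. The nodes lying strictly above layer $(\sC,i)$ form the layers $(\sR,i+1), (\sC,i+1), \ldots, (\sR,k+1)$; in the recursive construction of $G = G_{\sR,k+1}^\ast$, excluding the layer-$(\sR,i)$-or-lower nodes merely strips off the $G_{\sR,i}$-copies hanging beneath each $v_j$ on every layer-$(\sC,i)$ path and the $G_{\sC,i-1}$-copies hanging beneath each leaf of each layer-$(\sR,i+1)$ tree. What remains is visibly connected: the layer-$(\sR,k+1)$ tree sits at the top, each of its leaves is linked to a layer-$(\sC,k)$ path whose nodes link in turn to layer-$(\sR,k)$ trees, and so on down to the layer-$(\sR,i+1)$ trees. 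Adding the front and central nodes of each layer-$(\sC,i)$ path then attaches, via $v_1$'s edge to its layer-$(\sR,i+1)$ parent, a connected $(s-t)$-vertex path to the upper part. Hence $G[S_{\sC,i}]$ is a connected induced subgraph of the tree $G$, i.e., a subtree.

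For (b), I plan a case analysis on a candidate leaf $x \in S_{\sC,i}$ according to its layer. If $x$ lies strictly above layer $(\sC,i)$, then every neighbor of $x$ also lies strictly above layer $(\sC,i)$, except possibly when $x$ is a leaf of a layer-$(\sR,i+1)$ tree, in which case some neighbors are front nodes of layer-$(\sC,i)$ paths; in both subcases all $\Delta$ neighbors lie in $S_{\sC,i}$, so $x$ is not a leaf of $G[S_{\sC,i}]$. If $x = v_j$ is a front or central node of a layer-$(\sC,i)$ path with $1 \leq j \leq s-t$, the $\Delta - 2$ layer-$(\sR,i)$ roots attached at $v_j$ lie outside $S_{\sC,i}$, so the candidate $S_{\sC,i}$-neighbors are $v_{j-1}$, $v_{j+1}$, and---only when $j = 1$---the layer-$(\sR,i+1)$ parent of $v_1$. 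Checking the boundary indices $j \in \{1, t, t+1, s-t\}$ shows that $v_j$ has exactly one $S_{\sC,i}$-neighbor precisely when $j = s-t$, because then $v_{s-t+1}$ is rear and not in $S_{\sC,i}$. The main obstacle is this boundary bookkeeping in (b); once the leaf set has been pinned down, the component argument from the first paragraph is routine.
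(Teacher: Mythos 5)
Your proposal is correct and matches the paper's proof in substance: the paper's argument opens with exactly your facts (a) and (b), stated as a ``simple observation'' that $S_{\sC,i}$ induces a connected subtree all of whose leaves are central nodes in layer $(\sC,i)$, and your boundary bookkeeping (identifying $v_{s-t}$ as the unique leaf contributed by each layer-$(\sC,i)$ path) correctly supplies the details the paper leaves implicit. The only cosmetic difference is in how the path $P$ is extracted: the paper does an explicit parent/child case analysis (following parent pointers up to the root $r$ and descending a different branch when $u$ is the parent of $v$), whereas you delete $v$ and locate a leaf of the component containing $u$; both routes are valid.
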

\begin{proof}
We assume that  $v \in S_{\sC,i}$ is not a central node in layer $(\sC,i)$. Consider any neighbor $u$ of $v$ such that $u \in S_{\sC,i}$. To prove the lemma, its suffices to find a path $P = (v, u, \ldots, w)$ such that $w$ is a central node in layer $(\sC,i)$ and all nodes in $P$ are in $S_{\sC,i}$. The existence of such a path $P$ follows from the simple observation that $S_{\sC,i}$ induces a connected subtree where all the leaf nodes are  central nodes in layer $(\sC,i)$.

More specifically, such a path $P$ can be constructed as follows.
If $u$ itself is a central node in layer $(\sC,i)$, then we can simply take $P=(v,u=w)$.

We first consider the case where $v$ is the parent of $u$ in the rooted tree  $G = G_{\sR,k+1}^\ast$. In this case, there must exist a descendant $w$ of $u$  such that $w$ is a central node in layer $(\sC,i)$. This gives us a desired path  $P = (v, u, \ldots, w)$. 

Next, we consider the case where $v$ is a child of $u$  in the rooted tree  $G = G_{\sR,k+1}^\ast$. In this case, it might be possible that all descendants $w$ of $u$  such that $w$ is a central node in layer $(\sC,i)$ are also descendants of $v$.
Hence we will consider a different approach.
Starting from $(v, u)$, we first follow the parent pointers to the root $r$ of  $G = G_{\sR,k+1}^\ast$. There are $\Delta$ children of $r$, and it is clear that each child of $r$ has a descendant $w$ that is a central node in layer $(\sC,i)$. Hence we can extend the current path $(v,u, \ldots, r)$ to a desired $P = (v, u, \ldots, r, \ldots, w)$ such that $w$ that is a central node in layer $(\sC,i)$.
\end{proof}

\subparagraph{Assumptions} We are given an $\LCL$ problem $\Pi=(\Delta, \Sigma, \VV, \EE)$ such that $d_\Pi = k$. Hence there does not exist a good sequence 
\[(\VV_1, \DDD_1, \VV_2, \DDD_2, \ldots, \VV_{k+1}).\]
Recall that the rules for a good sequence are as follows:
\begin{align*}
    \VV_i & =
    \begin{cases}
    \trim(\VV) & \text{if $i=1$},\\
    \trim(\VV_{i-1}\upharpoonright_{\DDD_{i-1}}) &  \text{if $i>1$},\\
    \end{cases}\\
    \DDD_i &\in \flexSCC(\DDD_{\VV_i}).
\end{align*}

The only nondeterminism in the above rules is the choice of $\DDD_i \in \flexSCC(\DDD_{\VV_i})$ for each $i$. The fact that $d_\Pi = k$ implies that for all possible choices of $\DDD_1, \DDD_2, \ldots, \DDD_k$, we always end up with $\VV_{k+1} = \emptyset$.

We assume that there is an algorithm $\A$ that solves $\Pi$ in $t$ rounds on  $G = G_{\sR,k+1}^\ast$. As the number $n$ of nodes in $G$ satisfies $t = \Omega(n^{1/k})$, to prove the desired $\Omega(n^{1/k})$ lower bound, it suffices to derive a contradiction. Specifically, we will prove that the existence of such an algorithm $\A$ forces the existence of a good sequence $(\VV_1, \DDD_1, \VV_2, \DDD_2, \ldots, \VV_{k+1})$, contradicting the fact that  $d_\Pi = k$. 

\subparagraph{Induction hypothesis} Our proof proceeds by an induction on the subsets $S_{\sR,1}$, $S_{\sC,1}$, $S_{\sR,2}$, $S_{\sC,2}$, $\ldots$, $S_{\sR,k+1}$. For each $1 \leq i \leq k$, the choice of $\DDD_i \in \flexSCC(\DDD_{\VV_i})$ is fixed in the induction hypothesis for $S_{\sC,i}$. The choice of $\VV_{i}$ is uniquely determined once $\DDD_1, \DDD_2, \ldots, \DDD_{i-1}$ have been fixed.

Before defining our induction hypothesis, we recall that the output labels of the half-edges surrounding a node $v$ are determined by the subgraph induced by the radius-$t$ neighborhood $U$ of $v$, together with the distinct IDs of the nodes in $U$. For each node $v$ in $G$, we define $\VV_v^\A$ as the set of all possible node configurations of $v$ that can possibly appear when we run $\A$ on $G$.  In other words, $C \in \VV_v^\A$ implies that there exists an assignment of distinct IDs to nodes in the radius-$t$ neighborhood of $v$ such that the output labels of the half-edges surrounding $v$ form the node configuration $C$. 

Similarly, for any two edges $e_1$ and $e_2$ incident to a node $v$, we define $\DDD_{v, e_1, e_2}^\A$ to be the set of all size-$2$ multisets $\{a,b\}$ of labels such that $\{a,b\}$ is a possible outcome of labeling the two half-edges $(v, e_1)$ and $(v, e_2)$ when we run the algorithm $\A$ on $G$.

\begin{definition}[Induction hypothesis for layer $S_{\sR,i}$]\label{def-IH-R}
For each $1 \leq i \leq k+1$, the induction hypothesis for $S_{\sR,i}$ specifies that each $v \in S_{\sR,i}$ satisfies  $\VV_v^\A \subseteq \VV_i$.
\end{definition}

\begin{definition}[Induction hypothesis for layer $S_{\sC,i}$]\label{def-IH-C}
For each $1 \leq i \leq k$, the induction hypothesis for $S_{\sC,i}$ specifies that for each $v \in S_{\sC,i}$ and any two incident edges $e_1 = \{v,u\}$ and $e_2 = \{v,w\}$ such that $u$ and $w$ are in layer  $(\sC,i)$ or higher, we have $\DDD_{v, e_1, e_2}^\A \subseteq \DDD_i$.
\end{definition}

Next, we prove that the induction hypotheses stated in \cref{def-IH-R,def-IH-C} hold.

\begin{lemma}[Base case: $S_{\sR, 1}$]\label{lem-base-case}
The  induction hypothesis for $S_{\sR, 1}$ holds.
\end{lemma}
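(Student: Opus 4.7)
The plan is to exploit two standing facts: correctness of $\A$ on $G$ under every ID assignment, and the defining property of $\gamma$ chosen at the start of \cref{sect-lower}, namely that a correct labeling of $T_\gamma^\ast$ with root configuration $C$ and remaining degree-$\Delta$ nodes in $\SSS$ forces $C \in \trim(\SSS)$ whenever $C \in \SSS$. Fix $v \in S_{\sR,1}$ and $C \in \VV_v^\A$; the goal is to show $C \in \VV_1 = \trim(\VV)$.

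First I would unfold $C \in \VV_v^\A$ into a concrete assignment of distinct IDs on $G$ under which $\A$ outputs the configuration $C$ on the half-edges around $v$. Correctness of $\A$ on $G$ then yields a globally correct labeling of $G$, so every edge has configuration in $\EE$ and every degree-$\Delta$ node has configuration in $\VV$; in particular $C \in \VV$, since $v$ has degree $\Delta$ in $G$. Next I would cut out the radius-$\gamma$ neighborhood $N$ of $v$. By the definition of $S_{\sR,1}$, $N$ is isomorphic to $T_\gamma^\ast$, and the isomorphism is forced to send $v$ to the root, because the root is the unique vertex of $T_\gamma^\ast$ with eccentricity $\gamma$ and $v$ has precisely that property inside $N$ by construction of $N$ as a radius-$\gamma$ ball around $v$. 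Pushing the $\A$-labeling of $N$ through this isomorphism gives a labeling of the abstract tree $T_\gamma^\ast$ in which the root carries $C$; every non-leaf vertex of $T_\gamma^\ast$ has its full $G$-neighborhood contained in $N$ so its $\VV$-configuration transfers unchanged; the degree-$1$ leaves of $T_\gamma^\ast$ are unconstrained by \cref{def-correctness-unrooted}; and each edge of $T_\gamma^\ast$ is an edge of $G$, so its $\EE$-configuration transfers as well. Applying the defining property of $\gamma$ with $\SSS = \VV$ now delivers $C \in \trim(\VV) = \VV_1$, and since $C \in \VV_v^\A$ was arbitrary this yields $\VV_v^\A \subseteq \VV_1$ as required.

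There is essentially no obstacle in this base case, because no lower layer has to be invoked and the choice of $\gamma$ is already tailored exactly for this kind of argument. The one mildly delicate point I would double-check while writing the details is the identification of $v$ with the root under the isomorphism $N \cong T_\gamma^\ast$: without it, some other vertex's configuration (rather than $C$) would end up playing the role of the root's configuration in the $\trim$-test. The eccentricity argument pins this down unambiguously, so I do not expect any real trouble.
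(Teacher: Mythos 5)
Your proof is correct and follows the same route as the paper's: unfold a configuration in $\VV_v^\A$ into a concrete correct run of $\A$, observe that the radius-$\gamma$ ball around $v$ is a correctly labeled copy of $T_\gamma^\ast$ rooted at $v$, and invoke the defining property of $\gamma$ with $\SSS = \VV$ to conclude the configuration lies in $\trim(\VV) = \VV_1$. The only difference is that you spell out details (the restriction-to-$N$ step and the eccentricity argument identifying $v$ with the root) that the paper leaves implicit; both are fine.
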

 \begin{proof}
 Recall that the number $\gamma$ satisfies the following property. For each subset $\SSS \subseteq \VV$ and each $C \in \SSS \setminus \trim(\SSS)$, there exists no correct labeling of $T_\gamma^\ast$ where the node configuration of the root $r$ is $C$ and the  node configuration of remaining degree-$\Delta$ nodes is in $\SSS$.
 
 To prove the induction hypothesis for $S_{\sR, 1}$, consider any node $v \in S_{\sR, 1}$. By the definition of $S_{\sR, 1}$, the radius-$\gamma$ neighborhood of $v$ is isomorphic to $T_\gamma^\ast$ rooted at $v$, and, by setting $\SSS = \VV$, we infer that  there is no correct labeling of $G$ such that the node configuration of $v$ is in $\VV \setminus \trim(\VV) = \VV \setminus \VV_1$, so we must have $\VV_v^\A \subseteq \VV_1$, as $\A$ is correct.
 \end{proof}

\begin{lemma}[Inductive step: $S_{\sR,i}$]\label{lem-inductive-step-CR}
 Let $2 \leq i \leq k$. If the induction hypothesis for  $S_{\sR,i-1}$ and $S_{\sC,i-1}$ holds, then the  induction hypothesis for $S_{\sR,i}$ holds.
\end{lemma}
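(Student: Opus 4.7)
The plan is to show, for any $v \in S_{\sR,i}$ and any $C \in \VV_v^\A$, that $C \in \VV_i = \trim(\VV_{i-1}\upharpoonright_{\DDD_{i-1}})$. The strategy is to fix an execution of $\A$ that produces $C$ at $v$, restrict its output to the radius-$\gamma$ neighborhood $U$ of $v$, and then invoke the defining property of $\gamma$ from the trim definition. By \cref{def-subsets} the neighborhood $U$ is isomorphic to $T_\gamma^\ast$ and lies entirely in $S_{\sC,i-1}$, and by \cref{lem-containment} $S_{\sC,i-1}\subseteq S_{\sR,i-1}$; applying the $S_{\sR,i-1}$ induction hypothesis to every $u \in U$ therefore yields $\VV_u^\A \subseteq \VV_{i-1}$, and in particular $C \in \VV_{i-1}$.

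Next I would promote this to $\VV_u^\A \subseteq \VV_{i-1}\upharpoonright_{\DDD_{i-1}}$ for every node $u \in U$ that has degree $\Delta$ inside the copy of $T_\gamma^\ast$ (equivalently, every $u$ at distance at most $\gamma-1$ from $v$ in $G$). For such a $u$, every $G$-neighbor of $u$ lies in $U \subseteq S_{\sC,i-1}$, and by \cref{def-subsets} each node of $S_{\sC,i-1}$ sits in layer $(\sC,i-1)$ or higher. The $S_{\sC,i-1}$ induction hypothesis therefore applies to every pair of incident edges $e_1,e_2$ of $u$, giving $\DDD_{u,e_1,e_2}^\A \subseteq \DDD_{i-1}$. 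This forces every size-$2$ sub-multiset of every configuration in $\VV_u^\A$ to lie in $\DDD_{i-1}$, so $\VV_u^\A \subseteq \VV_{i-1}\upharpoonright_{\DDD_{i-1}}$; in particular $C \in \VV_{i-1}\upharpoonright_{\DDD_{i-1}}$. Restricting the fixed execution's output to $U \simeq T_\gamma^\ast$ then produces a correct labeling of $T_\gamma^\ast$ in which the root bears $C$ and every other degree-$\Delta$ node bears a configuration in $\VV_{i-1}\upharpoonright_{\DDD_{i-1}}$; combined with the fact that $C$ itself lies in $\VV_{i-1}\upharpoonright_{\DDD_{i-1}}$, the defining property of $\gamma$ in contrapositive form forces $C \in \trim(\VV_{i-1}\upharpoonright_{\DDD_{i-1}})=\VV_i$, as desired.

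The step that will need careful bookkeeping is the promotion argument, because the $S_{\sC,i-1}$ hypothesis is conditional on the far endpoints of the two edges lying in layer $(\sC,i-1)$ or higher and must be invoked for \emph{every} pair of incident edges of each interior $u \in U$. The definition of $S_{\sR,i}$ has been set up precisely for this: $U \subseteq S_{\sC,i-1}$ by \cref{def-subsets}, and $S_{\sC,i-1}$ is by definition contained in layer $(\sC,i-1)$ or higher, closing the gap without further combinatorics. Everything else is a straightforward restriction of an execution together with the minimality of $\gamma$.
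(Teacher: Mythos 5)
Your proposal is correct and follows essentially the same route as the paper's proof: apply the $S_{\sR,i-1}$ hypothesis and the $S_{\sC,i-1}$ hypothesis (over all pairs of incident edges, justified because all neighbors lie in $S_{\sC,i-1}$ and hence in layer $(\sC,i-1)$ or higher) to get $\VV_u^\A \subseteq \VV_{i-1}\upharpoonright_{\DDD_{i-1}}$ for all degree-$\Delta$ nodes of the $T_\gamma^\ast$ copy, then invoke the defining property of $\gamma$ to conclude $\VV_v^\A \subseteq \trim(\VV_{i-1}\upharpoonright_{\DDD_{i-1}})$. The only difference is that you spell out the final trimming step explicitly where the paper just cites the base-case argument, and you correctly note the need for $C \in \VV_{i-1}\upharpoonright_{\DDD_{i-1}}$ itself before applying the contrapositive.
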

 \begin{proof}
  To prove the induction hypothesis for layer $S_{\sR,i}$, consider any node $v \in S_{\sR,i}$. By the definition of $S_{\sR,i}$, the radius-$\gamma$ neighborhood of $v$ is isomorphic to $T_\gamma^\ast$ rooted at $v$ and  contains only nodes from $S_{\sC,i-1}$. The goal is to prove that $\VV_v^\A \subseteq \VV_i = \trim(\VV_{i-1}\upharpoonright_{\DDD_{i-1}})$.
  
  The set of degree-$\Delta$ nodes in this $T_\gamma^\ast$  is precisely the set of nodes in the radius-$(\gamma-1)$ neighborhood of $v$ in $G$.  Consider any node $u$ in the radius-$(\gamma-1)$ neighborhood of $v$. 
  As $u$ is in  $S_{\sC,i-1} \subseteq S_{\sR,i-1}$, the induction hypothesis for  $S_{\sR,i-1}$ implies that 
   \[\VV_u^\A \subseteq \VV_{i-1}.\]
   Since all neighbors of $u$ are in  $S_{\sC,i-1}$, from the induction hypothesis for  $S_{\sC,i-1}$, we have \[\DDD_{u, e_1, e_2}^\A \subseteq \DDD_{i-1}\] for any two edges $e_1$ and $e_2$ incident to $u$. 
   Combining these two facts, we infer that 
   \[\VV_u^\A \subseteq \VV_{i-1}\upharpoonright_{\DDD_{i-1}}.\]
   
  Given that all degree-$\Delta$ nodes $u$ in this $T_\gamma^\ast$ satisfy that $\VV_u^\A \subseteq \VV_{i-1}\upharpoonright_{\DDD_{i-1}}$, the same argument as in the proof of \cref{lem-base-case} shows that $\VV_v^\A \subseteq \trim(\VV_{i-1}\upharpoonright_{\DDD_{i-1}}) = \VV_i$, as required. 
 \end{proof}
 
 \begin{lemma}[Inductive step: $S_{\sC,i}$]\label{lem-inductive-step-RC}
 Let $1 \leq i \leq k$. If the induction hypothesis for  $S_{\sR,i}$ holds, then the  induction hypothesis for  $S_{\sC,i}$ holds.
\end{lemma}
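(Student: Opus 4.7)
The plan is to exhibit a specific $\DDD_i \in \flexSCC(\DDD_{\VV_i})$ witnessing the induction hypothesis. Define $\DDD_i$ to be the set of all size-$2$ multisets $\{a,b\}$ such that there exist a layer-$(\sC,i)$ path $(v_1,\ldots,v_s)$ in $G$, a central index $j$ with $t+1 \leq j \leq s-t$, and an assignment of distinct IDs for which running $\A$ on $G$ outputs labels $a$ and $b$ on the two half-edges $(v_j,\{v_{j-1},v_j\})$ and $(v_j,\{v_j,v_{j+1}\})$. I then need to verify three things: (a) $\DDD_i \subseteq \DDD_{\VV_i}$; (b) $\DDD_i$ sits inside a single strongly connected component of $\DDD_{\VV_i}$ which is path-flexible, so $\DDD_i \in \flexSCC(\DDD_{\VV_i})$; and (c) for every $v \in S_{\sC,i}$ and every pair of incident edges $e_1=\{v,u\}$, $e_2=\{v,w\}$ with $u,w$ in layer $(\sC,i)$ or higher, we have $\DDD^\A_{v,e_1,e_2} \subseteq \DDD_i$.

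Part (a) is immediate: any central node $v_j$ on a layer-$(\sC,i)$ path lies in $S_{\sR,i}$ by \cref{lem-containment}, so \cref{def-IH-R} for $S_{\sR,i}$ yields $\VV_{v_j}^\A \subseteq \VV_i$, whence any two-element sub-multiset of its node configuration belongs to $\DDD_{\VV_i}$. For part (b), the engine is indistinguishability. By construction of $G_{\sC,i}$ in \cref{def-lb-graphs}, the radius-$t$ neighborhood in $G$ of any central node $v_j$ on any layer-$(\sC,i)$ path is isomorphic as an unlabeled tree to a universal gadget: a length-$2t$ path segment with $\Delta-2$ copies of $G_{\sR,i}$ appended at each interior node. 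Thus for any two realized pairs $\{a,b\},\{c,d\} \in \DDD_i$ I can splice the two witness ID-assignments onto a single long layer-$(\sC,i)$ path, choosing fresh distinct IDs so that the patterns are concentrated around two central indices $j < j'$ whose separation $j'-j$ can be prescribed in a wide range. Since $\A$ is a $t$-round algorithm and the witness regions do not overlap, its output on the segment between $v_j$ and $v_{j'}$ is a length-$(j'-j)$ walk in $\M_{\DDD_{\VV_i}}$ from $(a,b)$ to $(c,d)$. Running the same argument in the reverse direction gives $\{a,b\}\sim\{c,d\}$, so $\DDD_i$ lies in a single SCC of $\DDD_{\VV_i}$.

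For path-flexibility I would argue by contradiction. Suppose the SCC containing $\DDD_i$ is not path-flexible for some $\{a,b\} \in \DDD_i$. Since every pair of directions admits a connecting walk by part (b), \cref{lem-inflex} forces the second alternative: there exists $2 \leq x \leq |\Sigma|^2$ such that length-$k$ walks $(a,b) \leadsto (a,b)$ exist only for $k$ divisible by $x$. However, the parameter $s = \Theta(t)$ in \cref{def-lb-graphs} is at our disposal, so I may instantiate two copies of the lower-bound graph with path lengths $s_1$ and $s_2$ incongruent modulo $x$. In each instance the central-to-central portion of the path forces $\A$'s output labels to trace a walk in $\M_{\DDD_{\VV_i}}$ between two pairs of $\DDD_i$; by the same indistinguishability as in (b) the endpoint pairs can be aligned, and so the difference of walk lengths would be a non-multiple of $x$ realising a closed walk of forbidden length at the aligned endpoint, contradicting \cref{lem-inflex}. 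Hence $\DDD_i \in \flexSCC(\DDD_{\VV_i})$.

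For part (c), take $v \in S_{\sC,i}$ with $u,w$ in layer $(\sC,i)$ or higher. If $v$ is a central node in layer $(\sC,i)$, containment $\DDD^\A_{v,e_1,e_2} \subseteq \DDD_i$ holds by the very definition of $\DDD_i$. Otherwise apply \cref{lem-layer-C-property} to each of $u$ and $w$ to obtain a path inside $S_{\sC,i}$ from $v$ through $u$ (or $w$) to a central node; the splicing argument from part (b) then shows that any realized pair at $(e_1,e_2)$ is connected by a walk in $\M_{\DDD_{\VV_i}}$ to a pair realized at a central node, so it lies in the same SCC, namely $\DDD_i$. The principal obstacle I expect is making the indistinguishability/ID-splicing in parts (b) and (c) fully rigorous: one must confirm that surgically replacing the IDs inside a radius-$t$ ball of a central node with those from another witness run produces a globally legal ID-assignment on $G$ under which $\A$ behaves on the pasted region exactly as in the witness run, and that the $G_{\sR,i}$-pendant subtrees attached along the path are sufficiently uniform to make all central-node neighborhoods truly isomorphic.
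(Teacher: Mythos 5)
Your overall skeleton (identify the realized pairs at central nodes, use indistinguishability along the layer-$(\sC,i)$ paths to place them in one strongly connected component, then use \cref{lem-layer-C-property} to drag every other $v \in S_{\sC,i}$ into that component) is the same as the paper's. Two points, however, need repair. The first is minor but real: you define $\DDD_i$ as the set $\tilde{\DDD}$ of pairs actually realized at central nodes and then conclude ``$\DDD_i$ sits inside a single path-flexible SCC, so $\DDD_i \in \flexSCC(\DDD_{\VV_i})$.'' That inference is invalid: members of $\flexSCC(\DDD_{\VV_i})$ are \emph{entire} strongly connected components, and $\tilde{\DDD}$ may well be a proper subset of one. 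Your part (c) likewise only shows that each realized pair lies in the same SCC as $\tilde{\DDD}$, not in $\tilde{\DDD}$ itself. The fix is what the paper does: prove $\tilde{\DDD}$ is contained in some path-flexible SCC and \emph{define} $\DDD_i$ to be that whole component.

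The second issue is a genuine gap in your path-flexibility argument. You propose to defeat the periodicity alternative of \cref{lem-inflex} by running $\A$ on two lower-bound graphs with different path parameters $s_1 \not\equiv s_2 \pmod{x}$. This fails for two reasons. First, the two graphs have different numbers of nodes, and in the $\LOCAL$ model $\A$ knows $n$, so the radius-$t$ views of central nodes in the two graphs are \emph{not} indistinguishable; the realized sets $\tilde{\DDD}$ (and indeed the whole inductive structure $\VV_1, \DDD_1, \ldots$) could differ between the two instances, so there is no single pair $(a,b)$ at which you can ``align'' closed walks of both lengths. Second, even granting alignment, ``the difference of walk lengths'' does not realize a walk of that length in $\M_{\DDD_{\VV_i}}$. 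The correct and simpler route, used in the paper, stays inside one graph: since any two central nodes $u_j, u_l$ on the same path with $l-j \geq 4t+3$ have independently prescribable outputs, one obtains closed walks at $(a,b)$ of every length $4t+3+d$ for $0 \leq d \leq |\Sigma|^2$ (this is exactly why $s$ is taken to be a sufficiently large $\Theta(t)$); among $|\Sigma|^2+1$ consecutive lengths two are consecutive integers, so no period $2 \leq x \leq |\Sigma|^2$ can divide all of them, and \cref{lem-inflex} then forces path-flexibility. With these two corrections your argument coincides with the paper's proof.
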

 \begin{proof}
To prove the induction hypothesis for  $S_{\sC,i}$, we show that there exists a choice $\DDD_i \in \flexSCC(\DDD_{\VV_i})$ such that the following statement holds. 
For any node $v \in S_{\sC,i}$ and any two incident edges $e_1 = \{v,u\}$ and $e_2 = \{v,w\}$ such that  $u$ and $w$ are in layer  $(\sC,i)$ or higher, we have $\DDD_{v, e_1, e_2}^\A \subseteq \DDD_i$.

We first make an observation about central nodes in layer $(\sC,i)$. Let $v$  be a central node in layer $(\sC,i)$, and let $u$ and $w$ be the two neighbors of $v$ in layer  $(\sC,i)$. Let $e_1 = \{v,u\}$ and $e_2 = \{v,w\}$.
Then it is clear that $\DDD_{v, e_1, e_2}^\A$ is the same for each choice of $v$ that is a central node in layer $(\sC,i)$, as the radius-$t$ neighborhoods of central nodes in the same layer are isomorphic, due to the definition of central nodes and the construction in \cref{def-lb-graphs}.
 For notational convenience, we write $\tilde{\DDD} = \DDD_{v, e_1, e_2}^\A$ to denote this set.

\subparagraph{Plan of the proof} Consider any node $v \in S_{\sC,i}$ and its two incident edges $e_1 = \{v,u\}$ and $e_2 = \{v,w\}$ such that  $u$  and $w$ are in layer  $(\sC,i)$ or higher. Due to the induction hypothesis for $S_{\sR,i}$, we already have $\VV_{v}^\A \subseteq \VV_i$ as $v \in S_{\sC,i} \subseteq S_{\sR,i}$, and so $\DDD_{v, e_1, e_2}^\A \subseteq \DDD_{\VV_i}$.

To prove that  there exists a choice $\DDD_i = \flexSCC(\DDD_{\VV_i})$ such that $\DDD_{v, e_1, e_2}^\A \subseteq \DDD_i$ for all such $v$, $e_1$, and $e_2$, we will first show that $\tilde{\DDD}$ must be a subset of a path-flexible strongly connected component of $\DDD_{\VV_i}$, and then we fix $\DDD_i \in \flexSCC(\DDD_{\VV_i})$ to be this path-flexible strongly connected component.

Next, we will argue that for each $\{a,b\} \in \DDD_{v, e_1, e_2}^\A$, there exist a walk in $\M_{\DDD_{\VV_i}}$ that starts from $(a,b)$ and ends in $\tilde{\DDD}$ and a walk in $\M_{\DDD_{\VV_i}}$ that starts from $\tilde{\DDD}$ and ends in $(a,b)$. This shows that $\{a,b\}$ is in the same strongly connected component as the members in $\tilde{\DDD}$, so we conclude that $\DDD_{v, e_1, e_2}^\A \subseteq \DDD_i$.

\subparagraph{Part 1: $\tilde{\DDD}$ is a subset of some $\DDD_i \in \flexSCC(\DDD_{\VV_i})$}
Consider any path  $(u_1, u_2, \ldots, u_s)$ of $s$ nodes in layer $(\sC,i)$ of  $G = G_{\sR,k+1}^\ast$. Note that any such a path must be the $s$-node path $(v_1, v_2, \ldots, v_s)$ in the construction of some $G_{\sC, i}$ in \cref{def-lb-graphs}.
 We choose $s = \Theta(t)$ to be sufficiently large to ensure that for each integer $0 \leq d \leq |\Sigma|^2$, there exist two nodes $u_j$ and $u_l$ in the path meeting the following conditions.
 \begin{itemize}
     \item $t+1 \leq j < l \leq s-t$, so $u_j$ and $u_l$ are central nodes.
     \item The distance $l-j$ between $u_j$ and $u_l$  equals $4t+3+d$. 
 \end{itemize}
 
 The choice of the number $4t+3$ is to ensure that   the union of the radius-$t$ neighborhoods of the endpoints of any edge in $G$  does not node-intersect the radius-$t$ neighborhood of both $u_j$ and $u_l$.
 This implies that after arbitrarily fixing distinct IDs in the radius-$t$ neighborhood of $u_j$ and $u_l$, it is possible to complete the ID assignment of the entire graph $G$ in such a way that the union of the radius-$t$ neighborhoods of the endpoints of each edge in $G$ does not contain repeated IDs. If we run $\A$ with such an ID assignment, it is guaranteed that the output is correct.
 
 Consider the directed graph $\M_{\DDD_{\VV_i}}$ and any two of its nodes $(a,b)$ and $(c,d)$ such that $\{a,b\} \in \tilde{D}$ and $\{c,d\} \in \tilde{D}$. Our choice of $\tilde{D}$ implies that there exists an assignment of distinct IDs to the radius-$t$ neighborhood of both $u_j$ and $u_l$ such that the output labels of $(u_j, \{u_j, u_{j-1}\})$, $(u_j, \{u_j, u_{j+1}\})$, $(u_l, \{u_l, u_{l-1}\})$, and $(u_l, \{u_l, u_{l+1}\})$ are $a$, $b$, $c$, and $d$, respectively.
 We complete the ID assignment of the entire graph $G$ in such a way that the radius-$t$ neighborhood of each edge in $G$ does not contain repeated IDs.
 For each node $u_y$ with $j < y < l$, the pair of the two output labels of $(u_y \{u_y, u_{y-1}\})$ and $(u_y, \{u_y, u_{y+1}\})$ resulting from $\A$ is also a node in $\M_{\DDD_{\VV_i}}$. Hence there exists a walk $(a,b) \leadsto (c,d)$ of length $4t+3+d$, for any $0 \leq d \leq |\Sigma|^2$. Since this holds for all choices of  $(a,b)$ and $(c,d)$  such that  $\{a,b\} \in \tilde{D}$ and $\{c,d\} \in \tilde{D}$, all members in $\tilde{D}$ are in the same strongly connected component.
 Furthermore, \cref{lem-inflex} implies that this strongly connected component is path-flexible, as the length of the walk can be any integer in between $4t+3$ and $4t+3+|\Sigma|^2$.

\subparagraph{Part 2:  $\DDD_{v, e_1, e_2}^\A \subseteq \DDD_i$} For this part, 
we use \cref{lem-layer-C-property}, which shows that in the graph  $G = G_{\sR,k+1}^\ast$ there are a path $P_1$ from $v$ to a central node in layer $(\sC,i)$ through $e_1$ and a path  $P_2$ from $v$ to a central node  in layer $(\sC,i)$ through $e_2$, and these paths use only nodes in $S_{\sC, i}$.

Consider the output labels resulting from running $\A$.
Let $a$ be the label of the half-edge $(v, e_1)$ and let $b$ be the label of the half-edge $(v, e_2)$. We have $\{a,b\} \in \DDD_{v, e_1, e_2}^\A \subseteq \DDD_{\VV_i}$. 
By taking the output labels in $P_1$ and $P_2$ resulting from running $\A$, we obtain two walks in the directed graph $\M_{\DDD_{\VV_i}}$: $(a,b) \leadsto (c,d)$ and $(b,a) \leadsto (e,f)$, where both $(c,d)$ and $(e,f)$ are nodes in $V(\M_{\DDD_{\VV_i}})$ such that $\{a,b\} \in \tilde{D}$ and $\{c,d\} \in \tilde{D}$. By taking the opposite directions, we also obtain two walks: $(d,c) \leadsto (b,a)$ and $(f,e) \leadsto (a,b)$. Hence $\{a,b\}$ is in the same strongly connected component of $\DDD_{\VV_i}$ as the members in $\tilde{D}$.

The same argument can be applied to all $\{a,b\} \in \DDD_{v, e_1, e_2}^\A$. The reason is that for each $\{a,b\} \in \DDD_{v, e_1, e_2}^\A$  there is an assignment of distinct IDs such that $\{a,b\}$ is the multiset of the two labels of  $(v, e_1)$ and $(v, e_2)$. Hence we conclude that all members in $\DDD_{v, e_1, e_2}^\A$ are within the same strongly connected component as that of members in $\tilde{\DDD}$, so  $\DDD_{v, e_1, e_2}^\A \subseteq \DDD_i$.
\end{proof}
 
 Applying \cref{lem-base-case,lem-inductive-step-CR,lem-inductive-step-CR,lem-inductive-step-RC} from $S_{\sR,1}$ all the way up to the last subset  $S_{\sR,k+1}$, we obtain the following result.
 
\begin{lemma}[Lower bound for the case $d_\Pi = k$]\label{lem-lower-finite}
If $d_\Pi = k$ for a finite integer $k$, then  $\Pi$ requires $\Omega(n^{1/k})$ rounds to solve on trees of maximum degree $\Delta$.
\end{lemma}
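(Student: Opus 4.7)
The plan is to assemble the three preceding induction lemmas (\cref{lem-base-case,lem-inductive-step-CR,lem-inductive-step-RC}) into a single contradiction argument. Fix a positive integer $t$ and consider the lower bound graph $G = G_{\sR,k+1}^\ast$ from \cref{def-lb-graph-main}, which has $n = O(t^k)$ nodes. I suppose for contradiction that some algorithm $\A$ solves $\Pi$ on $G$ in $t$ rounds; ruling this out for every $t$ gives the $\Omega(n^{1/k})$ lower bound, since $t = \Omega(n^{1/k})$ follows from $n = O(t^k)$.

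Starting from \cref{lem-base-case}, which supplies the induction hypothesis for $S_{\sR,1}$, I would alternate applications of \cref{lem-inductive-step-RC} (passing from $S_{\sR,i}$ to $S_{\sC,i}$, thereby committing to a choice $\DDD_i \in \flexSCC(\DDD_{\VV_i})$) and \cref{lem-inductive-step-CR} (lifting the hypotheses for $S_{\sR,i-1}$ and $S_{\sC,i-1}$ to $S_{\sR,i}$, via the already-determined $\VV_i = \trim(\VV_{i-1}\upharpoonright_{\DDD_{i-1}})$). Running this alternation for $i = 1, 2, \ldots, k$ produces committed sets $\DDD_1, \DDD_2, \ldots, \DDD_k$ and the corresponding $\VV_1, \VV_2, \ldots, \VV_{k+1}$ that together obey every good-sequence rule except possibly the non-emptiness of $\VV_{k+1}$, and concludes with the induction hypothesis for $S_{\sR,k+1}$: every $v \in S_{\sR,k+1}$ satisfies $\VV_v^\A \subseteq \VV_{k+1}$.

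The contradiction then follows from two observations. First, \cref{lem-containment} guarantees $S_{\sR,k+1} \neq \emptyset$; picking any $v \in S_{\sR,k+1}$, the fact that $\A$ must produce \emph{some} labeling of the half-edges around $v$ under every ID assignment forces $\VV_v^\A \neq \emptyset$. Second, since $d_\Pi = k$, no good sequence of length $k+1$ exists, and so the manufactured tuple $(\VV_1, \DDD_1, \ldots, \DDD_k, \VV_{k+1})$ must violate the only remaining good-sequence rule, forcing $\VV_{k+1} = \emptyset$. Together these yield $\emptyset \neq \VV_v^\A \subseteq \VV_{k+1} = \emptyset$, the desired contradiction. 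Since $t$ was arbitrary and $n = O(t^k)$ on $G$, this rules out any $o(n^{1/k})$-round algorithm for $\Pi$ on trees of maximum degree $\Delta$.

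Essentially all the heavy lifting has already been done inside the three induction lemmas, so the only delicate point in this wrap-up is threading the committed choices $\DDD_1, \ldots, \DDD_k$ consistently through the chain of applications, ensuring that the $\DDD_i$ picked inside \cref{lem-inductive-step-RC} is precisely the one that defines the $\VV_{i+1}$ fed into the next application of \cref{lem-inductive-step-CR}. Once this bookkeeping is made explicit, the chain of implications runs mechanically from $S_{\sR,1}$ up to $S_{\sR,k+1}$ and the contradiction is immediate.
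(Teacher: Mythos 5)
Your proposal is correct and follows essentially the same route as the paper's proof: apply the base case and alternate the two inductive-step lemmas up to $S_{\sR,k+1}$, use \cref{lem-containment} to find a node $v$ with $\emptyset \neq \VV_v^\A \subseteq \VV_{k+1}$, and conclude that the constructed tuple would be a good sequence of length $k+1$, contradicting $d_\Pi = k$. The minor reordering (deriving $\VV_{k+1} = \emptyset$ from $d_\Pi = k$ first rather than deriving $\VV_{k+1} \neq \emptyset$ from correctness first) is logically equivalent to the paper's presentation.
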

\begin{proof}
Assume that there is a $t$-round algorithm solving $\Pi$ on $G$.
By \cref{lem-base-case,lem-inductive-step-CR,lem-inductive-step-CR,lem-inductive-step-RC}, we infer that the induction hypothesis for the last subset  $S_{\sR,k+1}$ holds. By \cref{lem-containment}, $S_{\sR,k+1} \neq \emptyset$, so there is a node $v$ in $G$  such that  $\VV_v^\A \subseteq \VV_{k+1}$. Therefore, the correctness of $\A$ implies that $\VV_{k+1} \neq \emptyset$, which implies that $(\VV_1, \DDD_1, \VV_2, \DDD_2, \ldots, \VV_{k+1})$ chosen in the induction hypothesis is a good sequence, contradicting the assumption that $d_\Pi = k$.
Hence such a $t$-round algorithm $\A$ that solves $\Pi$ does not exist. As this argument holds for all integers $t$ and $t = \Omega(n^{1/k})$, where $n$ is the number of nodes in $G$, we conclude the proof.
\end{proof}

Now we are ready to prove \cref{thm-unrooted-characterization}.

\begin{proof}[Proof of \cref{thm-unrooted-characterization}]
The upper bound part of the theorem follows from \cref{lem-upper-finite,lem-upper-infinite}. The lower bound part of the theorem follows from \cref{lem-lower-0,lem-lower-finite}.
\end{proof}

\subsection{Complexity of the characterization}\label{sect-time}

In this section, we prove \cref{thm-unrooted-poly-time}. We are given a description of an $\LCL$ problem $\Pi=(\Delta, \Sigma,\VV,\EE)$ on $\Delta$-regular trees. We assume that the description is given in the form of listing the multisets in $\VV$ and $\EE$. Therefore, the description length of $\Pi$ is $\ell = O(\log |\Sigma|) \cdot (|\EE|\cdot 2 + |\VV| \cdot \Delta)$. Here we allow $\Delta$ to be a non-constant, as a function of $\ell$. We will design an algorithm that computes all possible good sequences $(\VV_1, \DDD_1, \VV_2, \DDD_2, \ldots, \VV_k)$  in time polynomial in $\ell$, and this allows us to compute $d_\Pi$. As, the main objective of this section is to show that the problem is polynomial-time solvable, we do not aim to optimize the time complexity of our algorithm.

For the case of $d_\Pi = \infty$, there are good sequences that are arbitrarily long. Recall that we have $\VV_1 \supseteq \VV_2 \supseteq \cdots \supseteq \VV_k$. Hence if $k > |\VV|$, there must exist some index $1 \leq i < k$ such that $\VV_i = \VV_{i+1}$. This immediately implies that $\VV_i = \VV_{i+1} = \VV_{i+2} = \cdots$, due to the following reasoning. The fact that $\VV_i = \VV_{i+1}$ implies that $\VV_{i+1} = \trim(\VV_{i}\upharpoonright_{\DDD_{i}}) = \VV_{i}\upharpoonright_{\DDD_{i}} = \VV_{i}$, so $\DDD_{i} = \DDD_{\VV_i}$. This means that $\DDD_{i} = \DDD_{\VV_i}$ itself is  the only element of $\flexSCC(\DDD_{\VV_i})$. Therefore, starting from $\VV_i$, the multisets $\DDD_{i} = \DDD_{\VV_i}$ and $\VV_{i+1} = \VV_i$ are uniquely determined. Similarly, we have  $\DDD_{j} = \DDD_{i}$ and $\VV_{j} = \VV_i$ for all $j \geq i$. We conclude that any good sequence with $k > |V|$ must stabilize at some point $i \leq |\VV|$, in the sense that $\DDD_{j} = \DDD_{i}$ and $\VV_{j} = \VV_i$ for all $j \geq i$.

\subparagraph{High-level plan}
Recall that the rules for a good sequence are as follows:
\begin{align*}
    \VV_i & =
    \begin{cases}
    \trim(\VV) & \text{if $i=1$},\\
    \trim(\VV_{i-1}\upharpoonright_{\DDD_{i-1}}) &  \text{if $i>1$},\\
    \end{cases}\\
    \DDD_i &\in \flexSCC(\DDD_{\VV_i}).
\end{align*}

To compute all  good sequences $(\VV_1, \DDD_1, \VV_2, \DDD_2, \ldots, \VV_k)$, we go through  all choices of $\DDD_i \in \flexSCC(\DDD_{\VV_i})$ and apply the rules recursively until we cannot proceed any further. The process stops when $\VV_i = \VV_{i-1}$ (the sequence stabilizes) or $\VV_i = \emptyset$ (the sequence ends). 

In time $O(|\Sigma|^2)$, we build a look-up table that allows us to check whether $\{\alpha, \beta\} \in \EE$  in $O(1)$ time, for any given size-2 multiset $\{\alpha, \beta\}$. In the subsequent discussion, we assume that such a look-up table is available.

We start with describing the algorithm for computing $\trim(\SSS)$ for a given $\SSS \subseteq \VV$.

\begin{lemma}[Algorithm for $\trim$]\label{lem-compute-trim}
The set $\trim(\SSS)$ can be computed in $O(\Delta |\SSS| |\Sigma|^2)$ time, for any given $\SSS \subseteq \VV$.
\end{lemma}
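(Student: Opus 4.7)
The plan is to characterize $\trim(\SSS)$ as the greatest fixed point of a monotone elimination operator and compute it by iterative peeling. Define $f:2^{\SSS}\to 2^{\SSS}$ by letting $f(T)$ be the set of $C\in T$ such that for every $\sigma\in C$ there exist $C'\in T$ and $\sigma'\in C'$ with $\{\sigma,\sigma'\}\in\EE$. Then $f(T)\subseteq T$, and $f$ is monotone ($T\subseteq T'$ implies $f(T)\subseteq f(T')$), so iterating $f$ starting from $\SSS$ produces a decreasing chain that stabilizes at the greatest $T\subseteq\SSS$ with $T=f(T)$.

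Next I would show $\trim(\SSS)$ equals this greatest fixed point. The direction $\trim(\SSS)\subseteq f(\trim(\SSS))$ is exactly the statement of \cref{lem-trimming}. For the reverse direction, I would prove that any $T\subseteq\SSS$ with $T\subseteq f(T)$ is contained in $\trim(\SSS)$: given a root configuration $C\in T$ and a height $i\ge 1$, construct a correct labeling of $T_i^\ast$ by induction on $i$, placing $C$ at the root and, for each of the $\Delta$ labels $\sigma\in C$, using the extension property of $T\subseteq f(T)$ to choose some $C'\in T$ and $\sigma'\in C'$ with $\{\sigma,\sigma'\}\in\EE$, labelling the corresponding child with $C'$ (giving half-edge $\sigma'$ facing the root) and recursing to height $i-1$.

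The algorithm simply peels: start with $T=\SSS$, repeatedly remove any $C\in T$ that has some $\sigma\in C$ with no match in $T$, and halt when no such $C$ exists. For an efficient implementation I would precompute the edge-configuration lookup table and, for each label $\sigma$, maintain a presence count $\mathrm{count}(\sigma)$ (the number of configurations currently in $T$ containing $\sigma$), a support count $\mathrm{sup}(\sigma)$ (the number of labels $\sigma'$ with $\{\sigma,\sigma'\}\in\EE$ that are currently present), and a list of the configurations of $\SSS$ that contain $\sigma$. A label is unsupported exactly when $\mathrm{sup}(\sigma)=0$; any configuration containing an unsupported label is scheduled for removal. Removing $C$ decrements $\mathrm{count}$ for each label in $C$; a count hitting zero triggers $\mathrm{sup}$ decrements along the (at most $|\Sigma|$) edge-neighbors of that label; a support hitting zero schedules removal of every configuration still on that label's list.

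For the time bound, preprocessing the lookup table, the initial values of $\mathrm{count}$ and $\mathrm{sup}$, and the per-label lists takes $O(\Delta|\SSS|+|\Sigma|^2)$. Over the whole execution each configuration is removed at most once with $O(\Delta)$ local bookkeeping; each label becomes not-present at most once, contributing $O(|\Sigma|)$ support updates; and each label becomes unsupported at most once, triggering traversal of its list of containing configurations. The total fits well inside the $O(\Delta|\SSS||\Sigma|^2)$ bound stated in the lemma. The main obstacle is the correctness Step 2: I must verify that the single-step extension afforded by the fixed-point property composes into a coherent labeling of the entire $T_i^\ast$ at all depths simultaneously, for which the inductive argument is clean because \cref{lem-trimming} already isolates the one nontrivial per-step extension guarantee.
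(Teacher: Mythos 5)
Your proof is correct, and it reaches the same fixed point as the paper from a different direction. Both arguments compute $\trim(\SSS)$ by monotone elimination, but the paper iterates over \emph{labels}: it introduces the auxiliary tree $T_i'$ ($T_i$ with one pendant edge at the root), defines $\Sigma_i$ as the set of labels realizable on that pendant half-edge in a correct labeling using $\SSS$, computes the fixed point $\Sigma^\ast$ of the resulting one-step recurrence in at most $|\Sigma|$ iterations, and only then reads off $\trim(\SSS)$ in a single pass as those $C$ all of whose labels have an $\EE$-neighbor in $\Sigma^\ast$. You instead eliminate \emph{configurations} directly and characterize $\trim(\SSS)$ as the greatest fixed point of your operator $f$; your two inclusions are exactly the right ones, with \cref{lem-trimming} giving $\trim(\SSS)\subseteq f(\trim(\SSS))$ and your top-down labeling of $T_i^\ast$ showing that every post-fixed point lies in $\trim(\SSS)$ (note that in that second direction the per-step extension comes from the hypothesis $T\subseteq f(T)$ rather than from \cref{lem-trimming} itself, which only speaks about $\trim(\SSS)$; the induction should be phrased as: for every $C\in T$, every distinguished $\sigma\in C$ placed on the parent-facing half-edge, and every height, the subtree below can be completed within $T$, which your recursion does establish). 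Your route buys a correctness proof that reuses \cref{lem-trimming} verbatim and, via the counter-based peeling, a sharper running time of $O(\Delta|\SSS|+|\Sigma|^2)$; the paper's label-indexed iteration buys a height-$i$ semantic invariant that transfers almost unchanged to the rooted-tree analogue. Either way the stated bound $O(\Delta|\SSS||\Sigma|^2)$ is comfortably met.
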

\begin{proof}
  We write $T_i'$ to denote the tree resulting from adding one extra edge $e^\ast$ incident to the root $r$ of $T_i$. We write $\Sigma_i$ to denote set of all possible  $\sigma \in \SSS$ such that there is a correct labeling of $T_i'$ where the node configuration of each degree-$\Delta$ node is in $\SSS$ and the half-edge label of $(r,e^\ast)$ is $\sigma$. The set $\Sigma_i$ can be computed recursively as follows.
\begin{itemize}
    \item For the base case,  $\Sigma_1$ is the set of all labels appearing in $\SSS$. 
    \item For the inductive step, each $\sigma \in \Sigma_{i-1}$ is added to $\Sigma_{i}$ if there exists $C \in \SSS$ such that $\sigma \in C$ and each of the $\Delta-1$ labels $\alpha$ in $C \setminus \{\sigma\}$ satisfies that $\{\alpha, \beta\} \in \EE$ for some $\beta \in \Sigma_{i-1}$.
\end{itemize}

 Using the above look-up table, given that  $\Sigma_{i-1}$ has been computed, the computation of  $\Sigma_{i}$ costs $O(\Delta |\SSS| |\Sigma_{i-1}|) = O(\Delta |\SSS| |\Sigma|)$ time. 
Clearly, we have $\Sigma_1 \supseteq \Sigma_2  \supseteq \cdots$, and whenever $\Sigma_i = \Sigma_{i+1}$, the sequence stabilizes:  $\Sigma_i = \Sigma_{i+1} = \Sigma_{i+2} = \cdots$.  We write $\Sigma^\ast$ to denote the fix point $\Sigma_i$ such that $\Sigma_i = \Sigma_{i+1} = \Sigma_{i+2} = \cdots$.
It is clear that $i \leq |\Sigma|$, so the fixed point $\Sigma^\ast$ can be computed in  $O(\Delta |\SSS| |\Sigma|^2)$ time.

Given the fix point $\Sigma^\ast$, the set  $\trim(\SSS)$ can be computed as follows. Observe that the tree $T_i^\ast$ is simply the result of merging  $\Delta$ trees $T_{i-1}'$ by merging the $\Delta$ degree-$1$ endpoints of $e^\ast$ into one node $r$. Therefore, $C \in \trim(\SSS)$ if and only if each of the $\Delta$ labels $\alpha \in C$ satisfies that $\{\alpha, \beta\} \in \EE$ for some $\beta \in \Sigma^\ast$. Using this characterization, given the fix point $\Sigma^\ast$, the set  $\trim(\SSS)$ can be similarly computed in $O(\Delta |\SSS| |\Sigma|)$ time. We conclude the following result.
\end{proof}

  Next, we give an algorithm that computes all path-flexible strongly connected components $\DDD' \in \flexSCC(\DDD_{\SSS})$ and their corresponding restriction $\SSS \upharpoonright_{\DDD'}$, for any given $\SSS \subseteq \VV$.

\begin{lemma}[Algorithm for $\flexSCC$]\label{lem-compute-flex-SCC}
The set of all  $\DDD' \in \flexSCC(\DDD_{\SSS})$ and their corresponding restrictions $\SSS \upharpoonright_{\DDD'}$ can be computed in $O(\Delta^8 |\SSS|^4)$ time, for any given $\SSS \subseteq \VV$.
\end{lemma}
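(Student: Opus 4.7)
The plan is to construct the directed graph $\M_{\DDD_\SSS}$ explicitly, find its strongly connected components in the standard sense, refine them according to \cref{def-scc}, test each refined component for path-flexibility via an aperiodicity check, and finally compute the restrictions. First, I enumerate $\DDD_\SSS$ in $O(\Delta^2 |\SSS|)$ time by scanning the $O(\Delta^2)$ size-2 sub-multisets of every $C\in\SSS$; this also gives the vertex set $V(\M_{\DDD_\SSS})$ of size $O(\Delta^2 |\SSS|)$. A precomputed $O(|\Sigma|^2)$ lookup table for $\EE$ then lets me decide each of the $O(\Delta^4 |\SSS|^2)$ candidate directed edges in $O(1)$, so $\M_{\DDD_\SSS}$ is built within the target budget.

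Next, I run Tarjan's algorithm in time $O(|V(\M_{\DDD_\SSS})| + |E(\M_{\DDD_\SSS})|)$ to obtain the standard SCCs. To turn these into the equivalence classes of the relation $\sim$ from \cref{def-scc}, I observe that $\{a,b\}\sim\{c,d\}$ holds if and only if all four of $(a,b),(b,a),(c,d),(d,c)$ lie in a single standard SCC; hence the refined components are obtained by, for each standard SCC $Q$, collecting $\{\{x,y\} : (x,y)\in Q \text{ and } (y,x)\in Q\}$. This postprocessing is $O(|V(\M_{\DDD_\SSS})|)$.

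The main conceptual step is testing path-flexibility of a refined SCC $\DDD'$. Because $\DDD'$ already corresponds to a strongly connected subgraph, \cref{lem-inflex} together with standard Perron-Frobenius style reasoning implies that $\DDD'$ is path-flexible exactly when this subgraph is aperiodic (has period $1$). To compute the period, I pick any vertex $v_0$ in the subgraph, run a BFS to obtain distances $d(\cdot)$, and take the gcd of the values $d(u)+1-d(w)$ over all edges $(u,w)$ inside $\DDD'$. The refined SCC is path-flexible iff this gcd equals~$1$. Each test costs linear time in its subgraph, and summing over all SCCs is $O(\Delta^4 |\SSS|^2)$. I expect this equivalence between path-flexibility and aperiodicity to be the one slightly delicate point that needs a small verification, but it follows directly by combining \cref{lem-inflex} with the fact that in an aperiodic strongly connected digraph every pair of vertices is joined by walks of all sufficiently large lengths.

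Finally, for every $\DDD' \in \flexSCC(\DDD_\SSS)$ I compute $\SSS\upharpoonright_{\DDD'}$ by scanning each $C \in \SSS$ and checking, in $O(\Delta^2)$ time using a characteristic vector of $\DDD'$, whether every size-2 sub-multiset of $C$ lies in $\DDD'$. Since the refined SCCs are disjoint subsets of $\DDD_\SSS$, there are at most $|\DDD_\SSS| = O(\Delta^2 |\SSS|)$ of them, so this stage costs $O(\Delta^4 |\SSS|^2 \cdot |\SSS|) = O(\Delta^4 |\SSS|^3)$, comfortably within $O(\Delta^8 |\SSS|^4)$. Summing all stages yields the claimed running time.
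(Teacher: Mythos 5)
Your proposal is correct and establishes the claimed bound (in fact a better one), but it takes a genuinely different route from the paper on both computational steps. For the equivalence classes of $\sim$, the paper simply runs four $s$--$t$ reachability tests for every pair of elements of $\DDD_\SSS$, paying $O(|E(\M_{\DDD_\SSS})|\cdot|V(\M_{\DDD_\SSS})|^2)=O(\Delta^8|\SSS|^4)$; you instead run Tarjan once and refine the standard SCCs. Your key claim---that $\{a,b\}\sim\{c,d\}$ iff all of $(a,b),(b,a),(c,d),(d,c)$ lie in one standard SCC---is true but deserves a one-line justification: it follows from the reversal symmetry of $\M_{\DDD}$ (an $(a,b)\leadsto(c,d)$ walk yields a $(d,c)\leadsto(b,a)$ walk, as the paper notes after \cref{def-scc}), which also shows that any standard SCC containing some symmetric pair $(x,y),(y,x)$ is closed under reversal, so your per-SCC collection really does recover the whole $\sim$-class and the subgraph you later analyze is the full SCC. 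One degenerate case to dispose of: a trivial single-vertex SCC $\{(a,a)\}$ with no self-loop would be collected by your rule even though $\{a,a\}\notin\DDD^\sim$, but it fails the flexibility test and is discarded, so $\flexSCC(\DDD_\SSS)$ is unaffected. For the flexibility test, the paper explicitly enumerates which return-walk lengths up to $2|V(\M_{\DDD_\SSS})|-1$ are achievable and takes a gcd, costing $O(|V(\M_{\DDD_\SSS})|^3)$ per component; you use the standard linear-time period computation via BFS distances, and your equivalence ``path-flexible iff the SCC is aperiodic'' is exactly what \cref{lem-inflex} together with the standard fact about aperiodic strongly connected digraphs gives. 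The net effect is that your algorithm runs in roughly $O(\Delta^4|\SSS|^2)$ plus the restriction step, comfortably inside the stated $O(\Delta^8|\SSS|^4)$; the paper's brute-force version is slower but requires no structural observations beyond the definitions, which is consistent with its stated aim of not optimizing running time.
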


\begin{proof}
Observe that $|\DDD_\SSS| = O(\Delta^2 |\SSS|)$, so the directed graph $\M_{\DDD_\SSS}$ has $O(\Delta^2 |\SSS|)$ nodes and $O(\Delta^4 |\SSS|^2)$ edges. 

Using the definition of \cref{def-scc}, testing whether $\{a,b\} \sim \{c,d\}$ for any $\{a,b\} \in \DDD_{\SSS}$ and $\{c,d\} \in \DDD_{\SSS}$ costs $O(|E(\M_{\DDD_\SSS})|) = O(\Delta^4 |\SSS|^2)$ time by doing four $s$-$t$  reachability computation. By going over all $\{a,b\} \in \DDD_{\SSS}$ and $\{c,d\} \in \DDD_{\SSS}$, 
 the set of all strongly connected components of ${\DDD_\SSS}$ can be computed in time $O(|E(\M_{\DDD_\SSS})|\cdot |V(\M_{\DDD_\SSS})|^2) = O(\Delta^8 |\SSS|^4)$.

For each strongly connected component $\DDD'$ of ${\DDD_\SSS}$, to decide whether $\DDD'$ is path-flexible, it suffices to pick one element $\{a,b\} \in \DDD'$ and check if $\{a,b\}$ is path-flexible. Recall that $\{a,b\}$ is path-flexible if  there exists an integer $K$ such that for each integer $k \geq K$, there exist length-$s$ walks  $(a,b) \leadsto (a,b)$,  $(a,b) \leadsto (b,a)$,  $(b,a) \leadsto (a,b)$, and  $(b,a) \leadsto (b,a)$ in  $\M_{\DDD}$.  

The fact that $(a,b) \in \DDD'$ and  $\DDD'$ is a strongly connected component of ${\DDD_\SSS}$ implies the existence of walks $(a,b) \leadsto (a,b)$,  $(a,b) \leadsto (b,a)$,  $(b,a) \leadsto (a,b)$, and  $(b,a) \leadsto (b,a)$ in  $\M_{\DDD}$.  Therefore, the task for deciding whether $\DDD'$ is path-flexible is reduced to the following task. Given a node $s=(a,b)$ in the directed graph $\M_{\DDD_\SSS}$, let $L$ be the set of possible lengths of an $s \leadsto s$ walk, check if there exists an integer $K$ such that
$L$ contains all integers that are at least $K$. Such an integer $K$ exists if and only if the greatest common divisor $\gcd(L)$ of $L$ is not one. Define $L'$ as the set of numbers in $L$  that are at most $2|V(\M_{\DDD_\SSS})|-1$. As shown in~\cite{lcls_on_paths_and_cycles}, we have $\gcd(L) = \gcd(L')$. 

The computation of $\gcd(L')$ can be done in $O(|V(\M_{\DDD_\SSS})|^3)$ time, as follows. From $i = 0$ up to $i = 2 |V(\M_{\DDD_\SSS})| - 1$, we compute a list of nodes $U_i \subseteq V(\M_{\DDD_\SSS})$ such that $v \in U_i$ if there is a walk $s \leadsto v$ of length $i$. Given $U_{i-1}$, it takes $O(|V(\M_{\DDD_\SSS})|^2)$ time to compute $U_{i}$, as we just need to go over all $O(|V(\M_{\DDD_\SSS})|^2)$ edges between the nodes in $V(\M_{\DDD_\SSS})$. The summation of the time complexity $O(|V(\M_{\DDD_\SSS})|^3)$, over all strongly connected component of $V(\M_{\DDD_\SSS})$, is $O(|V(\M_{\DDD_\SSS})|^4)=O(\Delta^8 |\SSS|^4)$.

To summarize, the  computation of  $\flexSCC(\DDD_{\SSS})$  costs $O(\Delta^8 |\SSS|^4)$ time. Given that we have computed all path-flexible strongly connected components $\DDD' \in \flexSCC(\DDD_{\SSS})$, the computation of the restriction $\SSS \upharpoonright_{\DDD'}$ for all path-flexible strongly connected components $\DDD' \in \flexSCC(\DDD_{\SSS})$ costs $O(|\DDD_{\SSS}| + \Delta^2 |\SSS|) = O(\Delta^2 |\SSS|)$ time, as we just need to check for each $C \in \SSS$ whether there is $\DDD' \in \flexSCC(\DDD_{\SSS})$ such that all $\Delta^2$ size-$2$ sub-multisets of $C$ belong to the $\DDD'$.
\end{proof}

 Combining \cref{lem-compute-trim,lem-compute-flex-SCC}, we obtain the following result.

\begin{lemma}[Computing all good sequences]\label{lem-compute-good-seq}
The set of all good sequences can be computed in $O(\Delta |\VV|^2 |\Sigma|^2 +\Delta^8 |\VV|^5)$ time, for any given $\LCL$ problem $\Pi=(\Delta, \Sigma, \VV,\EE)$.
\end{lemma}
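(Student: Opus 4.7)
The plan is to explore the tree of all possible good sequences level by level, using \cref{lem-compute-trim,lem-compute-flex-SCC} as subroutines, and to bound the total work by combining a disjointness argument on branch sizes with the stabilization bound on the depth.

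Concretely, I would maintain for each level $i \geq 1$ a collection $N_i$ of the distinct sets $\VV_i$ reachable by some partial good sequence. We start with $N_1 = \{\trim(\VV)\}$, computed in $O(\Delta |\VV||\Sigma|^2)$ time by \cref{lem-compute-trim}. To go from level $i$ to level $i+1$, for each $W \in N_i$ we first compute $\flexSCC(\DDD_W)$ together with all restrictions $W \upharpoonright_{\DDD'}$ in time $O(\Delta^8 |W|^4)$ by \cref{lem-compute-flex-SCC}, and then apply $\trim$ to each restriction to obtain the candidates for $\VV_{i+1}$, each at cost $O(\Delta |W \upharpoonright_{\DDD'}||\Sigma|^2)$ by \cref{lem-compute-trim}. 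The exploration terminates along any branch as soon as the new set is either empty (the sequence ends) or equal to the previous set (the sequence has stabilized, witnessing $d_\Pi = \infty$ as discussed earlier in \cref{sect-time}); detecting these conditions is cheap.

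The key accounting step is to bound $\sum_{W \in N_i} |W|$ at every level. First note that for a fixed $W$, the elements of $\flexSCC(\DDD_W)$ are pairwise disjoint subsets of $\DDD_W$, so the corresponding restrictions $W \upharpoonright_{\DDD'}$ are pairwise disjoint subsets of $W$: any $C$ lying in two restrictions would have all its $\binom{\Delta}{2}$ size-$2$ sub-multisets inside two disjoint sets, which is impossible for $\Delta \geq 2$. Since $\trim$ only shrinks sets, the children $\VV_{i+1}$ produced from a single parent $W$ satisfy $\sum_{\DDD'} |\trim(W \upharpoonright_{\DDD'})| \leq |W|$, and therefore by induction $\sum_{W' \in N_{i+1}} |W'| \leq \sum_{W \in N_i} |W| \leq |\VV|$. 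Using $\sum_W |W|^4 \leq \bigl(\sum_W |W|\bigr)^4 \leq |\VV|^4$, the total work at a single level is
\[
\sum_{W \in N_i} O\bigl(\Delta^8 |W|^4 + \Delta |W||\Sigma|^2\bigr) = O\bigl(\Delta^8 |\VV|^4 + \Delta |\VV||\Sigma|^2\bigr).
\]
Finally, as argued at the beginning of \cref{sect-time}, along any branch either $\VV_i = \emptyset$ or the sequence stabilizes within at most $|\VV|+1$ steps (because $\VV_1 \supseteq \VV_2 \supseteq \cdots$ must either strictly decrease or repeat), so the total number of levels to explore is $O(|\VV|)$. Multiplying gives the claimed bound $O(\Delta |\VV|^2 |\Sigma|^2 + \Delta^8 |\VV|^5)$.

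The only substantive obstacle is the disjointness-of-restrictions argument that forces $\sum_{W \in N_i} |W|$ to stay bounded by $|\VV|$ across the entire level; without this, a naive analysis would allow the number of reachable $\VV_i$'s at level $i$ to blow up exponentially in $i$. Everything else (applying the subroutines of \cref{lem-compute-trim,lem-compute-flex-SCC}, detecting stabilization/emptiness, and bounding the depth by $|\VV|+1$) is routine bookkeeping on top of this structural observation.
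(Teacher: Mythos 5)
Your proposal is correct and follows essentially the same route as the paper: invoke \cref{lem-compute-trim,lem-compute-flex-SCC} per node of the recursion, use the disjointness of the restrictions $W\upharpoonright_{\DDD'}$ (and hence of all sets at a given recursion depth) to bound the per-level cost by $O(\Delta|\VV||\Sigma|^2+\Delta^8|\VV|^4)$, and bound the depth by $|\VV|$ via the stabilize-or-shrink argument. Your explicit justification of why the restrictions are pairwise disjoint is a detail the paper leaves implicit, but the accounting is identical.
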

\begin{proof}
Combining \cref{lem-compute-trim,lem-compute-flex-SCC}, we infer that given $\VV_i \subseteq \VV$, the cost of computing all possible $(\DDD_{i}, \VV_{i+1})$ is $O(\Delta |\VV_i| |\Sigma|^2 + \Delta^8 |\VV_i|^4)$ time, as the set of $\VV_i \upharpoonright_{\DDD'}$ over all $\DDD' \in \flexSCC(\DDD_{i})$ are disjoint subsets of $\VV_i$.

Since all the sets $\VV_i$  in the depth $i$ of the recursion are disjoint subsets of $\VV$, the total cost for the depth $i$ of the recursion is $O(\Delta |\VV| |\Sigma|^2 + \Delta^8 |\VV|^4)$. 

The recursion stops when $\VV_i = \VV_{i-1}$ or $\VV_i = \emptyset$, so the depth of the recursion is at most $|\VV|$. The reason is that we must have $|\VV_{i}| < |\VV_{i-1}|$ if  $\VV_i \neq \VV_{i-1}$ and $\VV_i \neq \emptyset$. Therefore, the total cost of computing all good sequences is $O(\Delta |\VV|^2 |\Sigma|^2 +\Delta^8 |\VV|^5)$.
\end{proof}

We are ready to prove  \cref{thm-unrooted-poly-time}.

\begin{proof}[Proof of \cref{thm-unrooted-poly-time}]
By \cref{lem-compute-good-seq}, the set of all good sequences can be computed in polynomial time, and we can compute $d_\Pi$ given the set of all good sequences.  If $d_\Pi = k$ is a positive integer, then from the discussion in \cref{sect-upper} we know how to turn a good sequence
$(\VV_1, \DDD_1, \VV_2, \DDD_2, \ldots, \VV_k)$ into a description of an  $O(n^{1/k})$-round algorithm for $\Pi$. If $d_\Pi = \infty$, then similarly a good sequence
$(\VV_1, \DDD_1, \VV_2, \DDD_2, \ldots, \VV_{O(\log n)})$ leads to a description of an $O(\log n)$-round algorithm for $\Pi$.
\end{proof}

\section{Rooted trees}\label{sec:rooted}

In this section, we give a polynomial-time-computable characterization of $\LCL$ problems for regular rooted trees with complexity $O(\log n)$ or $\Theta(n^{1/k})$  for any positive integer $k$.

\subsection{Locally checkable labeling for rooted trees} 
 A rooted tree is a tree where each edge is oriented in such a way that the outdegree of each node is at most $1$.
 A $\delta$-regular rooted tree is a rooted tree where the indegree of each node is either $0$ or  $\delta$.
 The root of a rooted tree is the unique node $v$ with $\outdeg(v) = 0$. Each node $v$ with $\indeg(v) = 0$ is called a leaf.
 For each directed edge $u \rightarrow v$, we say that $u$ is a child of $v$ and $v$ is the parent of $u$. 
An $\LCL$ problem for $\delta$-regular rooted trees is defined as follows.

\begin{definition}[$\LCL$ problems for regular rooted trees]\label{def-lcl-rooted}
For rooted trees, an $\LCL$ problem $\Pi=(\delta, \Sigma,\CC)$ is defined by the following  components.
\begin{itemize}
    \item $\delta$ is a positive integer specifying the maximum indegree. 
    \item $\Sigma$ is a finite set of labels.
    \item $\CC$ is a set of pairs $(\sigma, S)$ such that $\sigma \in \Sigma$ and $S$ is a size-$\delta$ multiset of labels in $\Sigma$.
\end{itemize}
\end{definition}

For notational simplicity, we also write $(\sigma \ : \ a_1 a_2 \cdots a_\delta)$ to denote $(\sigma, S)$, where $\sigma \in \Sigma$ and $S = \{a_1, a_2, \ldots, a_\delta\}$ is a size-$\delta$ multiset of elements in $\Sigma$. 
We call any such $(\sigma, S)$ a node configuration. A node configuration $(\sigma, S)$ is correct if  $(\sigma, S) \in \CC$. In this sense, the set $\CC$ specifies the node constraint.
We define the correctness criteria for a  labeling in \cref{def-correctness-rooted}. 
 
\begin{definition}[Correctness criteria]\label{def-correctness-rooted}
Let $G=(V,E)$ be a rooted tree whose maximum indegree is at most $\delta$.  A solution of $\Pi=(\delta,\Sigma,\CC)$ on $G$ is a labeling that assigns a label in $\Sigma$ to each node in $G$.
\begin{itemize}
    \item For each node $v\in V$ with $\indeg(v) = \delta$, we define its node configuration $C = (\sigma \ : \ a_1 a_2 \cdots a_\delta)$ by setting $\sigma$ as the  label of $v$ and setting $\{a_1, a_2, \ldots, a_\delta\}$ as the multiset of the labels of the $\delta$ children of $v$. We say that the labeling is locally-consistent on $v$ if   $C \in \CC$.
\end{itemize}
 The labeling  is a correct solution if it is locally-consistent on all $v\in V$ with $\indeg(v) = \delta$.
\end{definition}

  Similarly, although  $\Pi=(\delta,\Sigma,\CC)$ is defined for $\delta$-regular rooted trees, \cref{def-correctness-rooted} applies to all rooted trees whose maximum indegree is at most $\delta$.
  We may focus on $\delta$-regular rooted trees without loss of generality, as for any rooted tree $G$ whose maximum indegree is at most $\delta$,
  we may consider the rooted tree $G^\ast$ which is the result of appending leaf nodes to all nodes $v$ in $G$ with $1 < \indeg(v) < \delta$ to increase the indegree of $v$ to $\delta$. This only blows up the number of nodes by at most a $\delta$ factor. Any correct solution of $\Pi$ on $G^\ast$ restricted to $G$ is a correct solution of  $\Pi$ on $G$.

\cref{def-complete-trees-2} is the same as \cref{def-complete-trees} except that we change $\Delta-1$ to $\delta$.

\begin{definition}[Complete trees of height $i$]\label{def-complete-trees-2} We define the rooted trees $T_i$ recursively as follows.
 \begin{itemize}
     \item $T_0$ is the trivial tree with only one node.
     \item $T_i$ is the result of appending $\delta$ trees $T_{i-1}$ to the root $r$.
 \end{itemize}
\end{definition} 

Observe that $T_i$ is the unique maximum-size rooted tree of maximum indegree $\delta$ and radius $i$. All nodes within distance $i-1$ to the root $r$ in  $T_i$ have indegree $\delta$. All nodes whose distance to $r$ is exactly $i$ are leaf nodes.

\begin{definition}[Trimming]
Given a subset $\tilde{\Sigma} \subseteq \Sigma$ of labels, we define $\trim(\tilde{\Sigma})$ as a set of all labels $\sigma \in \tilde{\Sigma}$ such that for each $i \geq 1$ it is possible to find a correct labeling of $T_i$ such that the label of the root is $\sigma$ and the label of the remaining nodes are in $\tilde{\Sigma}$.
\end{definition}

Given any rooted tree $G$ of maximum indegree $\delta$, after labeling the a node $v$ with $\indeg(v)=\delta$ with a label $\sigma \in \trim(\tilde{\Sigma})$, it is always possible to extend this labeling to a complete correct labeling of the subtree rooted at $v$ using only node configurations in $\trim(\tilde{\Sigma})$. Such a labeling extension is possible due to \cref{lem-trimming-2}.

\begin{lemma}[Property of trimming]\label{lem-trimming-2}
Let $\Pi=(\delta,\Sigma,\CC)$ be an $\LCL$ problem.
Let   $\tilde{\Sigma} \subseteq \Sigma$ such that  $\trim(\tilde{\Sigma}) \neq \emptyset$. 
For each label $\sigma \in \trim(\tilde{\Sigma})$, there exists a node configuration  $(\sigma \ : \ a_1 a_2 \cdots a_\delta) \in \CC$ such that $a_i \in \trim(\tilde{\Sigma})$ for all $1 \leq i \leq \delta$.
\end{lemma}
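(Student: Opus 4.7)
The plan is to mirror the proof of \cref{lem-trimming} almost verbatim, adapted to the rooted setting where the ``node configuration'' at a node $v$ is the pair (label of $v$, multiset of labels of its $\delta$ children) rather than the multiset of half-edge labels around $v$. I will argue by contradiction: assume that for every node configuration $(\sigma : a_1 a_2 \cdots a_\delta) \in \CC$, at least one of the children labels $a_i$ lies in $\tilde{\Sigma} \setminus \trim(\tilde{\Sigma})$ (where configurations with some $a_i \notin \tilde\Sigma$ are likewise excluded since any witnessing tree uses only labels in $\tilde\Sigma$).

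First I would choose an appropriate depth parameter $s$ to invoke the definition of $\trim$. For each label $\tau \in \tilde{\Sigma}\setminus \trim(\tilde{\Sigma})$, the failure of $\tau$ to lie in $\trim(\tilde{\Sigma})$ means that there is a smallest height $s_\tau \geq 1$ such that no correct labeling of $T_{s_\tau}$ has root label $\tau$ and remaining labels in $\tilde{\Sigma}$. Since $\tilde{\Sigma}$ is finite, I set $s = \max_{\tau \in \tilde{\Sigma}\setminus \trim(\tilde{\Sigma})} s_\tau$; then for every $\tau \in \tilde{\Sigma}\setminus \trim(\tilde{\Sigma})$ there is no correct labeling of $T_{s}$ whose root is labeled $\tau$ and whose remaining nodes carry labels in $\tilde{\Sigma}$. (If $\tilde\Sigma \setminus \trim(\tilde\Sigma) = \emptyset$ the conclusion of the lemma is immediate from $\sigma \in \trim(\tilde\Sigma)$, so I may assume this max is well-defined.)

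Next, since $\sigma \in \trim(\tilde{\Sigma})$, there exists a correct labeling of $T_{s+1}$ whose root has label $\sigma$ and all of whose other nodes are labeled from $\tilde{\Sigma}$. Reading off the node configuration at the root gives some $(\sigma : a_1 a_2 \cdots a_\delta) \in \CC$ with each $a_i \in \tilde{\Sigma}$. By the standing assumption, at least one child $w$ of the root carries a label $a_i \in \tilde{\Sigma}\setminus \trim(\tilde{\Sigma})$. The key geometric observation is that the subtree of $T_{s+1}$ rooted at $w$ is isomorphic to $T_s$, and its labeling is correct, has root label $a_i \in \tilde{\Sigma}\setminus \trim(\tilde{\Sigma})$, and uses only labels from $\tilde{\Sigma}$ elsewhere. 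This contradicts the choice of $s$.

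The proof is essentially a straightforward rooted analogue of \cref{lem-trimming}, so there is no substantial obstacle; the only thing to be careful about is the quantifier structure in the selection of $s$ (taking the maximum over the finitely many ``bad'' labels so that $s$ is a single height that simultaneously witnesses the failure of all elements of $\tilde{\Sigma}\setminus \trim(\tilde{\Sigma})$), and to handle the trivial edge case $\tilde{\Sigma} = \trim(\tilde{\Sigma})$ separately.
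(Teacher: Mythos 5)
Your proposal is correct and follows essentially the same contradiction argument as the paper's proof: choose a height $s$ at which every label in $\tilde{\Sigma}\setminus\trim(\tilde{\Sigma})$ simultaneously fails to admit a correct labeling of $T_s$, take a witnessing correct labeling of $T_{s+1}$ with root label $\sigma$, and observe that a child labeled by a non-trimmed label would head a copy of $T_s$ whose labeling cannot be correct. Your explicit handling of the quantifier over the finitely many bad labels (taking the maximum of the $s_\tau$) and of the trivial case $\trim(\tilde{\Sigma})=\tilde{\Sigma}$ matches what the paper does implicitly.
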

\begin{proof}
Assuming that such a node configuration $(\sigma \ : \ a_1 a_2 \cdots a_\delta) \in \CC$ do not exist,  we derive a contradiction as follows.
We pick $s$ to be the smallest number such that there is no correct labeling of $T_s$ where the label of the root $r$ is in  $\tilde{\Sigma} \setminus \trim(\tilde{\Sigma})$ and the label of each remaining node of $T_s$ is in $\tilde{\Sigma}$.   Such a number $s$ exists due to the definition of  $\trim$.

Now consider a correct labeling of $T_{s+1}^\ast$ where the label of the root $r$ is $\sigma$ and the label of each remaining node is in $\tilde{\Sigma}$. Such a correct labeling exists due to the fact that $\sigma \in \trim(\tilde{\Sigma})$.  Our assumption on the non-existence of $(\sigma \ : \ a_1 a_2 \cdots a_\delta) \in \CC$ such that $a_i \in \trim(\tilde{\Sigma})$ for all $1 \leq i \leq \delta$ implies that the label $b_i$ of one child $u_i$ of the root $r$  of $T_{s+1}$ must be in $\tilde{\Sigma} \setminus \trim(\tilde{\Sigma})$. However, the subtree of $T_{s+1}$ rooted at $u_i$ is isomorphic to the rooted tree $T_{s}$. Since the label $b_i$ of $u_i$ is in  $\tilde{\Sigma} \setminus \trim(\tilde{\Sigma})$, our choice of $s$ implies that the labeling of the subtree of $T_{s+1}$ rooted at $u_i$ cannot be correct, which is a contradiction.
\end{proof}

\subparagraph{Restriction of an $\LCL$ problem}
Given a subset $\tilde{\Sigma} \subseteq \Sigma$ of labels, we define the restriction of $\Pi$ to $\tilde{\Sigma}$ as follows.
\begin{align*}
    \Pi \upharpoonright_{\tilde{\Sigma}} &= (\tilde{\Sigma}, \tilde{\CC}),\\
    \text{where} \ \ \tilde{\CC} &= \ \text{\parbox{0.5\textwidth}{the set of all $(\sigma \ : \ a_1 a_2 \cdots a_\delta) \in \CC$ such that $\sigma \in \tilde{\Sigma}$ and $a_i \in \tilde{\Sigma}$ for all $1 \leq i \leq \delta$.}}
\end{align*}

That is,
$\Pi \upharpoonright_{\tilde{\Sigma}}$ 
is simply the result of removing all labels and node configurations in $\Pi$ that involve labels not in $\tilde{\Sigma}$.

\subparagraph{Path-form of an LCL problem}
Given  an $\LCL$ problem $\Pi=(\delta, \Sigma, \CC)$, we define  its path-form $\PathPi=(1, \Sigma, \PathCC)$ as follows.
\[\PathCC = \text{\parbox{0.7\textwidth}{the set of all $(\sigma \ : \ a)$ such that there exists $(\sigma \ : \ a_1 a_2 \cdots a_\delta) \in \CC$ such that $a \in \{a_1, a_2, \ldots, a_\delta\}$.}}\]

\subparagraph{Automaton for the path-form of an LCL problem}
The path-form $\PathPi=(1, \Sigma, \PathCC)$  of an $\LCL$ problem $\Pi=(\delta, \Sigma, \CC)$ can be interpreted as a directed graph, where the node set is $\Sigma$ and the edge set is $\PathCC$, by viewing each $(\sigma \ : \ a) \in \PathCC$ as a directed edge $a \rightarrow \sigma$.

\subparagraph{Path-flexibility}   With respect to  $\PathPi=(1, \Sigma, \PathCC)$,  we say that $\sigma \in \Sigma$ is path-flexible if there exists an integer $K$ such that for each integer $s \geq K$, there exist a length-$s$ walk $\sigma \leadsto \sigma$ in the directed graph of $\PathPi=(1, \Sigma, \PathCC)$.  
\cref{lem-inflex-2} is useful in lower bound proofs.

\begin{lemma}[Property of path-inflexibility]\label{lem-inflex-2}
Suppose that $\sigma \in \Sigma$ is not path-flexible with respect to $\PathPi=(1, \Sigma, \PathCC)$. Then one of the following holds.
\begin{itemize}
    \item There is no walk $\sigma \leadsto \sigma$ in the directed graph of $\PathPi=(1, \Sigma, \PathCC)$.
    \item There exists an integer $2 \leq x \leq |\Sigma|$ such that for any positive integer $k$ that is not an integer multiple of $x$, there is no length-$k$ walk $\sigma \leadsto \sigma$ in the directed graph of $\PathPi=(1, \Sigma, \PathCC)$.
\end{itemize}
\end{lemma}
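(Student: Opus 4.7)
The plan is to mirror the structure of the proof of \cref{lem-inflex} for the unrooted case, but in a simpler form, because path-flexibility in rooted trees only requires a single family of walks $\sigma \leadsto \sigma$ rather than four separate walk types. I would assume that $\sigma$ is not path-flexible and that at least one walk $\sigma \leadsto \sigma$ exists in the directed graph of $\PathPi$; otherwise the first bullet of the conclusion holds directly. The goal under these assumptions is to exhibit the integer $x$ promised by the second bullet.

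Next I would define $U$ as the set of positive integers $k$ such that there exists a length-$k$ walk $\sigma \leadsto \sigma$ in $\PathPi$. By hypothesis, $U \neq \emptyset$. A key observation is that $U$ is closed under addition: concatenating a length-$k_1$ walk $\sigma \leadsto \sigma$ with a length-$k_2$ walk $\sigma \leadsto \sigma$ yields a length-$(k_1+k_2)$ walk $\sigma \leadsto \sigma$. I then set $x := \gcd(U)$ and aim to show $2 \leq x \leq |\Sigma|$, after which the second bullet follows because by definition every element of $U$ is a multiple of $x$, so no length-$k$ walk $\sigma \leadsto \sigma$ exists when $k$ is not an integer multiple of $x$.

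The main obstacle is establishing $x \geq 2$, and I would argue by contradiction. If $x = 1$ then there exist two coprime elements $k_1, k_2 \in U$. By the Sylvester--Frobenius (Chicken McNugget) theorem, every integer strictly greater than $k_1 k_2 - k_1 - k_2$ is a nonnegative integer combination $a k_1 + b k_2$, and by closure of $U$ under addition, every such combination lies in $U$. Hence $U$ contains every integer at least $K := k_1 k_2 - k_1 - k_2 + 1$, which means that for every $s \geq K$ there is a length-$s$ walk $\sigma \leadsto \sigma$, making $\sigma$ path-flexible and contradicting the hypothesis. Therefore $x \geq 2$.

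Finally, for the upper bound I observe that the smallest element of $U$ is at most $|\Sigma|$: any shortest closed walk through $\sigma$ is a simple cycle through $\sigma$, which visits at most $|\Sigma|$ distinct nodes and hence has length at most $|\Sigma|$. Since $x = \gcd(U)$ divides this smallest element, we obtain $x \leq |\Sigma|$, completing the proof. I expect the combinatorial step invoking Frobenius to be the only nontrivial part; the rest is routine graph-walk manipulation.
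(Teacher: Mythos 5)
Your proposal is correct and follows essentially the same route as the paper's proof: define $U$ as the set of closed-walk lengths through $\sigma$, set $x=\gcd(U)$, rule out two coprime elements via the Frobenius number to get $x\geq 2$, and bound $x$ by the length of a shortest closed walk to get $x\leq|\Sigma|$. The only cosmetic difference is that you make the closure of $U$ under addition and the simple-cycle argument explicit, which the paper leaves implicit.
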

\begin{proof}
Since $\sigma$ is not path-flexible,  for any integer $K$ there is an integer $k \geq K$ such that there is no length-$k$ walk $s \leadsto t$. Let $U$ be the set of integers $k$ such that there is a length-$k$ walk $\sigma \leadsto \sigma$ in the directed graph of $\PathPi=(1, \Sigma, \PathCC)$. If $U = \emptyset$, then there is no walk $\sigma \leadsto \sigma$, so the lemma statement holds. For the rest of the proof, we assume $U \neq \emptyset$.
We choose $x=\gcd(U)$ to be the greatest common divisor of $U$. For any integer $k$ that is not an integer multiple of $x$, there is no length-$k$ walk $\sigma \leadsto \sigma$ in  the directed graph of $\PathPi=(1, \Sigma, \PathCC)$.
We must have $x \geq 2$ because there cannot be two co-prime numbers in $U$, since otherwise there exists an integer $K$ such that $U$ includes all integers that are at least $K$, implying that $\sigma$ is  path-flexible.
We also have $x \leq  |\Sigma|$, since the smallest number of $U$ is at most the number of nodes in the directed graph of $\PathPi=(1, \Sigma, \PathCC)$, which is $|\Sigma|$. 
\end{proof}

\subparagraph{Path-flexible  strongly connected components} 
For any strongly connected component $U \subseteq \Sigma$ of the directed graph of $\PathPi=(1, \Sigma, \PathCC)$, it is clear that either all $\sigma \in U$ are path-flexible or  all $\sigma \in U$ are not path-flexible. We say that a strongly connected component $U$ is path-flexible  if  all $\sigma \in U$ are path-flexible. We define
$\flexibility(U)$ as the minimum number $K$ such that  for each integer $k \geq K$ there is an $a \leadsto b$ walk of length $k$ for all choices of source   $a \in U$ and destination  $b \in U$.
It is clear that such a number $K$ exists given that $U$ is a path-flexible strongly connected component.

Given any $\LCL$ problem $\Pi=(\delta, \Sigma, \CC)$ and any $\tilde{\Sigma} \subseteq \Sigma$, we define $\flexSCC(\tilde{\Sigma})$ as the set of all subsets  $U \subseteq \tilde{\Sigma}$ that is a path-flexible strongly connected component of the directed graph of the path-form of $\Pi \upharpoonright_{\tilde{\Sigma}}$.
It is possible that $\flexSCC(\tilde{\Sigma})$ is an empty set, and this happens when all nodes in the directed graph of the path-form of $\Pi \upharpoonright_{\tilde{\Sigma}}$ are not path-flexible.

\cref{lem-flex-scc-2} shows that if we label the two endpoints $v_1$ and $v_{d+1}$ of a sufficiently long directed path $v_1 \leftarrow v_2 \leftarrow \cdots \leftarrow v_{d+1}$ using only labels in $U$, where $U \in \flexSCC(\tilde{\Sigma})$, then it is always possible to complete the labeling of the path using only labels in $U$ in such a way that the entire labeling is correct with respect to $\Pi \upharpoonright_{\tilde{\Sigma}}$.
Specifically, suppose the label of $v_1$ is $\alpha$ and the label of $v_{d+1}$ is $\beta$. Then \cref{lem-flex-scc-2} shows that it is possible to complete the labeling of $v_1 \leftarrow v_2 \leftarrow \cdots \leftarrow v_{d+1}$ by labeling $v_i$ with $\sigma_i \in U$.

\begin{lemma}[Property of path-flexible strongly connected components]\label{lem-flex-scc-2}
Consider any $\LCL$ problem $\Pi=(\delta, \Sigma, \CC)$ and any $\tilde{\Sigma} \subseteq \Sigma$.
Let $U \in \flexSCC(\tilde{\Sigma})$.
For any choices of $\alpha \in U$, $\beta \in U$, and a number $d \geq \flexibility(U)$, there exists a sequence
\[\sigma_1, \sigma_2, \ldots, \sigma_{d+1}\]
of labels in $U$ satisfying the following conditions.
\begin{itemize}
    \item First endpoint: $\sigma_1 = \alpha$.
    \item Last endpoint: $\sigma_{d+1} = \beta$.
    \item Node configurations: for $1 \leq i \leq d$, there is a node configuration $(\sigma \ : \ a_1 a_2 \cdots a_\delta) \in \CC$ meeting the following conditions.
    \begin{itemize}
        \item $\sigma_i = \sigma$.
        \item There is an index $j$ such that $\sigma_{i+1} = a_j$.
        \item $a_l \in \tilde{\Sigma}$ for all $1 \leq l \leq \delta$.
    \end{itemize}
\end{itemize}
\end{lemma}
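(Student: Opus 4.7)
The plan is to obtain the sequence $\sigma_1, \ldots, \sigma_{d+1}$ directly from a walk guaranteed by the path-flexibility of $U$, and then unpack the definition of the path-form to recover the required node configurations. The key observation is that the desired condition---for each $i$, some $(\sigma_i \ : \ a_1 \cdots a_\delta) \in \CC$ witnesses $\sigma_{i+1}$ as one of the $a_j$ with all $a_l \in \tilde{\Sigma}$---is literally equivalent to saying that $\sigma_{i+1} \to \sigma_i$ is a directed edge in the directed graph of the path-form of $\Pi \upharpoonright_{\tilde{\Sigma}}$.

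First, I would apply the definition of $\flexibility(U)$: since $d \geq \flexibility(U)$ and $\alpha, \beta \in U$, there exists a length-$d$ walk from $\beta$ to $\alpha$ in the directed graph of the path-form of $\Pi \upharpoonright_{\tilde{\Sigma}}$. Writing this walk as $\beta = w_0 \to w_1 \to \cdots \to w_d = \alpha$ and reindexing by $\sigma_i := w_{d+1-i}$, the endpoint conditions $\sigma_1 = \alpha$ and $\sigma_{d+1} = \beta$ are immediate, and each edge of the walk becomes an edge $\sigma_{i+1} \to \sigma_i$ in the path-form. Unpacking the definitions of $\PathCC$ and of $\Pi \upharpoonright_{\tilde{\Sigma}}$, such an edge arises from precisely some $(\sigma_i \ : \ a_1 \cdots a_\delta) \in \CC$ with $\sigma_{i+1} \in \{a_1, \ldots, a_\delta\}$ and all $a_l \in \tilde{\Sigma}$, which is exactly the node configuration condition demanded by the lemma.

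The only subtlety I foresee is verifying that every intermediate $\sigma_i$ lies in $U$, not merely in $\tilde{\Sigma}$. This is a standard strongly-connected-component argument: if $v$ is any node on a walk $\beta \leadsto \alpha$ with $\alpha, \beta \in U$, then there is a walk $v \leadsto \alpha$ (the suffix of the walk) and a walk $\alpha \leadsto v$ (obtained by concatenating $\alpha \leadsto \beta$, which exists by strong connectivity of $U$, with the prefix of the walk from $\beta$ to $v$), so $v$ lies in the same SCC as $\alpha$, which is $U$. With this in hand the lemma is complete; I do not anticipate any genuine obstacle, and the main thing to keep straight is the direction convention, since edges in the path-form go from child to parent whereas the sequence in the statement is indexed from parent downward toward the leaf.
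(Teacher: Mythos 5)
Your proposal is correct and follows essentially the same route as the paper: take a length-$d$ walk $\beta \leadsto \alpha$ guaranteed by $d \geq \flexibility(U)$, read off the sequence $\sigma_1,\ldots,\sigma_{d+1}$ from it, and unpack each edge of the path-form of $\Pi\upharpoonright_{\tilde{\Sigma}}$ into a witnessing node configuration in $\CC$. Your explicit SCC argument that the intermediate labels lie in $U$ is a detail the paper leaves implicit, and you handle the direction convention correctly.
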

\begin{proof}
By the path-flexibility of $U$, there exists a length-$d$ walk $\beta \leadsto \alpha$ in the directed graph of the path-form of $\Pi \upharpoonright_{\tilde{\Sigma}}$. We fix \[\sigma_1 \leftarrow \sigma_2 \leftarrow \cdots \leftarrow \sigma_{d+1}\] to be any such walk. For $1 \leq i \leq d$, $\sigma_i \leftarrow \sigma_{i+1}$ is a directed edge in the path-flexible strongly connected component $U$, meaning that $(\sigma_i \ : \ \sigma_{i+1})$ is a node configuration in the path-form of $\Pi \upharpoonright_{\tilde{\Sigma}}$, so there exists a node configuration $(\sigma_i \ : \ a_1 a_2 \cdots a_\delta)$ in $\Pi \upharpoonright_{\tilde{\Sigma}}$ such that $\sigma_{i+1} \in \{a_1, a_2, \ldots, a_\delta\}$.
\end{proof}

\subparagraph{Good sequences} Given an $\LCL$ problem $\Pi=(\delta, \Sigma,\CC)$ on $\delta$-regular rooted trees, we say that a sequence
\[(\SigmaR{1}, \SigmaC{1}, \SigmaR{2}, \SigmaC{2}, \ldots, \SigmaR{k})\]
is \emph{good} if it satisfies the following requirements.
\begin{itemize}
    \item $\SigmaR{1} = \trim(\Sigma)$. That is, we start the sequence from the result of trimming the set $\Sigma$ of all labels in the given $\LCL$ problem $\Pi=(\delta, \Sigma,\CC)$.
    \item  For each $1 \leq i \leq k-1$, $\SigmaC{i} \in \flexSCC(\SigmaR{i})$.  That is, $\SigmaC{i}$ is a path-flexible connected component of the automaton associated with the path-form of the $\LCL$ problem $\Pi \upharpoonright_{\SigmaR{i}}$.
    \item  For each $2 \leq i \leq k$, $\SigmaR{i} = \trim(\SigmaC{i-1})$. That is, $\SigmaR{i}$ is the result of trimming the set $\SigmaC{i-1}$.
    \item $\SigmaR{k} \neq \emptyset$. That is, we require that the last set of labels is non-empty.
\end{itemize}

It is straightforward to see that $\Sigma \supseteq \SigmaR{1} \supseteq \SigmaC{1} \supseteq \SigmaR{2} \supseteq \SigmaC{2} \supseteq \cdots \supseteq \SigmaR{k} \neq \emptyset$.

\subparagraph{Depth of an LCL problem}
We define the depth $d_\Pi$ of an $\LCL$ problem $\Pi=(\delta, \Sigma,\CC)$ on $\delta$-regular rooted trees as follows.
If there is no good sequence, then we set $d_\Pi = 0$.
If there is a good sequence $(\SigmaR{1}, \SigmaC{1}, \SigmaR{2}, \SigmaC{2}, \ldots, \SigmaR{k})$ for each positive integer $k$, then we set $d_\Pi = \infty$.
Otherwise, we set  $d_\Pi$ as  the largest integer $k$ such that there is a good sequence $(\SigmaR{1}, \SigmaC{1}, \SigmaR{2}, \SigmaC{2}, \ldots, \SigmaR{k})$.
We  prove the following results.

\begin{theorem}[Characterization of complexity classes]\label{thm-unrooted-characterization-2}
Let $\Pi=(\delta, \Sigma,\CC)$ be an $\LCL$ problem on $\delta$-regular rooted trees. We have the following.
\begin{itemize}
    \item If $d_\Pi = 0$, then $\Pi$ is unsolvable in the sense that there exists a rooted tree of maximum indegree $\delta$ such that there is no correct solution of $\Pi$ on this rooted tree.
    \item If $d_\Pi = k$ is a positive integer, then the optimal round complexity of $\Pi$ is $\Theta(n^{1/k})$.
    \item If $d_\Pi = \infty$, then $\Pi$ can be solved in $O(\log n)$ rounds.
\end{itemize}
\end{theorem}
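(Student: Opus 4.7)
The plan is to mirror the strategy used for \cref{thm-unrooted-characterization}, adapting each part to the rooted setting. The theorem breaks into three cases matching $d_\Pi = 0$, $d_\Pi = k \in \mathbb{Z}^+$, and $d_\Pi = \infty$, and for each case I need a matching upper and lower bound. The rooted $\trim$ (\cref{lem-trimming-2}) and the rooted path-flexible strongly connected components (\cref{lem-flex-scc-2}) play exactly the structural roles played in the unrooted proof by \cref{lem-trimming,lem-flex-scc}, and the overall scaffolding transfers, but two ingredients must be rebuilt from scratch: a rake-and-compress decomposition that respects the root and the parent-child orientation, and a new lower-bound graph.

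For the upper bound, if $d_\Pi = k$ I fix a good sequence $(\SigmaR{1}, \SigmaC{1}, \ldots, \SigmaR{k})$ and construct a rooted analogue of the $(\gamma, \ell, L)$-decomposition of \cite{Chang20} with $\gamma = O(n^{1/k})$, $L = k$, and $\ell > \max_i \flexibility(\SigmaC{i})$. Labels are then assigned top-down over layers $\VR{L}, \VC{L-1}, \VR{L-1}, \ldots$ exactly as in \cref{lem-rake-and-compress-use}: each rake block is a subtree whose boundary-adjacent node is already labeled in $\SigmaR{i+1} \subseteq \SigmaC{i}$, and \cref{lem-trimming-2} (applied to $\SSS = \SigmaC{i}$, noting $\SigmaR{i+1} = \trim(\SigmaC{i})$) lets us extend the labeling downward using only labels in $\SigmaR{i+1}$; each compress block is a directed path of length at least $\flexibility(\SigmaC{i})$ with endpoints labeled in $\SigmaR{i+1} \subseteq \SigmaC{i}$, and \cref{lem-flex-scc-2} completes the path using labels in $\SigmaC{i}$. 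For $d_\Pi = \infty$, I take a good sequence of length $L = O(\log n)$ together with a decomposition with $\gamma = 1$ and $L = O(\log n)$ layers, which runs in $O(\log n)$ rounds. The case $d_\Pi = 0$ is immediate: $\trim(\Sigma) = \emptyset$ implies by definition that some $T_i$ admits no correct labeling.

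For the lower bound when $d_\Pi = k$, I build a rooted lower-bound graph $G$ recursively, analogous to \cref{def-lb-graphs} but with every edge oriented toward the root: rake blocks are copies of the complete rooted tree $T_\gamma$, compress blocks are directed paths of $s = \Theta(t)$ nodes with rake blocks hanging off as children, and the recursion has $k+1$ levels. The size bound $n = O(t^k)$ gives $t = \Omega(n^{1/k})$. Given a putative $t$-round algorithm $\A$, I define $\SigmaR{v}^\A$ to be the set of possible output labels of $v$ and $\SigmaC{v}^\A$ the analogous set at central compress nodes, and prove by induction over layers (as in \cref{lem-base-case,lem-inductive-step-CR,lem-inductive-step-RC}) that the sets $\SigmaR{i}$, $\SigmaC{i}$ obtained by unioning over nodes in each layer form a good sequence of length $k+1$. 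The base case uses the rooted trimming lemma, the inductive step for $\SigmaR{i}$ again invokes \cref{lem-trimming-2}, and the inductive step for $\SigmaC{i}$ argues that the two-directional indistinguishability at central nodes forces $\SigmaC{i}$ to sit inside one strongly connected component of the path-form of $\Pi \upharpoonright_{\SigmaR{i}}$, with flexibility obtained from \cref{lem-inflex-2} by varying the distance between two chosen central nodes by increments up to $|\Sigma|$. Non-emptiness of $\SigmaR{k+1}$ follows from the correctness of $\A$ at the root, contradicting $d_\Pi = k$.

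The main obstacle is exactly the one the paper flags in the technical overview: since edges are oriented, the rooted graph $G$ does not enjoy the two-sided symmetry that the unrooted construction exploits, so I cannot use the pair representation $(a,b), (b,a)$ and its reversal to obtain walks in both directions along a compress path. I will instead need each compress block in $G$ to be a directed path whose orientation is consistent with the global rooting, and the strongly-connected-component argument must be carried out using only directed walks in the path-form automaton on $\Sigma$ rather than on $V(\M_\DDD)$. This is why the rooted notion of $\flexSCC$ (\cref{def-complete-trees-2} and following text) works directly on $\Sigma$ and why \cref{lem-inflex-2} uses $|\Sigma|$ rather than $|\Sigma|^2$. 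A secondary technical point is verifying that a suitable rooted rake-and-compress decomposition with the required parameters can be computed within the target round budget; I expect this to follow from minor modifications of the constructions in \cite{Chang20,CP19timeHierarchy}, but it needs to be stated explicitly as a lemma analogous to the existing algorithms cited in \cref{sect-upper}.
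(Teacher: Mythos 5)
Your overall architecture matches the paper's: the upper bound via a rooted $(\gamma,\ell,L)$ rake-and-compress decomposition processed top-down using \cref{lem-trimming-2} for rake layers and \cref{lem-flex-scc-2} for compress layers, and the lower bound via a recursive $(k+1)$-level hard instance together with an induction that turns the output-label sets of a hypothetical $t$-round algorithm into a good sequence of length $k+1$. The gap is in the lower bound. You correctly name the obstacle --- orientation kills the reversal trick of the unrooted proof --- but the remedy you propose (compress blocks oriented consistently with the rooting, SCC argument carried out on $\Sigma$ rather than on $V(\M_\DDD)$) is exactly what the naive rooted analogue of \cref{def-lb-graphs} already provides, and it does not suffice. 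In the inductive step for $S_{\sC,i}$ you must show, for every $v\in S_{\sC,i}$ and every attainable output label $\sigma_v$, that the path-form automaton of $\Pi\upharpoonright_{\SigmaR{i}}$ contains a walk \emph{from} $\sigma_v$ into the set $\tilde\Sigma$ of labels attainable at central nodes \emph{and} a walk from $\tilde\Sigma$ \emph{to} $\sigma_v$. A directed tree-path yields a walk in only one direction of the automaton, so $v$ needs both a central \emph{descendant} and a central \emph{ancestor} in layer $(\sC,i)$, with both connecting tree-paths staying inside $S_{\sC,i}$. The front nodes of a compress path (those adjacent to their parent in layer $(\sR,i+1)$, which must lie in $S_{\sC,i}$ for the definition of $S_{\sR,i+1}$ to be nonvacuous) have no central ancestor in layer $(\sC,i)$ whatsoever in your construction: their only layer-$(\sC,i)$ ancestors are other front nodes. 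One of the two required walks is therefore missing and the SCC argument breaks.

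The paper resolves this by changing the main lower bound graph (\cref{def-lb-graph-main-2}): it attaches a chain of auxiliary gadgets $G_{\sC,1}^\circ,\ldots,G_{\sC,k}^\circ$ \emph{above} the root $r$ of $G_{\sR,k+1}$, so that every node of the instance has, for each $i$, a central ancestor on the core path $P_i$; correspondingly $S_{\sC,i}$ is defined asymmetrically (central-or-rear on $P_i$, central-or-front off $P_i$) so that the upward path to a central node of $P_i$ and the downward path to a central descendant both remain in $S_{\sC,i}$ (\cref{lem-layer-C-property-2}). Some device of this kind is indispensable, and your proposal does not contain it. A secondary, smaller point: you defer the existence and computability of the rooted $(\gamma,\ell,L)$ decomposition to ``minor modifications'' of prior work, whereas the paper proves the layer bound from scratch via a shrinkage-rate counting argument (\cref{lem-rake-and-compress-layers-aux}); flagging it as a lemma to be supplied is acceptable for a proposal, but it is not entirely routine, since the rooted compress operation removes only indegree-$1$/outdegree-$1$ nodes on sufficiently long directed paths and the balance between raked leaves and surviving paths has to be re-established in the oriented setting.
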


In \cref{thm-unrooted-characterization-2}, all the upper bounds hold in the $\CONGEST$ model, and all the lower bounds hold in the $\LOCAL$ model. For example, if $d_\Pi = 5$, then $\Pi$ can be solved in $O(n^{1/5})$ rounds in the $\CONGEST$ model, and there is a matching lower bound $\Omega(n^{1/5})$ in the $\LOCAL$ model.

\begin{theorem}[Complexity of the characterization]\label{thm-unrooted-poly-time-2}
There is a polynomial-time algorithm $\A$ that computes $d_\Pi$  for any given  $\LCL$ problem $\Pi=(\delta, \Sigma,\CC)$ on $\delta$-regular rooted trees. If $d_\Pi = k$ is a positive integer, then  $\A$ also outputs a description of an  $O(n^{1/k})$-round algorithm for $\Pi$. If $d_\Pi = \infty$, then $\A$ also outputs a description of an  $O(\log n)$-round algorithm for $\Pi$.
\end{theorem}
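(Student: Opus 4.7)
The plan is to mirror the strategy of \cref{thm-unrooted-poly-time} for the rooted case, which is in fact slightly simpler since the path-form is now an ordinary directed graph on $\Sigma$ rather than on pairs. I will give polynomial-time subroutines for $\trim$ and $\flexSCC$, then enumerate all good sequences by brute-forcing the only source of nondeterminism, namely the choice of $\SigmaC{i} \in \flexSCC(\SigmaR{i})$.

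First, to compute $\trim(\tilde{\Sigma})$ for a given $\tilde{\Sigma} \subseteq \Sigma$, I will iterate the monotone shrinking operator $F(\Sigma') = \{\sigma \in \Sigma' : \exists (\sigma : a_1 \ldots a_\delta) \in \CC \text{ with all } a_j \in \Sigma'\}$ starting from $\Sigma_0 = \tilde{\Sigma}$, stopping at the fixed point. Since each non-trivial iteration strictly decreases the size, at most $|\Sigma|$ iterations suffice, and each costs $O(|\CC| \cdot \delta)$, yielding a polynomial bound. A straightforward induction on $i$ (essentially the argument in the proof of \cref{lem-trimming-2}) shows that the fixed point equals $\trim(\tilde{\Sigma})$, since $\sigma$ survives iff arbitrarily deep correct labelings of $T_i$ rooted at $\sigma$ exist using only labels of $\tilde{\Sigma}$.

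Second, to compute $\flexSCC(\tilde{\Sigma})$, I will build the directed graph $G_{\tilde{\Sigma}}$ of the path-form of $\Pi \upharpoonright_{\tilde{\Sigma}}$, which has $|\tilde{\Sigma}|$ nodes and at most $|\CC| \cdot \delta$ edges, then compute its strongly connected components in linear time by Tarjan's algorithm. For each SCC $U$, I will test path-flexibility by picking any $\sigma \in U$, computing the set $L' = \{ k \leq 2|\tilde{\Sigma}| - 1 : \text{there is a length-}k\text{ walk } \sigma \leadsto \sigma \}$ via the standard BFS over layers $U_0, U_1, \ldots$ in $G_{\tilde{\Sigma}}$, and declaring $U$ path-flexible iff $\gcd(L') = 1$; correctness follows from \cref{lem-inflex-2} together with the fact (as invoked in the proof of \cref{lem-compute-flex-SCC}) that $\gcd$ of all return lengths equals the $\gcd$ restricted to lengths at most $2|V|-1$. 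The total cost per SCC is polynomial in $|\tilde{\Sigma}|$.

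Third, I will enumerate all good sequences by a depth-first recursion: set $\SigmaR{1} := \trim(\Sigma)$, and at each level branch over the (disjoint) subsets $\SigmaC{i} \in \flexSCC(\SigmaR{i})$, setting $\SigmaR{i+1} := \trim(\SigmaC{i})$. The recursion halts at depth $i$ if $\SigmaR{i} = \emptyset$ (failure), if $\flexSCC(\SigmaR{i}) = \emptyset$ (success at depth $i$), or if $\SigmaR{i} = \SigmaR{i-1}$, in which case a fixed point has been reached and, by the monotonicity chain $\Sigma \supseteq \SigmaR{1} \supseteq \SigmaC{1} \supseteq \cdots$, good sequences of every length exist, so $d_\Pi = \infty$. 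Because $|\SigmaR{i+1}| < |\SigmaR{i}|$ until stabilization, the depth is at most $|\Sigma|$. At each level, the branches $\SigmaC{i} \in \flexSCC(\SigmaR{i})$ are pairwise disjoint subsets of $\Sigma$, so the total work per level across all branches is polynomial, and the whole enumeration runs in polynomial time in the description length of $\Pi$. Taking the maximum depth reached over all branches gives $d_\Pi$. Finally, if $d_\Pi = k$ (respectively $\infty$), the explicit sequence of sets returned by the enumeration is exactly what the upper-bound algorithm from the rooted-tree analog of \cref{lem-rake-and-compress-use} consumes to produce a $\CONGEST$ algorithm of round complexity $O(n^{1/k})$ (respectively $O(\log n)$); the description of this distributed algorithm is polynomial in $|\Sigma|$, $\delta$, and $|\CC|$. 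The main obstacle is ensuring the brute force stays polynomial, which is handled precisely by the disjointness of path-flexible SCCs and the strict shrinking along $\SigmaR{1}, \SigmaR{2}, \ldots$.
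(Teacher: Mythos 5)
Your proposal is correct and follows essentially the same route as the paper's proof: the same fixed-point iteration for $\trim$ (cf.\ \cref{lem-compute-trim-2}), the same SCC-plus-gcd test for $\flexSCC$ (cf.\ \cref{lem-compute-flex-SCC-2}), and the same branching enumeration of good sequences bounded by disjointness of the flexible SCCs and the strict shrinking of the $\SigmaR{i}$ (cf.\ \cref{lem-compute-good-seq-2}). One small remark: your criterion ``path-flexible iff $\gcd(L')=1$'' is stated in the correct direction, whereas the corresponding sentence in the proof of \cref{lem-compute-flex-SCC} contains a sign slip (``is not one'' should read ``is one'').
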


The distributed algorithms returned by the polynomial-time algorithm $\A$ in \cref{thm-unrooted-poly-time-2} are also in the $\CONGEST$ model.

\subsection{Upper bounds}\label{sect-upper-2}

In this section, we prove the upper bound part of \cref{thm-unrooted-characterization-2}. If a good sequence $(\SigmaR{1}, \SigmaC{1}, \SigmaR{2}, \SigmaC{2}, \ldots, \SigmaR{k})$ exists for some positive integer $k$, we show that the $\LCL$ problem $\Pi=(\delta, \Sigma,\CC)$ can be solved in $O(n^{1/k})$ rounds. If a good sequence $(\SigmaR{1}, \SigmaC{1}, \SigmaR{2}, \SigmaC{2}, \ldots, \SigmaR{k})$ exists for all positive integers $k$, we show that $\Pi=(\delta, \Sigma,\CC)$ can be solved in $O(\log n)$ rounds. All these algorithms do not require sending large messages and can be implemented in the $\CONGEST$ model.

\subparagraph{Rake-and-compress decompositions} 
Similar to the case of unrooted trees, we will use a variant of rake-and-compress   decomposition for rooted trees. Our rake-and-compress decomposition for rooted trees is also parameterized by three positive integers $\gamma$, $\ell$, and $L$.
A $(\gamma, \ell, L)$ decomposition of a rooted tree $G=(V,E)$ is a partition of the node set 
\[V = \VR{1} \cup \VC{1} \cup \VR{2} \cup \VC{2}   \cup \cdots \cup \VR{L}\] 
satisfying the following requirements. The requirements will be  different from the ones in \cref{sect-upper} due to the difference between rooted trees and unrooted trees.

\subparagraph{Requirements for $\VR{i}$}
For each connected component $S$  of the subgraph  of $G$ induced by $\VR{i}$, it is required that $S$ is a rooted tree meeting the following conditions.
\begin{itemize}
\item All nodes in $S$ have no in-neighbor in $\VC{i} \cup \VR{i+1} \cup \cdots \cup  \VR{L}$.
\item All nodes in $S$ are within distance $\gamma-1$ to $z$.
\end{itemize}  

Here we only require that the nodes in $S$ do not  have in-neighbors from the higher layers of the decomposition. Only the root $z$ of $S$ can possibly have an out-neighbor outside of $S$, and we do not restrict anything about this out-neighbor.

\subparagraph{Requirements for $\VC{i}$}
For each connected component $S$  of the subgraph  of $G$ induced by $\VC{i}$,  $S$ is a directed path $v_1 \leftarrow  v_2 \leftarrow \cdots \leftarrow v_s$ of $s \in [\ell, 2\ell]$ nodes  meeting the following conditions.
\begin{itemize}
\item There is a node $u \in \VR{i+1} \cup \cdots \cup  \VR{L}$ such that $u \leftarrow v_1$.
\item There is a node $w \in \VR{i+1} \cup \cdots \cup  \VR{L}$ such that $v_s\leftarrow w$.
\item Other than $u$ and $w$, all the remaining neighbors of $S$ are not in $\VR{i+1} \cup \VC{i+1} \cup \cdots \cup  \VR{L}$.
\end{itemize}

The rest of the section is organized as follows. In \cref{sect-rake-and-compress-alg}, we will design distributed algorithms that efficiently compute a $(\gamma, \ell, L)$ decomposition, for certain choices of parameters. In \cref{sect-rake-and-compress-use}, we use our $(\gamma, \ell, L)$ decomposition algorithms to prove the upper bound part of \cref{thm-unrooted-characterization-2}.

\subsubsection{Algorithms for rake-and-compress decomposition}\label{sect-rake-and-compress-alg}

Given a rooted tree $G=(V,E)$, we define the two operations \rake\ and \compress\ as follows, where the operation \compress\ depends on a parameter $\ell$, which is a positive integer.

\begin{itemize}
    \item The operation \rake: Remove all nodes $v \in V$ with $\indeg(v) = 0$.
    \item The operation \compress: Remove all nodes $v \in V$ such that $v$ belongs to an $\ell$-node directed path $P = v_1 \leftarrow v_2 \leftarrow \cdots \leftarrow v_\ell$ such that $\indeg(v_i) = \outdeg(v_i) = 1$ for each $1 \leq i \leq s$.
\end{itemize}

Intuitively, the \rake\ operation removes the set of all leaf nodes, and the \compress\ operation removes the set of all nodes that belong to an $s$-node directed path consisting of only degree-$2$ nodes.

\subparagraph{The decomposition algorithm}
Recall that our goal is to find a $(\gamma, \ell, L)$ decomposition of a rooted tree $G=(V,E)$, which is a partition
$V = \VR{1} \cup \VC{1} \cup \VR{2} \cup \VC{2}   \cup \cdots \cup \VR{L}$ meeting all the requirements. We will first describe our algorithm, and then we analyze for which combinations of parameters  $(\gamma, \ell, L)$  our algorithm works.

Our algorithm for finding such a decomposition is as follows. For $i = 1, 2, \ldots$, perform $\gamma$ \rake\ operations and then perform one \compress\ operation. We initially set $\VR{i}$ to be the set of nodes removed during a \rake\ operation in the $i$th iteration. Similarly, we initially set  $\VC{i}$ to be  the set of nodes removed during the \compress\ operation in the $i$th iteration.

It is clear that these sets $\VR{i}$ and $\VC{i}$ already satisfy all the specified requirements, except that a connected component of the subgraph induced by $\VC{i}$ may be a path whose number of nodes exceeds $2 \ell$. Similar to existing rake-and-compress decomposition algorithms~\cite{balliu21rooted-trees,CP19timeHierarchy,Chang20}, to fix this, we will do a post-processing step which promotes some nodes from $\VC{i}$ to $\VR{i}$ to break long paths of $\VC{i}$ into small paths, for each $i$. We need the following definition.

\begin{definition}[Ruling set]
Let $P$ be a path.  A subset $I\subset V(P)$ is called an $(\alpha,\beta)$-independent set
if the following conditions are met:
(i) $I$ is an independent set that does not contain either endpoint of $P$, and
(ii) each connected component of the subgraph induced by $V(P) \setminus I$ has at least $\alpha$ node and at most $\beta$ node,
unless $|V(P)|<\alpha$, in which case $I=\emptyset$.
\end{definition}

It is a well-known~\cite{balliu19lcl-decidability,CP19timeHierarchy,ColeVishkin86} that an
$(\ell,2\ell)$-independent set of a path $P$ can be computed in $O(\log^\ast n)$ rounds deterministically in the $\CONGEST$ model when $\ell = O(1)$. In our post-processing step, we simply compute an
$(\ell,2\ell)$-independent set $I$   in each connected component induced by $\VC{i}$, in parallel for all $i$. Then we promote the nodes in $I$ from layer  $\VC{i}$ to layer  $\VR{i}$. After this promotion, it is clear that the  decomposition 
\[V = \VR{1} \cup \VC{1} \cup \VR{2} \cup \VC{2}   \cup \cdots \cup \VR{L}\]
is a $(\gamma, \ell, L)$ decomposition  meeting all the requirements, where  $L$ can be  any number such that no node remains after the $\gamma$ \rake\ operations in the $L$th iteration.

Assuming that $\ell = O(1)$, the round complexity of computing the decomposition is clearly $O(\gamma L) + O(\log^\ast n)$ in the $\CONGEST$ model, where $O(\gamma L)$ is the round complexity for executing  $L$ iterations of the rake-and-compress process, and $O(\log^\ast n)$ is the cost for the post-processing step.

\subparagraph{Number of layers}
Next, we consider the following question. Given two positive integers $\gamma$ and $\ell$, what is the smallest number $L$  such that no node remains after the $\gamma$ \rake\ operations in the $L$th iteration, for any given input $n$-node rooted tree $G=(V, E)$?

To answer this question, we consider the following notation. For each node $v \in V$,  we write $U_{\sR, i}^v$ to denote the set of nodes in the subtree rooted at $v$ right after we finish the $\gamma$ \rake\ operation in the $i$th iteration. Similarly, we write $U_{\sC, i}^v$ to denote the set of nodes in the subtree rooted at $v$ right after we finish the \compress\ operation in the $i$th iteration.  

In these definitions, the notion of subtree is with respect to the subgraph induced by the nodes that are not yet removed, not with respect to the original rooted tree. In particular, if $v$ is already removed before the $i$th iteration, then we have $U_{\sR, i}^v = U_{\sC, i}^v = \emptyset$. For notational simplicity, we write $U_{\sC, 0}^v$ to denote the set of nodes in the subtree rooted at $v$ in the original rooted tree $G$.

\begin{lemma}[Shrinkage rate]\label{lem-rake-and-compress-layers-aux}
For each $v \in V$ and $i \geq 1$, we have $|U_{\sC, i}^v| < |U_{\sC, i-1}^v| \cdot \frac{2\ell}{\gamma + 2\ell}$.
\end{lemma}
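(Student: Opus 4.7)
The plan is to analyze what one full iteration (the $\gamma$ rakes plus the single compress) does to the subtree rooted at $v$. Write $T = U_{\sC,i-1}^v$, with $n_0 = |T|$ nodes, and let $T' \subseteq T$ denote the nodes of this subtree that survive the $\gamma$ rakes of iteration $i$, with $n_1 = |T'|$. If $v$ itself is removed during iteration $i$, then $|U_{\sC,i}^v|=0$ and the bound holds trivially, so I assume $v$ survives. The core of the argument is to combine a lower bound on how many raked nodes each leaf of $T'$ is responsible for with an upper bound, after compress, in terms of the branching structure of $T'$.

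For the leaf–rake accounting, I would argue that if $u$ is a leaf of $T'$ (no child in $T'$), then $u$ survived $\gamma$ successive rakes while none of its children did, which forces the subtree of $u$ in $T$ to have depth at least $\gamma$. A deepest descending path from $u$ in $T$ exhibits $\gamma$ distinct raked descendants of $u$, and these sets are disjoint across distinct $T'$-leaves because if one leaf were a $T$-descendant of another, the lower one would have strictly smaller subtree depth and hence would have been raked. This gives $l\cdot \gamma \le n_0 - n_1$, where $l$ denotes the number of leaves of $T'$. For the compress analysis, I would contract each maximal path of degree-$2$ (i.e.\ $\indeg=1$, $\outdeg=1$) nodes in $T'$, obtaining a skeleton tree on the $b$ branching (non-degree-$2$) nodes of $T'$, with $b-1$ skeleton edges, the $j$th carrying some number $p_j$ of interior degree-$2$ nodes. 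The compress removes exactly the degree-$2$ nodes lying on paths with $p_j \ge \ell$; one has to check that the global compress, when restricted to $T_v$, coincides with compress on the standalone rooted tree $T'$, and this is guaranteed by $v$'s survival, since any degree-$2$ path crossing $v$ upward into the rest of the graph must have length strictly less than $\ell$ (otherwise $v$ itself would be removed). This gives $|U_{\sC,i}^v| \le b + \sum_{j : p_j < \ell} p_j \le b + (b-1)(\ell-1)$.

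To close the argument I would use the standard rooted-tree counting bound $b \le 2l$, which follows from $\sum_{u\in T'} \indeg_{T'}(u) = n_1 - 1$ together with the observation that each node of in-degree at least $2$ contributes at least $2$ to this sum; writing $b = 1 + l_{\mathrm{nr}} + m$ with $l_{\mathrm{nr}}$ the non-root leaves and $m$ the non-root nodes of in-degree at least $2$, this yields $m \le l_{\mathrm{nr}} - \indeg(v)$, hence $b \le 2l$. Chaining these inequalities gives $|U_{\sC,i}^v| \le 2l\ell - (\ell-1) \le \frac{2\ell(n_0 - n_1)}{\gamma} - (\ell-1)$, together with the trivial bound $|U_{\sC,i}^v| \le n_1$. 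Viewed as two linear upper bounds in the free parameter $n_1$, their pointwise minimum is maximized at their intersection, where $n_1 = \frac{2\ell n_0 - (\ell-1)\gamma}{\gamma + 2\ell}$, and the common value there is strictly less than $\frac{2\ell n_0}{\gamma + 2\ell}$, yielding the claim.

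The main obstacle I expect is the bookkeeping for the compress step: a compress path is defined in terms of degrees in the whole dynamic graph rather than in the local subtree $T'$, so in principle a length-$\ell$ degree-$2$ path might reach from a descendant of $v$ up through $v$ into the ancestors and thereby remove nodes within $T_v$ that a standalone compress on $T'$ would leave alone. Using that $v$ survives iteration $i$ to rule this out is the delicate step: it forces any such boundary-crossing path to have length $< \ell$, so it contributes no removals inside $T_v$. Once this is in place, the leaf-to-rake accounting and the $b \le 2l$ counting are routine, and the final optimization is one line.
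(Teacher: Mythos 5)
Your argument is correct and is essentially the paper's own proof: the paper partitions the post-rake subtree $U_{\sR,i}^v$ into the compress-removed nodes $A$, the short degree-$2$ runs $B$, the branching nodes $C$ and the leaves $D$, and derives exactly your three estimates --- $\gamma|D|\le n_0-n_1$ (your leaf-to-rake accounting), $|B|+|C|+|D|\le\ell(|C|+|D|)$ (your skeleton-tree bound $b+(b-1)(\ell-1)$), and $|C|+1\le|D|$ (your $b\le 2l$) --- before chaining them with $|U_{\sC,i}^v|\le|U_{\sR,i}^v|$, which is precisely your two-linear-bounds optimization written out algebraically. The one small difference is where strictness comes from: the paper uses $|C|+|D|\le 2|D|-1<2|D|$, which keeps the inequality strict even when $\ell=1$, whereas your version draws strictness from the $-(\ell-1)$ term and so degenerates to a non-strict bound at $\ell=1$; sharpening your count to $b\le 2l-1$ (the analogue of $|C|+1\le|D|$) repairs this.
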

\begin{proof}
We consider the $i$th iteration of the rake-and-compress process and focus on the set of nodes $U_{\sR, i}^v$. We partition $U_{\sR, i}^v = A \cup B \cup C \cup D$ into four parts as follows.
\begin{itemize}
    \item $A$ is the set of nodes $u$ in $U_{\sR, i}^v$ such that $u$  belongs to an $\ell$-node directed path in $U_{\sR, i}^v$ consisting of only nodes whose indegree in  $U_{\sR, i}^v$ is exactly one.
    \item $B$ is the set of nodes $u$ in $U_{\sR, i}^v$ such that the indegree of $u$ in  $U_{\sR, i}^v$ is exactly one and $u \notin A$. 
    \item $C$ the set of nodes $u$ in $U_{\sR, i}^v$ whose indegree in  $U_{\sR, i}^v$ is greater than one.
    \item $D$ the set of nodes $u$ in $U_{\sR, i}^v$ whose indegree in  $U_{\sR, i}^v$ is zero. 
\end{itemize}

  We prove the following inequalities.

\begin{itemize}
\item We have $|A| \leq |U_{\sR, i}^v| - |U_{\sC, i}^v|$, since  $A$ is precisely the set of nodes in $U_{\sR, i}^v$ that will subsequently be removed during the \compress\ operation in the $i$th iteration. The reason that we have an inequality rather than an equality is that all the descendants of $A$ in $U_{\sR, i}^v$ are also not included in $U_{\sC, i}^v$.
    \item We have $|C|+1 \leq |D|$, since the number of leaf nodes in a rooted tree is at least one plus the number of nodes with more than one child.
    \item We have $|B| \leq (\ell-1)(|C|+|D|)$, since the number of connected components induced by indegree-1 nodes in a rooted tree is at most the number of nodes whose indegree is not one, and $B$ is the union of all these connected components of size at most $\ell-1$.
    \item We have $\gamma |D| \leq |U_{\sC, i-1}^v| - |U_{\sR, i}^v|$, since the fact that each leaf node of $U_{\sR, i}^v$ is not removed during the $\gamma$ \rake\ operations in the $i$th iteration implies that it has at least $\gamma$ descendants removed during these  $\gamma$ \rake\ operations. That is, the number $|U_{\sC, i-1}^v| - |U_{\sR, i}^v|$ of nodes in $U_{\sC, i-1}^v$ removed during the $\gamma$ \rake\ operations in the $i$th iteration is at least $\gamma$ times the number $|D|$ of leaf nodes of $U_{\sR, i}^v$.   
\end{itemize}

Combining these four inequalities, we have
\begin{align*}
 |U_{\sC, i}^v| & \leq |U_{\sR, i}^v| - |A|\\
 & = |B| + |C| + |D| \\
 & \leq \ell (|C| + |D|)\\
 &< 2 \ell |D| \\
 &\leq \frac{2 \ell}{ \gamma} (|U_{\sC, i-1}^v| - |U_{\sR, i}^v|)\\
 &\leq \frac{2 \ell}{ \gamma} (|U_{\sC, i-1}^v| - |U_{\sC, i}^v|).
\end{align*}
Hence $|U_{\sC, i}^v| < |U_{\sC, i-1}^v| \cdot \frac{2\ell}{\gamma + 2\ell}$.
\end{proof}

\begin{lemma}[Number of layers]\label{lem-rake-and-compress-layers}
If the inequality $n \cdot \left(\frac{2\ell}{\gamma + 2\ell}\right)^{L-1} \leq \gamma$ holds, then we have $V = \VR{1} \cup \VC{1} \cup \VR{2} \cup \VC{2}   \cup \cdots \cup \VR{L}$. In particular, we have the following.
\begin{itemize}
    \item If $\ell= O(1)$ and $\gamma = 1$, then we may set $L = O(\log n)$ to satisfy the inequality.
    \item If $\ell= O(1)$ and $L = k$,  then we may set $\gamma = O(n^{1/k})$ to satisfy the inequality.
\end{itemize}
\end{lemma}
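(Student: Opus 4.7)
The plan is to apply \cref{lem-rake-and-compress-layers-aux} iteratively to the subtree rooted at the root $r$ of $G$. By definition $|U_{\sC, 0}^r| = n$, so iterating the shrinkage bound $L-1$ times (with the edge case of an empty subtree handled trivially) gives
\[
|U_{\sC, L-1}^r| \;\leq\; n \cdot \left(\tfrac{2\ell}{\gamma + 2\ell}\right)^{L-1} \;\leq\; \gamma,
\]
where the second inequality is the hypothesis of the lemma. Hence, after finishing the first $L-1$ iterations of the rake-and-compress process, at most $\gamma$ nodes remain in $G$.

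Next I would show that the $\gamma$ \rake\ operations performed in the $L$-th iteration remove every remaining node. Since a rooted tree on $m$ nodes has height at most $m-1$, the remaining tree after iteration $L-1$ has height at most $\gamma-1$, and each \rake\ operation strictly decreases the height of every still-nonempty rooted subtree (by peeling off its current leaves). Hence $\gamma$ rakes suffice to empty the tree, so all remaining nodes are placed in $\VR{L}$, establishing $V = \VR{1} \cup \VC{1} \cup \cdots \cup \VR{L}$.

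Finally, I will verify the two quantitative special cases by an elementary calculation. When $\gamma = 1$ and $\ell = O(1)$, the ratio $\tfrac{2\ell}{1+2\ell}$ is a fixed constant strictly less than one, so $n \cdot \left(\tfrac{2\ell}{1+2\ell}\right)^{L-1} \leq 1$ holds as soon as $L - 1 \geq \log n / \log\!\bigl(\tfrac{1+2\ell}{2\ell}\bigr)$, i.e., $L = O(\log n)$. When $L = k$ is a constant and $\ell = O(1)$, the hypothesis $n \cdot \left(\tfrac{2\ell}{\gamma + 2\ell}\right)^{k-1} \leq \gamma$ rearranges to $n \cdot (2\ell)^{k-1} \leq \gamma (\gamma + 2\ell)^{k-1}$; since $\gamma + 2\ell \geq \gamma$, it suffices to ensure $\gamma^{k} \geq n \cdot (2\ell)^{k-1}$, which is achieved by $\gamma = \bigl\lceil (2\ell)^{1-1/k} n^{1/k} \bigr\rceil = O(n^{1/k})$.

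I do not foresee a serious obstacle: the heavy lifting was done in \cref{lem-rake-and-compress-layers-aux}, and what remains is essentially bookkeeping. The only point requiring any care is the case in which some $|U_{\sC, i}^r|$ becomes zero before the $(L-1)$-th iteration, but this only strengthens the subsequent inequalities and so does not affect the conclusion.
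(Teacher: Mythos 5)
There is a genuine gap in your first step. You apply \cref{lem-rake-and-compress-layers-aux} only to the root $r$ and conclude that ``at most $\gamma$ nodes remain in $G$'' after the first $L-1$ iterations. But $U_{\sC,L-1}^{r}$ is, by definition, the subtree rooted at $r$ \emph{in the subgraph induced by the not-yet-removed nodes}, and after \compress{} operations this subgraph is in general a forest, not a tree: \compress{} deletes an interior segment $v_1 \leftarrow \cdots \leftarrow v_m$ of a directed path, which disconnects the surviving child-side endpoint $w$ (with $v_m \leftarrow w$) and its entire subtree from $r$. Those cut-off components survive but are not counted in $U_{\sC,L-1}^{r}$, so bounding $|U_{\sC,L-1}^{r}|$ controls only the component still attached to $r$ and says nothing about the rest of the remaining forest. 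Your height argument for the $L$-th round of rakes is then applied to ``the remaining tree'' when there may be many remaining trees, only one of which you have bounded.

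The repair is what the paper actually does: apply \cref{lem-rake-and-compress-layers-aux} to \emph{every} node $v \in V$, starting from $|U_{\sC,0}^{v}| \leq n$, to obtain $|U_{\sC,L-1}^{v}| < n\cdot\bigl(\tfrac{2\ell}{\gamma+2\ell}\bigr)^{L-1} \leq \gamma$ for each $v$ that survives the first $L-1$ iterations. Each surviving $v$ then roots a component of fewer than $\gamma$ nodes, hence of height at most $\gamma-1$, so the $\gamma$ \rake{} operations of iteration $L$ remove it; this yields $V = \VR{1} \cup \VC{1} \cup \cdots \cup \VR{L}$. Your verification of the two quantitative parameter settings is correct and matches the paper.
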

\begin{proof}
 For any $v \in V$, we have $|U_{\sC, 0}^v| \leq n$, as $U_{\sC, 0}^v$ is the set of nodes in the subtree rooted at $v$ in the original rooted tree $G$. By
\cref{lem-rake-and-compress-layers-aux}, we have $|U_{\sC, L-1}^v| < n \cdot \left(\frac{2\ell}{\gamma + 2\ell}\right)^{L-1} \leq \gamma$, which implies that $v$ must be removed during the $\gamma$ \rake\ operations in the $L$th iteration, if $v$ has not been removed by the time the $L$th begins.  Hence  $V = \VR{1} \cup \VC{1} \cup \VR{2} \cup \VC{2}   \cup \cdots \cup \VR{L}$.
\end{proof}

We are ready to prove the main results of \cref{sect-rake-and-compress-alg}.

\begin{lemma}[$O(\log n)$-round rake-and-compress algorithm]
\label{lem-rake-and-compress-alg-log}
For any positive integer $\ell = O(1)$, a $(\gamma, \ell, L)$ decomposition of an $n$-node rooted tree with $\gamma = 1$ and $L = O(\log n)$  can be computed in $O(\log n)$ rounds in the $\CONGEST$ model.
\end{lemma}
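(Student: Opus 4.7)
The plan is to run the generic rake-and-compress procedure described earlier in this subsection with parameter $\gamma = 1$ and with $L$ chosen so that the inequality $n \cdot \left(\tfrac{2\ell}{\gamma + 2\ell}\right)^{L-1} \leq \gamma$ from \cref{lem-rake-and-compress-layers} is satisfied, and then to argue that everything fits within $O(\log n)$ rounds of $\CONGEST$. Since $\gamma = 1$ and $\ell = O(1)$, the shrinkage factor $\tfrac{2\ell}{1+2\ell}$ is a constant strictly less than $1$, so it suffices to take $L = O(\log n)$; \cref{lem-rake-and-compress-layers} then guarantees $V = \VR{1} \cup \VC{1} \cup \cdots \cup \VR{L}$ as required.

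The next step is to implement each of the $L$ iterations in $O(1)$ rounds. A single \rake{} operation is local: a node is removed iff it has indegree $0$ in the current surviving subgraph, which every node can determine in one round by exchanging ``still-alive'' bits with its neighbors. A \compress{} operation with parameter $\ell$ is also local at radius $\ell$: a node $v$ checks in $O(\ell) = O(1)$ rounds whether it sits on some $\ell$-node directed path of indegree/outdegree-$1$ nodes in the surviving subgraph. Both checks require only constant-size messages, so they are compatible with $\CONGEST$. With $\gamma = 1$ rake followed by one compress per iteration, each iteration costs $O(1)$ rounds, giving $O(L) = O(\log n)$ rounds in total for the raw decomposition.

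After the raw decomposition, the only remaining issue (as explained in the description of the algorithm) is that a connected component induced by $\VC{i}$ may be a path longer than $2\ell$, in which case we need to promote an $(\ell, 2\ell)$-independent set of its nodes to $\VR{i}$ in order to meet the $s \in [\ell, 2\ell]$ requirement. By the cited deterministic $\CONGEST$ algorithm for $(\ell, 2\ell)$-independent sets on paths, this post-processing can be carried out on all such paths simultaneously in $O(\log^\ast n)$ rounds. Summing, the overall cost is $O(\log n) + O(\log^\ast n) = O(\log n)$ rounds in the $\CONGEST$ model, which proves the lemma. The main thing to be careful about is that the $L$ stated in \cref{lem-rake-and-compress-layers} really is $O(\log n)$ when $\gamma = 1$ and $\ell = O(1)$, but this follows immediately from the fact that the shrinkage factor is a constant bounded away from $1$; everything else is a direct bookkeeping of the round complexities of the local \rake, \compress, and ruling-set subroutines.
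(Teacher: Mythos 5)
Your proposal is correct and follows essentially the same route as the paper: apply \cref{lem-rake-and-compress-layers} with $\gamma=1$ to get $L=O(\log n)$, run the $L$ iterations of the rake-and-compress process at $O(\gamma L)$ rounds, and add the $O(\log^\ast n)$ post-processing via $(\ell,2\ell)$-independent sets. The extra detail you give on why each \rake{} and \compress{} step is a constant-round, constant-message-size operation is a faithful expansion of what the paper asserts in its description of the decomposition algorithm.
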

\begin{proof}
By \cref{lem-rake-and-compress-layers}, we may set $\gamma = 1$ and $L = O(\log n)$  in such a way that we always have $V = \VR{1} \cup \VC{1} \cup \VR{2} \cup \VC{2}   \cup \cdots \cup \VR{L}$. The round complexity for computing the decomposition is $O(\gamma L) + O(\log^\ast n) = O(\log n)$.
\end{proof}

\begin{lemma}[$O(n^{1/k})$-round rake-and-compress algorithm]\label{lem-rake-and-compress-alg-poly}
For any positive integers $\ell = O(1)$ and $k = O(1)$, a $(\gamma, \ell, L)$ decomposition of an $n$-node rooted tree with $\gamma = O(n^{1/k})$ and $L = k$  can be computed in $O(\log n)$ rounds in the $\CONGEST$ model.
\end{lemma}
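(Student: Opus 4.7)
The plan is to avoid running the rake-and-compress loop with $\gamma = \Theta(n^{1/k})$ rakes per iteration directly---that naive schedule already costs $\Theta(k\gamma) = \Theta(n^{1/k}) \gg \log n$ rounds---and instead bootstrap from \cref{lem-rake-and-compress-alg-log}: compute a very fine-grained $(1,\ell,L')$-decomposition with $L' = O(\log n)$ in $O(\log n)$ CONGEST rounds, and then produce the desired $(\gamma,\ell,k)$-decomposition by a purely local coarsening of the fine layers, at zero additional communication cost.

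For the coarsening I would set $m := \lceil L'/k \rceil = O(\log n / k)$ and partition the $L'$ fine iterations into $k$ consecutive blocks of at most $m$ iterations each, the $j$-th block being iterations $(j-1)m+1, \ldots, jm$. For $j = 1,\ldots,k-1$, let the coarse compress layer be exactly the very last fine compress in block $j$,
\[
\VC{j} \;:=\; \VC{jm}^{\ast},
\]
and let the coarse rake layer be everything else in block $j$,
\[
\VR{j} \;:=\; \bigcup_{i=(j-1)m+1}^{jm-1}\!\bigl(\VR{i}^{\ast} \cup \VC{i}^{\ast}\bigr) \;\cup\; \VR{jm}^{\ast},
\]
and let $\VR{k}$ absorb every remaining fine layer of the last block. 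Each node reads off its coarse label from its own fine layer index together with the global constants $L'$, $k$, $m$, so this step requires no rounds at all.

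I would then check the three $(\gamma,\ell,k)$-requirements in turn. The compress layer $\VC{j}$ inherits its structure verbatim from the fine $\VC{jm}^{\ast}$: each component is a directed path of $s \in [\ell,2\ell]$ nodes, its two extremal $G$-neighbors $u, w$ lie in $\VR{jm+1}^{\ast} \subseteq \VR{j+1}$, and any remaining neighbor lies in a strictly lower fine layer, hence in a strictly lower coarse layer. For the rake in-neighbor condition, monotonicity of rake-and-compress (a child is never removed after its parent) implies that if $v \in \VR{j}$ is removed in some fine iteration $i \leq jm$, then every $G$-child of $v$ is removed in a fine iteration of index at most $i$, i.e.\ lies in a coarse layer $\preceq \VR{j}$, and in particular not in $\VC{j} \cup \VR{j+1} \cup \cdots \cup \VR{k}$. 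The hard part, and the step that carries essentially all the technical weight of the proof, is the depth bound on the $\VR{j}$-components: I will argue that in any single fine iteration, along any ancestor chain in $G$, at most $O(\ell)$ chain nodes can be swallowed---at most one by the fine rake (only the current chain bottom can be a leaf at the start of the iteration) and at most $O(\ell)$ by the fine compress, using that the post-processing of \cref{lem-rake-and-compress-alg-log} splits the raw compress chains into $[\ell,2\ell]$-segments separated by promoted rake nodes, so an ancestor chain cannot be ``swallowed whole'' within one iteration. Summing over the $\leq m$ fine iterations inside block $j$ then yields an $O(m\ell) = O(\log n)$ bound on the depth of every component of $\VR{j}$, and since $\log n = O(n^{1/k})$ for $k = O(1)$, it suffices to declare $\gamma = O(n^{1/k})$.

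The main obstacle is precisely this per-iteration increment bound: care is needed to show that the interplay between promoted separator nodes (which live in $\VR{i}^{\ast} \subseteq \VR{j}$) and the remaining chain segments (which live in $\VC{i}^{\ast} \subseteq \VR{j}$) does not allow an ancestor chain to re-concatenate into one long path inside the coarse layer; once that is handled, the round complexity is immediate, because \cref{lem-rake-and-compress-alg-log} costs $O(\log n)$ rounds in CONGEST and the coarsening step adds zero further rounds.
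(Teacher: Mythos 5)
Your coarsening argument breaks at the in-neighbor condition for the coarse rake layers, and the step you use to dispose of it---``monotonicity of rake-and-compress (a child is never removed after its parent)''---is false. The \compress\ operation removes an entire degree-$(1,1)$ chain $v_1 \leftarrow v_2 \leftarrow \cdots \leftarrow v_s$ at once, and the unique surviving child $w$ of the bottom node $v_s$ is by definition \emph{not} removed in that iteration; it can survive arbitrarily many further iterations. Concretely, for $\delta=2$, hang a hairy chain $c_1,\dots,c_{100}$ below the root and attach to $c_{100}$ a complete binary tree of depth $50$ rooted at $w$: after the first \rake\ the chain becomes a degree-$(1,1)$ path and is compressed in iteration $1$, so $c_{100}$ lands in the fine compress layer of iteration $1$ (the post-processing cannot promote it, since an $(\ell,2\ell)$-independent set excludes the endpoints of the path), while $w$ still has indegree $2$ and survives until roughly iteration $50$. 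With $k=2$ and $m\approx L'/2$, your coarsening places $c_{100}$ in the coarse $\VR{1}$ and $w$ in the coarse $\VR{2}$, so a node of $\VR{1}$ has an in-neighbor in $\VC{1}\cup\VR{2}$, violating the first requirement for the rake layers. This is not a repairable detail of the depth bound (the part you flag as the hard step): the fine compress layers exist precisely to bridge over subtrees that survive much longer, so folding a non-boundary fine compress layer into a coarse rake layer inherently creates such forbidden in-neighbors.

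You are also aiming at a stronger statement than the paper actually proves or uses. The paper's proof of this lemma simply runs the rake-and-compress process with $\gamma=O(n^{1/k})$ \rake\ operations per iteration and $L=k$ iterations, invokes \cref{lem-rake-and-compress-layers} to guarantee that all nodes are removed, and accepts the round complexity $O(\gamma L)+O(\log^\ast n)=O(n^{1/k})$; indeed the paper's own proof text concludes with ``$=O(n^{1/k})$'', and \cref{lem-upper-finite-2} only needs the $O(n^{1/k})$ bound. The ``$O(\log n)$ rounds'' in the statement is evidently a typo carried over from \cref{lem-rake-and-compress-alg-log}. So the correct move here is the direct schedule you dismissed in your first sentence; the $O(\log n)$ target that motivated your detour is neither needed nor established by the paper, and your proposed route to it does not go through.
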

\begin{proof}
By \cref{lem-rake-and-compress-layers}, we may set $\gamma = O(n^{1/k})$ and $L = k$ in such a way that we always have $V = \VR{1} \cup \VC{1} \cup \VR{2} \cup \VC{2}   \cup \cdots \cup \VR{L}$. The round complexity for computing the decomposition is $O(\gamma L) + O(\log^\ast n) = O(n^{1/k})$.
\end{proof}

\subsubsection{Distributed algorithms via rake-and-compress decompositions}\label{sect-rake-and-compress-use}

In this section, we use \cref{lem-rake-and-compress-alg-log,lem-rake-and-compress-alg-poly} to design distributed algorithms solving a given  $\LCL$ problem $\Pi=(\delta, \Sigma,\CC)$ on $\delta$-regular rooted trees.

\begin{lemma}[Solving $\Pi$ using rake-and-compress decompositions]\label{lem-rake-and-compress-use-2}
 Suppose we are given an $\LCL$ problem $\Pi=(\delta, \Sigma,\CC)$ that admits a good sequence
\[(\SigmaR{1}, \SigmaC{1}, \SigmaR{2}, \SigmaC{2}, \ldots, \SigmaR{k}).\]
Suppose we are given a $(\gamma, \ell, L)$ decomposition of an $n$-node rooted tree $G=(V,E)$ of maximum indegree at most $\delta$
\[V = \VR{1} \cup \VC{1} \cup \VR{2} \cup \VC{2}   \cup \cdots \cup  \VR{L}\] 
with $L = k$ and $\ell = \max\{1, \flexibility( \SigmaC{1}), \ldots, \flexibility(\SigmaC{k-1})\}$. 
Then a correct solution of  $\Pi$ on $G$ can be computed in $O((\gamma+\ell)L)$ rounds in the $\CONGEST$ model.
\end{lemma}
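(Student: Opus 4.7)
The plan is to mirror the proof of \cref{lem-rake-and-compress-use} from the unrooted case, substituting the rooted analogues of the key structural lemmas. Specifically, I would design a $\CONGEST$ algorithm that processes the decomposition layers in reverse order $\VR{L}, \VC{L-1}, \VR{L-1}, \ldots, \VC{1}, \VR{1}$, maintaining the invariant that every node in $\VR{i}$ receives a label from $\SigmaR{i}$ and every node in $\VC{i}$ receives a label from $\SigmaC{i}$, while committing a \emph{ghost label} (drawn from $\SigmaR{i}$) along each edge from an already-labeled node to an unlabeled child. The containment chain $\SigmaR{1} \supseteq \SigmaC{1} \supseteq \cdots \supseteq \SigmaR{k}$ guarantees that every such ghost still lies in the appropriate label set when its layer is eventually processed.

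For a component $S$ of $\VR{i}$ with root $z$, I would first pin down $\sigma_z$: if $z$'s parent $p$ has already been processed (a case that covers $p \in \VC{i}$ and $p$ in any higher layer), set $\sigma_z$ equal to the ghost already committed by $p$; otherwise (either $p$ does not exist or $p \in \VC{j}$ for some $j < i$, processed after $\VR{i}$), pick any $\sigma_z \in \SigmaR{i}$, which is nonempty since the good sequence ends with $\SigmaR{k} \neq \emptyset$. Then I would propagate top-down through $S$: at each $v \in S$ with label $\sigma_v \in \SigmaR{i}$, invoke \cref{lem-trimming-2} with $\tilde{\Sigma}$ equal to $\SigmaC{i-1}$ (or $\Sigma$ when $i=1$) to obtain a configuration $(\sigma_v : a_1 \cdots a_\delta) \in \CC$ with every $a_l \in \SigmaR{i}$; use this multiset to label $v$'s children in $S$ and commit the remaining $a_l$ as ghost labels for $v$'s children in lower layers. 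Since $S$ is a rooted tree of depth at most $\gamma-1$, this pass runs in $O(\gamma)$ rounds.

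For a component $S = \{v_1, \ldots, v_s\}$ of $\VC{i}$ (the directed path $v_1 \leftarrow \cdots \leftarrow v_s$), the higher-layer parent $u$ of $v_1$ and higher-layer child $w$ of $v_s$ are already processed, so both the ghost $g_{v_1}$ and the actual label $\sigma_w$ are known and lie in $\SigmaC{i}$ by containment. Since $s \geq \ell \geq \flexibility(\SigmaC{i})$, I would apply \cref{lem-flex-scc-2} with $\tilde{\Sigma}=\SigmaR{i}$, $U=\SigmaC{i} \in \flexSCC(\SigmaR{i})$, $\alpha=g_{v_1}$, $\beta=\sigma_w$, and $d=s$ to produce labels $\sigma_{v_1}, \ldots, \sigma_{v_s}$ in $\SigmaC{i}$ whose associated node configurations lie in $\CC$, have all children in $\SigmaR{i}$, and already accommodate $\sigma_w$ as a child of $v_s$. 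The remaining non-path children of each $v_j$, all in lower unprocessed layers, receive ghost labels drawn from those configurations. Gathering the two endpoint labels at one node, performing the (constant-time) computation, and broadcasting the sequence back takes $O(\ell)$ rounds in $\CONGEST$.

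The main obstacle is verifying consistency of the ``free choice'' case at $\VR{i}$, because the parent $p \in \VC{j}$ for $j < i$ is processed only after $\sigma_z$ has been committed. The rake-and-compress requirements for $\VC{j}$ force any such $p$ to equal $v_s$ of its compress component and force $z$ to play the distinguished higher-layer-child role of $w$; consequently the later invocation of \cref{lem-flex-scc-2} uses $\sigma_z$ as its $\beta$ parameter, and the inclusion $\SigmaR{i} \subseteq \SigmaC{j}$ ensures that any $\sigma_z$ I selected is an admissible input to the lemma. Summing $O(\gamma)$ per $\VR{i}$ pass and $O(\ell)$ per $\VC{i}$ pass over the $2L-1$ passes yields the claimed $O((\gamma+\ell)L)$-round complexity in $\CONGEST$, with all messages carrying only labels of $O(\log|\Sigma|)$ bits.
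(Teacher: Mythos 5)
Your proposal is correct and follows essentially the same approach as the paper's proof: the ``ghost label'' device is exactly the paper's convention that labeling a node fixes the labels of all its children, the case analysis for the root $z$ of a rake component matches, and the invocations of \cref{lem-trimming-2} and \cref{lem-flex-scc-2} with the same parameters carry the argument. Your explicit verification that a root $z\in\VR{i}$ whose parent lies in a not-yet-processed $\VC{j}$ must be the distinguished child $w$ of that compress path, so that the free choice $\sigma_z\in\SigmaR{i}\subseteq\SigmaC{j}$ is later absorbed as the $\beta$ parameter of \cref{lem-flex-scc-2}, is a point the paper treats only implicitly, and it is handled correctly.
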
 
\begin{proof}
We present an $O((\gamma+\ell)L)$-round $\CONGEST$ algorithm finding a correct solution of  $\Pi$ on $G$.  
In this proof, whenever we say the algorithm labels a node $v$, we mean picking a node configuration  $(\sigma \ : \ a_1, a_2, \ldots, a_{\delta}) \in \CC$ and fixing the labels of $v$ and its $\delta$ children according to the chosen node configuration.

The algorithm processes the nodes of the graph in the order $\VR{L}, \VC{L-1}, \ldots, \VR{1}$. We require that when the algorithm processes $\VR{i}$, the algorithm only uses node configurations  $(\sigma \ : \ a_1, a_2, \ldots, a_{\delta}) \in \CC$ such that $\sigma$ and all of $a_1, a_2, \ldots, a_{\delta}$ are in $\SigmaR{i}$.  We also require that when the algorithm processes $\VC{i}$, the algorithm uses node configurations  $(\sigma \ : \ a_1, a_2, \ldots, a_{\delta}) \in \CC$ such that $\sigma \in \SigmaC{i}$ and all of  $a_1, a_2, \ldots, a_{\delta}$ are in $\SigmaR{i}$.

\subparagraph{Labeling $\VR{i}$} Suppose the algorithm has finished labeling the   nodes in $\VC{i} \cup \VR{i+1} \cup \cdots \cup  \VR{L}$.
The algorithm then labels each connected component $S$  of the subgraph  of $G$ induced by $\VR{i}$, in parallel and using $O(\gamma)$ rounds in the $\CONGEST$ model, as follows.

Recall that The set $S$ induces a rooted tree of height at most $\gamma-1$ such that no node in $S$ has an in-neighbor in  $\VC{i} \cup \VR{i+1} \cup \cdots \cup  \VR{L}$. The root $z$ of $S$ may have an out-neighbor $u$. 

If $u \notin \VC{i} \cup \VR{i+1} \cup \cdots \cup  \VR{L}$ of $u$ does not exist, then the label of $z$ is not fixed yet. In this case, we choose any node configuration  $(\sigma \ : \ a_1, a_2, \ldots, a_{\delta}) \in \CC$ such that $\sigma$ and all of $a_1, a_2, \ldots, a_{\delta}$ are in $\SigmaR{i}$ to label $z$ and its children.
Such a node configuration exists because of \cref{lem-trimming-2}, as we recall that $\SigmaR{i} = \trim(\SigmaC{i-1})$ (if $i>1$) and $\SigmaR{1} = \trim(\Sigma)$ (if $i=1$).

If  $u \in \VC{i} \cup \VR{i+1} \cup \cdots \cup  \VR{L}$, then the label of $u$ is fixed to be some label $\sigma \in \SigmaR{i}$, due to the above requirement of our algorithm for labeling $\VC{i} \cup \VR{i+1} \cup \cdots \cup  \VR{L}$, as we recall that  $\SigmaR{i} \supseteq \SigmaC{i} \supseteq \cdots \supseteq \SigmaR{k}$.
In this case, we choose any node configuration  $(\sigma \ : \ a_1, a_2, \ldots, a_{\delta}) \in \CC$ such that  all of $a_1, a_2, \ldots, a_{\delta}$ are in $\SigmaR{i}$ to label $z$ and its children. Similarly, the existence of such a node configuration is due to 
 \cref{lem-trimming-2} and the fact that $\sigma \in \SigmaR{i}$.

The node configuration for the remaining nodes in $S$ can be fixed similarly. We start processing a node $v \in S$ once the node configuration of its parent $u$ is fixed. Our requirement for labeling $\VR{i}$ ensures that the label of $v$ is fixed to be some $\sigma \in \SigmaR{i}$, so we can choose any node configuration  $(\sigma \ : \ a_1, a_2, \ldots, a_{\delta}) \in \CC$ where all of $a_1, a_2, \ldots, a_{\delta}$ are in $\SigmaR{i}$  and use this node configuration for $v$ to label its children.
The round complexity of labeling $S$ is $O(\gamma)$ because $S$ is a rooted tree of depth at most $\gamma -1$.

\subparagraph{Labeling $\VC{i}$}
 Suppose the algorithm has finished labeling the   nodes in $\VR{i+1} \cup \VC{i+1} \cup \cdots \cup  \VR{L}$.
The algorithm then labels each connected component $S$  of the subgraph  of $G$ induced by $\VC{i}$, in parallel and using $O(\ell)$ rounds in the $\CONGEST$ model, as follows.

The set $S$ has the property that there are two nodes $u$ and $v$ in $\VR{i+1} \cup \VC{i+1} \cup \cdots \cup  \VR{L}$ adjacent to $S$ such that the subgraph induced by  $S \cup \{u,v\}$ is a directed path $u \leftarrow v_1 \leftarrow v_2 \leftarrow \cdots, v_s \leftarrow v$, with $s \in [\ell, 2\ell]$.

Similarly, the requirement of the choice of  node configurations for  $\VR{i+1} \cup \VC{i+1} \cup \cdots \cup  \VR{L}$ ensures that 
the labels of $u$, $s_1$, and $v$ have been fixed to be some labels in $\SigmaC{i}$, as  we recall that  $\SigmaC{i} \supseteq \SigmaR{i+1} \supseteq \cdots \supseteq \SigmaR{k}$.

Now, our task is to assign node configurations to $v_1, v_2, \ldots, v_s$ in such a way that the labels used to label $v_1, v_2, \ldots, v_s$ are in 
$\SigmaC{i}$ and the labels used to label their children are in $\SigmaR{i}$.

To find such a labeling, we use  \cref{lem-flex-scc-2}. Specifically, recall that the length of the path  $v_1 \leftarrow v_2 \leftarrow \cdots, v_s \leftarrow v$ is $s  \geq \ell \geq \flexibility(\SigmaC{i})$ by our choice of $\ell$.  We let $\alpha$ be the existing label of $v_1$ and let $\beta$ be the existing label of $v$.
Recall that $\SigmaC{i} \in \flexSCC(\SigmaR{i})$, so we may apply \cref{lem-flex-scc-2} with $U = \SigmaC{i}$, $\tilde{\Sigma} = \SigmaR{i}$, $d = s$, and our choices of $\alpha \in U$ and $\beta \in U$.

\cref{lem-flex-scc-2}  returns a sequence of labels $\alpha = \sigma_1, \sigma_2, \ldots, \sigma_{s+1} = \beta$.
For each $1 \leq j \leq s$, we use $\sigma_j$ to label $v_j$. Moreover, for each $1 \leq j \leq s$, \cref{lem-flex-scc-2} guarantees that there is a node configuration $(\sigma_j \ : \ \sigma_{j,1} \sigma_{j,2} \cdots \sigma_{j,\delta}) \in \CC$ such that all of  $\sigma_{j,1} \sigma_{j,2} \cdots \sigma_{j,\delta}$ are in $\tilde{\Sigma} = \SigmaR{i}$ and there exists an index $l$ such that $\sigma_{j,l} = \sigma_{j+1}$. Therefore, we may use the labels in this size-$(\delta-1)$ multiset $\{\sigma_{j,1} \sigma_{j,2} \ldots \sigma_{j,\delta}\} \setminus \{\sigma_{j,l}\}$ to label the remaining $\delta-1$ children of $v_j$, so that the node configuration of $v_j$ is  $(\sigma_j \ : \ \sigma_{j,1} \sigma_{j,2} \cdots \sigma_{j,\delta}) \in \CC$.
The round complexity of labeling $S$ is $O(\ell)$ because $S$ is a path of at most $2 \ell$ nodes.

\subparagraph{Summary} The number rounds spent on labeling each part $\VR{i}$ is $O(\gamma)$, and the number rounds spent on labeling each part $\VC{i}$ is $O(\ell)$, so the overall round complexity for solving $\Pi$ given a  $(\gamma, \ell, L)$ decomposition is $O((\gamma+\ell)L)$ rounds in the $\CONGEST$ model.  
\end{proof}

Combining \cref{lem-rake-and-compress-use} with existing algorithms for computing $(\gamma, \ell, L)$ decompositions, we obtain the following results.

\begin{lemma}[Upper bound for the case $d_\Pi = k$]\label{lem-upper-finite-2}
If $d_\Pi = k$ for some positive integer $k$, then $\Pi$ can be solved in $O(n^{1/k})$ rounds in the $\CONGEST$ model.
\end{lemma}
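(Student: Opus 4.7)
The plan is to directly combine the decomposition algorithm of \cref{lem-rake-and-compress-alg-poly} with the labeling procedure of \cref{lem-rake-and-compress-use-2}, mirroring the structure of the analogous upper bound for unrooted trees (\cref{lem-upper-finite}).

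First, the hypothesis $d_\Pi = k$ yields, by the very definition of depth, a good sequence $(\SigmaR{1}, \SigmaC{1}, \SigmaR{2}, \SigmaC{2}, \ldots, \SigmaR{k})$ for $\Pi$. I would then set $\ell = \max\{1, \flexibility(\SigmaC{1}), \ldots, \flexibility(\SigmaC{k-1})\}$. Since $k$ is a fixed integer that depends only on $\Pi$, and each $\flexibility(\SigmaC{i})$ is a finite constant determined entirely by $\Pi$ (independent of the input size), we have $\ell = O(1)$.

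Next, I would invoke \cref{lem-rake-and-compress-alg-poly} with parameters $\ell$ and $k$ to compute, in $O(n^{1/k})$ rounds of $\CONGEST$, a $(\gamma, \ell, L)$ decomposition of the input rooted tree with $\gamma = O(n^{1/k})$ and $L = k$. Feeding this decomposition together with the good sequence into \cref{lem-rake-and-compress-use-2}, whose hypothesis on $\ell$ is met precisely by our choice above, produces a correct solution to $\Pi$ in $O((\gamma + \ell) L) = O((n^{1/k} + 1) \cdot k) = O(n^{1/k})$ additional rounds. Summing the two phases gives the claimed $O(n^{1/k})$ round complexity in $\CONGEST$.

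There is no genuine obstacle in this argument, as all the technical work has been discharged by the two lemmas being combined; the proof is a plug-and-play composition. The only points that require minor care are (i) verifying that $k$ and $\ell$ are constants independent of $n$, so the big-$O$ cleanly absorbs them, and (ii) matching the lower bound on $\ell$ required by \cref{lem-rake-and-compress-use-2} with the maximum $\flexibility(\SigmaC{i})$ taken over $1 \leq i \leq k-1$, which is exactly how $\ell$ was defined.
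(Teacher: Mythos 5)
Your proposal is correct and is essentially the paper's own proof: both combine the $(\gamma,\ell,L)$ decomposition of \cref{lem-rake-and-compress-alg-poly} (with $\gamma = O(n^{1/k})$, $L=k$) with the labeling procedure of \cref{lem-rake-and-compress-use-2}, noting that $k$ and $\ell$ are constants independent of $n$. Your explicit choice of $\ell = \max\{1, \flexibility(\SigmaC{1}), \ldots, \flexibility(\SigmaC{k-1})\}$ matches the hypothesis of \cref{lem-rake-and-compress-use-2} exactly.
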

\begin{proof}
In this case, a good sequence $(\SigmaR{1}, \SigmaC{1}, \SigmaR{2}, \SigmaC{2}, \ldots, \SigmaR{k})$ exists.
By \cref{lem-rake-and-compress-alg-poly}, a  $(\gamma, \ell, L)$ decomposition with $\gamma = O(n^{1/k})$ and $L = k$ can be computed in $O(n^{1/k})$ rounds, $\Pi$ can be solved in $O(n^{1/k}) + O((\gamma+\ell)L) = O(n^{1/k})$ rounds  using the algorithm of \cref{lem-rake-and-compress-use-2}. Here both $k$ and $\ell$ are $O(1)$, as they are independent of the number of nodes $n$.
\end{proof}

\begin{lemma}[Upper bound for the case $d_\Pi = \infty$]\label{lem-upper-infinite-2}
If $d_\Pi = \infty$, then $\Pi$ can be solved in $O(\log n)$ rounds in the $\CONGEST$ model.
\end{lemma}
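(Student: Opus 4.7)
The plan is to mirror the proof of \cref{lem-upper-infinite}, using the rooted-tree analogues established earlier in \cref{sect-upper-2}. Concretely, I will combine the rake-and-compress decomposition algorithm from \cref{lem-rake-and-compress-alg-log} with the labeling procedure from \cref{lem-rake-and-compress-use-2} applied to a suitably long good sequence.

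First, since $d_\Pi = \infty$, a good sequence $(\SigmaR{1}, \SigmaC{1}, \ldots, \SigmaR{k})$ exists for every positive integer $k$. The containment chain $\SigmaR{1} \supseteq \SigmaC{1} \supseteq \SigmaR{2} \supseteq \cdots$ forces any sufficiently long good sequence to stabilize: once $\SigmaR{i} = \SigmaC{i}$ for some $i$, the rules uniquely determine $\SigmaR{i+1} = \SigmaR{i}$, $\SigmaC{i+1} = \SigmaC{i}$, and so on. This stabilization must occur within the first $|\Sigma|$ steps. In particular, I can fix a single good sequence of arbitrary length whose constituent sets $\SigmaC{j}$ take only finitely many distinct values (all subsets of $\Sigma$). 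Therefore the quantity
\[
\ell := \max\{1,\, \flexibility(\SigmaC{1}),\, \flexibility(\SigmaC{2}),\, \ldots\}
\]
is a finite constant depending only on $\Pi$, not on $n$.

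Next, I invoke \cref{lem-rake-and-compress-alg-log} with this constant $\ell$ to compute, in $O(\log n)$ rounds of $\CONGEST$, a $(\gamma, \ell, L)$ decomposition of the input rooted tree $G$ with $\gamma = 1$ and $L = O(\log n)$. I then apply \cref{lem-rake-and-compress-use-2} to this decomposition, using a good sequence $(\SigmaR{1}, \SigmaC{1}, \ldots, \SigmaR{k})$ of length $k = L$ (which exists because $d_\Pi = \infty$, and which satisfies the required flexibility bound by the choice of $\ell$ above). This yields a correct solution to $\Pi$ in $O((\gamma + \ell)L) = O(\log n)$ additional $\CONGEST$ rounds.

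Adding the two contributions gives a total round complexity of $O(\log n) + O(\log n) = O(\log n)$ in the $\CONGEST$ model. The only mildly delicate point is the stabilization observation that keeps $\ell = O(1)$ uniformly across the arbitrarily many layers; everything else is a direct appeal to the already-proved subroutines.
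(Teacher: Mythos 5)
Your proposal is correct and follows essentially the same route as the paper: compute a $(\gamma,\ell,L)$ decomposition with $\gamma=1$ and $L=O(\log n)$ via \cref{lem-rake-and-compress-alg-log}, then label it with \cref{lem-rake-and-compress-use-2} using a good sequence of length $k=L$. The paper simply asserts that $\ell=O(1)$, whereas you justify it via the stabilization of the containment chain; this is a harmless (and correct) elaboration rather than a different argument.
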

\begin{proof}
In this case, a good sequence $(\SigmaR{1}, \SigmaC{1}, \SigmaR{2}, \SigmaC{2}, \ldots, \SigmaR{k})$ exists for all positive integers $k$. By \cref{lem-rake-and-compress-alg-log}, a  $(\gamma, \ell, L)$ decomposition with $\gamma = 1$ and $L = O(\log n)$ can be computed in $O(\log n)$ rounds. By choosing a good sequence $(\SigmaR{1}, \SigmaC{1}, \SigmaR{2}, \SigmaC{2}, \ldots, \SigmaR{k})$ with $k = L$, $\Pi$ can be solved in  $O(\log n)+ O((\gamma+\ell)L) = O(\log n)$ rounds using the algorithm of \cref{lem-rake-and-compress-use-2}. Similarly, here $\ell = O(1)$, as it is independent of the number of nodes $n$.
\end{proof}

\subsection{Lower bounds}\label{sect-lower-2}

In this section, we prove the lower bound part of \cref{thm-unrooted-characterization-2}. Similar to the case of unrooted trees, in our lower bound proofs, we pick $\gamma$ to be the smallest integer satisfying the following requirements. For each subset $\tilde{\Sigma}\subseteq \Sigma$ and each $\sigma \in \tilde{\Sigma} \setminus \trim(\tilde{\Sigma})$, there exists no correct labeling of $T_\gamma$ where the label of the root $r$ is $\sigma$ and the  label of remaining nodes is in $\tilde{\Sigma}$. Such a number $\gamma$ exists due to the definition of $\trim$.
Recall that  $T_\gamma$ is defined in \cref{def-complete-trees-2}.

\begin{lemma}[Unsolvability for the case $d_\Pi = 0$]\label{lem-lower-0-2}
If $d_\Pi = 0$, then $\Pi$ is unsolvable in the sense that there exists a rooted tree $G$ of maximum indegree $\delta$ such that there is no correct solution of $\Pi$ on $G$.
\end{lemma}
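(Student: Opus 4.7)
The plan is to mirror the unrooted case (\cref{lem-lower-0}) almost verbatim, taking advantage of the definition of $\gamma$ for the rooted setting and the trivial structure of a good sequence of length one.

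First I would take $G = T_\gamma$, the complete $\delta$-ary rooted tree of height $\gamma$ from \cref{def-complete-trees-2}. Its maximum indegree is $\delta$, as required. Next I would observe that the hypothesis $d_\Pi = 0$ forces $\trim(\Sigma) = \emptyset$: otherwise the single-element sequence $(\SigmaR{1}) = (\trim(\Sigma))$ would already satisfy all the requirements of a good sequence (no $\SigmaC{i}$ or $\DDD_i$ choice is needed when $k=1$, and $\SigmaR{1}\neq\emptyset$ by assumption), giving $d_\Pi \geq 1$, a contradiction.

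Now I would invoke the defining property of $\gamma$ with $\tilde{\Sigma} = \Sigma$. Since $\trim(\Sigma) = \emptyset$, every $\sigma \in \Sigma$ lies in $\tilde{\Sigma}\setminus\trim(\tilde{\Sigma})$, so by the choice of $\gamma$ there exists no correct labeling of $T_\gamma$ whose root label is $\sigma$ and whose remaining labels are in $\Sigma$. Since any solution of $\Pi$ on $G = T_\gamma$ must assign each node some label from $\Sigma$ (including the root), no such solution can be correct. This produces the required unsolvable instance.

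The proof is essentially a direct translation of \cref{lem-lower-0}, so I do not anticipate any real obstacle; the only point requiring care is the preliminary reduction $d_\Pi = 0 \Rightarrow \trim(\Sigma) = \emptyset$, which hinges on correctly interpreting the base case of a good sequence. Once that is noted, the rest is a one-line application of the definition of $\gamma$ in the rooted setting.
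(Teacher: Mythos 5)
Your proposal is correct and follows essentially the same approach as the paper's proof: take $G = T_\gamma$, note that $d_\Pi = 0$ forces $\trim(\Sigma) = \emptyset$, and apply the defining property of $\gamma$ with $\tilde{\Sigma} = \Sigma$. You merely spell out two steps the paper leaves implicit (why $d_\Pi = 0$ implies $\trim(\Sigma)=\emptyset$, and the final application of the definition of $\gamma$), which is fine.
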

\begin{proof}
We take $G = T_\gamma$. Since $d_\Pi = 0$, we have $\trim(\Sigma) = \emptyset$. Our choice of $\gamma$ implies that there is no correct solution of $\Pi$ on $G = T_\gamma$.
\end{proof}

For the rest of this section, we focus on the case $d_\Pi = k$ is a positive integer. We will prove that $\Pi$ needs $\Omega(n^{1/k})$ rounds to solve in the $\LOCAL$ model.
\cref{def-lb-graphs-2} is similar to \cref{def-lb-graphs}.
The exact choice of   $s = \Theta(t)$ in \cref{def-lb-graphs-2} is to be determined.

\begin{definition}[Lower bound graphs]\label{def-lb-graphs-2}
We let $t$ be any positive integer, and we let $s = \Theta(t)$. 
\begin{itemize}
    \item Define $G_{\sR,1}$ as the rooted tree $T_\gamma$. All nodes in $G_{\sR,1}$  are said to be in layer $(\sR,1)$.
    \item  For each integer $i \geq 1$, define $G_{\sC,i}$ as the result of the following construction. Start with an $s$-node directed path $v_1 \leftarrow v_2 \leftarrow \cdots \leftarrow v_s$.  For each $1 \leq i \leq s -1 $, append $\delta-1$ copies of $G_{\sR,i-1}$ to $v_i$. Append $\delta$ copies of $G_{\sR,i-1}$ to $v_s$. The nodes $v_1, v_2, \ldots, v_s$ are said to be in layer $(\sC,i)$. We call $v_1 \leftarrow v_2 \leftarrow \cdots \leftarrow v_s$ the core path of $G_{\sC,i}$.
    \item  For each integer $i \geq 1$, define $G_{\sC,i}^\circ$ as the result of the construction of $G_{\sC,i}$, with a difference that we append $\delta-1$ copies of $G_{\sR,i-1}$ to $v_s$.
    \item For each integer $i \geq 2$, define $G_{\sR,i}$ as follows. Start with a rooted tree $T_\gamma$.  Append $\delta$ copies of $G_{\sC,i-1}$ to each leaf in $T_\gamma$.
    All nodes in  $T_\gamma$ are said to be in layer $(\sR,i)$.
\end{itemize}
\end{definition}  

The only difference between \cref{def-lb-graphs-2} and \cref{def-lb-graphs} is that here we define a new rooted tree $G_{\sC,i}^\circ$ where we append only $\delta-1$ copies of $G_{\sR,i-1}$ to the last node $v_s$ of the core path, so $\indeg(v_s) = \delta -1$. 
The purpose of this modification is to allow us to concatenate the rooted trees together without violating the maximum indegree bound $\delta$. Specifically, for our lower bound proof in \cref{sect-lower-2}, we define our main lower bound graph $G=(V,E)$ as follows.

\begin{definition}[Main lower bound graph]\label{def-lb-graph-main-2}
Define the rooted tree $G=(V,E)$ as the result of the following construction.
\begin{itemize}
    \item The construction starts with the rooted trees $G_{\sC,1}^\circ, G_{\sC, 2}^\circ, \ldots, G_{\sC, k}^\circ$ and  $G_{\sR,k+1}$.
    \item Let $P_i = v_1^i \leftarrow v_2^i \leftarrow \cdots \leftarrow v_s^i$ be the core path of $G_{\sC,i}^\circ$ and let  $r$ be the root of $G_{\sR,k+1}$.
    \item Add the  directed edges $v_s^1 \leftarrow v_1^2, v_s^2 \leftarrow v_1^3, \ldots, v_s^{k-1} \leftarrow v_1^k$, and $v_s^{k} \leftarrow r$.
\end{itemize}
\end{definition}

  It is clear that all nodes in the rooted tree $G$ have indegree either $0$ or $\delta$, and the total number of nodes in $G$ is $O(t^{k})$, if we treat $\gamma$ as a constant independent of $t$. 
Therefore, to show that $\Pi$ requires $\Omega(n^{1/k})$ rounds to solve, it suffices to show that there is no algorithm that solves $\Pi$ within $t$ rounds on $G$.

Similar to \cref{sect-lower}, 
the nodes in $G$ are partitioned into layers $(\sR,1)$, $(\sC,1)$, $(\sR,2)$, $(\sC,2)$, $\ldots$, $(\sR,k+1)$ according to the rules in the above recursive construction, and we order the layers by $(\sR,1) \prec (\sC,1)  \prec (\sR,2)  \prec (\sC,2)  \prec \cdots  \prec (\sR,k+1)$. Recall that in the graph $G$, the nodes in layer $(\sC,i)$ form directed paths of $s$ nodes. We consider the following classification of nodes in layer $(\sC,i)$. Again, we will choose $s = \Theta(t)$ to be sufficiently large to ensure that central nodes exist.

\begin{definition}[Classification of nodes in layer $(\sC,i)$]\label{def-layer-C-2}
The nodes in the $s$-node directed path  $v_1 \leftarrow v_2 \leftarrow \cdots \leftarrow v_s$ in the construction of $G_{\sC,i}$ and $G_{\sC,i}^\circ$ are classified as follows.
\begin{itemize}
    \item We say that $v_j$ is a front node if $1 \leq j \leq t$.
    \item We say that $v_j$ is a central node if $t+1 \leq j \leq s-t$.
    \item We say that $v_j$ is a rear node if $s-t+1 \leq j \leq s$.
\end{itemize}
\end{definition}

Based on \cref{def-layer-C-2}, we define the following subsets of nodes in $G$.  In \cref{def-subsets-2}, recall that $P_i$ is the core path of the rooted tree $G_{\sC,i}^\circ$ in the construction of $G$ in \cref{def-lb-graph-main-2}.

\begin{definition}[Subsets of nodes in $G$]\label{def-subsets-2}
We define the following subsets of nodes in $G$.
\begin{itemize}
   \item Define $S_{\sR,1}$ as the set of nodes $v$ in $G$ such that the subgraph induced by $v$ and its descendants within radius-$\gamma$ neighborhood of $v$ is isomorphic to $T_\gamma$.
    \item For $2 \leq i \leq k+1$, define $S_{\sR,i}$ as the set of nodes $v$ in $G$ such that the subgraph induced by $v$ and its descendants within radius-$\gamma$ neighborhood of $v$ is isomorphic to $T_\gamma$ and contains only nodes in $S_{\sC,i-1}$.
    \item For $1 \leq i \leq k$, define $S_{\sC,i}$ as the set of nodes $v$ in $G$ meeting one of the following conditions.
    \begin{itemize}
        \item $v$ is in layer $(\sR,i+1)$ or above.
        \item $v \in P_i$ is a central or rear node in layer $(\sC,i)$.
        \item $v \notin P_i$ is a central or front node in layer $(\sC,i)$.      
    \end{itemize}
\end{itemize}
\end{definition}

We prove some basic properties of the sets in \cref{def-subsets-2}. 

\begin{lemma}[Subset containment]\label{lem-containment-2}
We have $S_{\sR,1} \supseteq S_{\sC,1} \supseteq  \cdots \supseteq S_{\sR,k+1} \neq \emptyset$.
\end{lemma}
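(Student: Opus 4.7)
The plan is to mirror the proof of \cref{lem-containment} in the unrooted setting, splitting the chain into three statements that I would verify in order: (i) $S_{\sR,i+1} \subseteq S_{\sC,i}$ for each $1 \leq i \leq k$, (ii) $S_{\sC,i} \subseteq S_{\sR,i}$ for each $1 \leq i \leq k$, and (iii) $S_{\sR,k+1} \neq \emptyset$.

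First, (i) will be immediate from \cref{def-subsets-2}: membership in $S_{\sR,i+1}$ requires the entire radius-$\gamma$ descendant subgraph of $v$, including $v$ itself, to lie in $S_{\sC,i}$; in particular $v \in S_{\sC,i}$.

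For (ii), I take $v \in S_{\sC,i}$ and perform a case analysis on which of the three clauses of $S_{\sC,i}$ applies. In every case my goal is to show that the subgraph induced by $v$ and its descendants within radius $\gamma$ is isomorphic to $T_\gamma$ and lies entirely in layer $(\sR,i)$ or above. Clause~(a), where $v$ is already in layer $(\sR,i+1)$ or higher, follows directly from the $T_\gamma$-topped recursive structure of $G_{\sR,j}$ for $j \geq i+1$ given by \cref{def-lb-graphs-2}. For clause~(b)---$v$ is central or rear on $P_i$---the choice $s = \Theta(t)$ taken sufficiently large (namely so that $s-t \geq \gamma+1$) guarantees that $v$ has $\gamma$ consecutive descendants either still on $P_i$ or continuing via the gluing edge $v_s^i \leftarrow v_1^{i+1}$ into $G_{\sC,i+1}^\circ$ (or into $G_{\sR,k+1}$ at the topmost level), together with the appended $G_{\sR,i-1}$ subtrees attached to those path nodes; all of these sit in layer $(\sR,i)$ or above. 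Clause~(c) is handled symmetrically, exploiting the reversed orientation of non-$P_i$ core paths nested inside some higher-layer $G_{\sR,i'}$. Once both parts of the claim are established, every descendant of $v$ lies in $S_{\sC,i-1}$ by clause~(a) of the definition of $S_{\sC,i-1}$ when $i \geq 2$, giving $v \in S_{\sR,i}$; for $i = 1$ only the $T_\gamma$-isomorphism is needed, which we already have.

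For (iii), I would take $r$ to be the embedded root of the distinguished $G_{\sR,k+1}$ subtree of $G$ introduced in \cref{def-lb-graph-main-2}. By construction of $G_{\sR,k+1}$, the subgraph induced by $r$ and its descendants within distance $\gamma$ is a copy of $T_\gamma$ all of whose nodes are declared to lie in layer $(\sR,k+1)$. Each such node lies in layer $(\sR,k+1)$, which is trivially ``layer $(\sR,k+1)$ or above'', and therefore belongs to $S_{\sC,k}$ by the first clause of its definition; this witnesses $r \in S_{\sR,k+1}$.

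I expect the main obstacle to be the bookkeeping in step~(ii), specifically the directional asymmetry built into the definition of $S_{\sC,i}$ between rear nodes of $P_i$ and front nodes of non-$P_i$ core paths. This asymmetry is calibrated to match the direction of the gluing edges $v_s^i \leftarrow v_1^{i+1}$: the ``continuation into higher layers'' happens on the $v_s$-side of $P_i$ but on the $v_1$-side of the non-$P_i$ copies nested inside $G_{\sR,i'}$. Verifying the $T_\gamma$ descendant shape and the layer membership in each sub-case requires carefully tracing $\gamma$ edges through the recursive construction, but no deeper idea is involved beyond the template already executed in \cref{lem-containment}.
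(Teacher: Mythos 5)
Your proposal is correct and follows essentially the same route as the paper's proof: (i) is read off the definition of $S_{\sR,i+1}$, (ii) reduces to the observation that the radius-$\gamma$ descendant subgraph of any node in layer $(\sC,i)$ or above is a copy of $T_\gamma$ lying in layer $(\sR,i)$ or above (hence in $S_{\sC,i-1}$), and (iii) uses the root $r$ of $G_{\sR,k+1}$ as the witness. The paper simply states the geometric fact in (ii) as following ``from the construction of $G$'' rather than carrying out your clause-by-clause case analysis; note also that the front/rear restrictions in the clauses of $S_{\sC,i}$ (and your condition $s-t\geq\gamma+1$) play no role here, since every core-path node, on $P_i$ or not, already has the required descendant structure.
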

\begin{proof}
The claim that $S_{\sC,i} \supseteq S_{\sR,i+1}$ follows from the definition of $S_{\sR,i+1}$ that $v \in S_{\sC,i}$ is a necessary condition for $v \in S_{\sR,i+1}$.
To prove claim that $S_{\sR,i} \supseteq S_{\sC,i}$, we recall that $v \in S_{\sC,i}$ implies that  $v$ is in layer $(\sC,i)$ or above. By the construction of $G$, the subgraph induced by $v$ and its descendants within the radius-$\gamma$ neighborhood of $v$ is isomorphic to $T_\gamma$ and contains only nodes in  layer $(\sR,i)$ or above.  
Since all nodes in layer $(\sR,i)$ or above are in $S_{\sC,i-1}$. we infer that $v \in S_{\sC,i}$ implies $v \in S_{\sR,i}$. 

To see that $S_{\sR,k+1} \neq \emptyset$, consider the node $r$ in the construction of $G$.
Since $r$ is the root of $G_{\sR,k+1}$, the subgraph induced by $r$ and its descendants within radius-$\gamma$ neighborhood of $r$ is isomorphic to $T_\gamma$ and contains only nodes in  layer $(\sR,k+1)$. We know that all nodes in layer $(\sR,k+1)$ are in $S_{\sC,k}$, so $r \in S_{\sR,k+1}$.
\end{proof}

\begin{lemma}[Property of $S_{\sC,i}$]\label{lem-layer-C-property-2}
For each node $v \in S_{\sC,i}$, either one of the following holds.
\begin{itemize}
    \item $v$ is a central node in layer $(\sC,i)$.
    \item For each child $u$ of $v$ such that $u \in S_{\sC,i}$, there exists a directed path $P = w_1 \leftarrow \cdots \leftarrow v \leftarrow u \leftarrow \cdots, w_2$ such that $w_1 \in P_i$ and $w_2 \notin P_i$ are central nodes in layer $(\sC,i)$ and all nodes in $P$ are in $S_{\sC,i}$.
\end{itemize}
\end{lemma}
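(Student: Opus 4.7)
The plan is to adapt the argument of \cref{lem-layer-C-property}, separating the analysis into the upward portion $w_1 \leftarrow \cdots \leftarrow v$ of the desired directed path (which must terminate at a central node on the distinguished spine copy $P_i$) and the downward portion $u \leftarrow \cdots \leftarrow w_2$ (which must terminate at a central layer-$(\sC,i)$ node in some other copy of $G_{\sC,i}$ embedded in $G$). Concatenating the two via the edge $v \leftarrow u$ yields the desired path $P$.

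Assume $v$ is not central in $(\sC,i)$. By the definition of $S_{\sC,i}$, exactly one of the following three cases holds: (a) $v$ lies in a layer at or above $(\sR,i+1)$; (b) $v \in P_i$ is a rear node; or (c) $v$ lies on the core path of a non-$P_i$ copy of $G_{\sC,i}$ and is a front node there. For the upward portion, in each case we trace the ancestor chain of $v$ in $G$ and verify that it stays inside $S_{\sC,i}$ until it reaches $P_i$: in (b) the chain lies on $P_i$ and passes through rear and then central nodes, all in $S_{\sC,i}$; in (c) it goes through front nodes of the non-$P_i$ copy (in $S_{\sC,i}$), then through the $T_\gamma$ of the enclosing copy of $G_{\sR,i+1}$ (layer $(\sR,i+1)$, in $S_{\sC,i}$), then upward along the enclosing spine $P_{i+1}$ (layer $(\sC,i+1) \succeq (\sR,i+1)$, in $S_{\sC,i}$), and finally into $P_i$ at $v_s^i$; in (a) it moves through higher layers until it meets $P_i$ similarly. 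Once on $P_i$, we continue up through rear nodes to the first central ancestor and take it as $w_1$.

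For the downward portion we first restrict the possible children $u \in S_{\sC,i}$ of $v$. Subtrees attached to a layer-$(\sC,j)$ core path have roots at a layer strictly below $(\sR,i+1)$ whenever $j \leq i$, so such a root is not in $S_{\sC,i}$; hence the only children of $v$ that can be in $S_{\sC,i}$ are either the next node on the same layer-$(\sC,i)$ core path, or---when $v$ is the last node of $P_i$---the next component of the main spine, or---when $v$ already lies higher up---a child at a layer $\succeq (\sR,i+1)$. The descent then proceeds: stay on the current core path until it ends, follow the main spine through $P_{i+1}, \dots$ until we encounter a copy of $G_{\sR,i+1}$ (the main $G_{\sR,k+1}$ when $i = k$, or a copy attached to a spine node of $P_{i+1}$ otherwise), walk through its $T_\gamma$ (layer $(\sR,i+1)$, in $S_{\sC,i}$) to a leaf, enter the non-$P_i$ copy of $G_{\sC,i}$ attached there, and walk along front nodes to its first central node $w_2$. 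In case (c), when $u$ is the next node on the non-$P_i$ path, we instead descend directly along that path to its first central node, which is automatically not on $P_i$.

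The main obstacle is the bookkeeping: systematically verifying, for each combination of placement of $v$ among (a)--(c) and each allowable child $u$, that every intermediate node on the assembled ancestor-to-descendant walk belongs to $S_{\sC,i}$ and that $w_1 \in P_i$ while $w_2$ lies in a different copy of $G_{\sC,i}$. Case (c) is the most delicate, since its ancestor chain crosses three structurally different regions; here we repeatedly exploit that $S_{\sC,i}$ contains every node at a layer $\succeq (\sR,i+1)$, which makes those traversals essentially free. The existence of a reachable non-$P_i$ copy of $G_{\sC,i}$ below $P_i$ is exactly what the construction of the lower-bound graph $G$ as the chain $G_{\sC,1}^\circ, \dots, G_{\sC,k}^\circ, G_{\sR,k+1}$ guarantees, ensuring that some copy of $G_{\sR,i+1}$ is always reachable by descending along the main spine from $v_s^i$.
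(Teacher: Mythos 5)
Your proposal is correct and follows essentially the same route as the paper's proof: go up the ancestor chain from $v$ until hitting the first central node of $P_i$, and go down from $u$ through the layers $\succeq (\sR,i+1)$ into a copy of $G_{\sR,i+1}$ and then a non-$P_i$ copy of $G_{\sC,i}$, checking membership in $S_{\sC,i}$ throughout. Your case analysis is more explicit than the paper's one-line "connected subtree with central root on $P_i$ and central leaves off $P_i$" observation, and in particular it is more careful about guaranteeing $w_2 \notin P_i$, but it is the same argument.
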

\begin{proof}
We assume that  $v \in S_{\sC,i}$ is not a central node in layer $(\sC,i)$. Consider any child $u$ of $v$ such that $u \in S_{\sC,i}$. The goal of the proof is to find a path $P = w_1 \leftarrow \cdots \leftarrow v \leftarrow u \leftarrow \cdots \leftarrow w_2$  such that all nodes of $P$ are in $S_{\sC,i}$, and $w_1 \in P_i$ and $w_2 \notin P_i$ are central nodes in layer $(\sC,i)$. The existence of such a directed path $P$ follows from a simple observation that $S_{\sC,i}$ induces a connected subtree where all the leaf nodes are central nodes in layer $(\sC,i)$ that are not in $P_i$ and the root node is a central node in layer $(\sC,i)$ that is in $P_i$.

Specifically, the directed path $P$ can be constructed as follows.
 To construct the part $w_1 \leftarrow \cdots \leftarrow v$, we simply start from $v$ and follow the parent pointers until we reach a node $w_1$ that is a central node in $P_i$.  The correctness of the construction of this part follows from the definition of $G$ and the fact that either $v$ is a rear node in $P_i$ or $v \notin P_i$ is in layer $(\sC,i)$ or above.

To construct the remaining part $v \leftarrow u \leftarrow \cdots \leftarrow w_2$, we simply observe that either $u$ itself is a central node in layer $(\sC,i)$ or there is a  descendant $w_2$ of $u$  such that $w_2$ is a central node in layer $(\sC,i)$. Hence we can always extend $v \leftarrow u$ to a desired path $v \leftarrow u \leftarrow \cdots \leftarrow w_2$.
\end{proof}

We note that the reason for attaching the rooted trees
$G_{\sC,1}^\circ, G_{\sC, 2}^\circ, \ldots, G_{\sC, k}^\circ$ to $G_{\sR,k+1}$ in the definition of $G$ is precisely that we want to define $S_{\sC,i}$ in such a way that allows us to have \cref{lem-layer-C-property-2}. That is, the design objective is to ensure that for each node $v \in S_{\sC,i}$ that is not a central node in layer $(\sC,i)$, there is a directed path in $S_{\sC,i}$ passing through $v$ and starting and ending at central nodes in layer $(\sC,i)$.

\subparagraph{Assumptions} We are given an $\LCL$ problem $\Pi=(\delta, \Sigma, \CC)$ such that $d_\Pi = k$. Hence there does not exist a good sequence 
\[(\SigmaR{1}, \SigmaC{1}, \SigmaR{2}, \SigmaC{2}, \ldots, \SigmaR{k}).\]
Recall that the rules for a good sequence are as follows:
\begin{align*}
    \SigmaR{i} & =
    \begin{cases}
    \trim(\Sigma) & \text{if $i=1$},\\
    \trim(\SigmaC{i-1}) &  \text{if $i>1$},\\
    \end{cases}\\
    \SigmaC{i} &\in \flexSCC(\SigmaR{i}).
\end{align*}

The only nondeterminism in the above rules is the choice of $\SigmaC{i} \in \flexSCC(\SigmaR{i})$ for each $i$. The fact that $d_\Pi = k$ implies that for all possible choices of $\SigmaC{1}, \SigmaC{2}, \ldots, \SigmaC{k}$, we always end up with $\SigmaR{k+1} = \emptyset$.

We assume that there is an algorithm $\A$ that solves $\Pi$ in $t = O(n^{1/k})$ rounds on  $G = G_{\sR,k+1}^\ast$, where $n$ is the number of nodes in $G$. To prove the desired $\Omega(n^{1/k})$ lower bound, it suffices to derive a contradiction. Specifically, we will prove that the existence of such an algorithm $\A$ forces the existence of a good sequence $(\SigmaR{1}, \SigmaC{1}, \SigmaR{2}, \SigmaC{2}, \ldots, \SigmaR{k})$, contradicting the fact that  $d_\Pi = k$. 

\subparagraph{Induction hypothesis} Our proof proceeds by an induction on the subsets $S_{\sR,1}$, $S_{\sC,1}$, $S_{\sR,2}$, $S_{\sC,2}$, $\ldots$, $S_{\sR,k+1}$. For each $1 \leq i \leq k$, the choice of $\SigmaC{i} \in \flexSCC(\SigmaR{i})$ is fixed in the induction hypothesis for $S_{\sC,i}$. The choice of $\SigmaR{i}$ is uniquely determined once $\SigmaC{1}, \SigmaC{2}, \ldots, \SigmaC{i-1}$ have been fixed.

Similar to \cref{sect-lower}, before defining our induction hypothesis, we recall that the output label of a node $v$ is determined by the subgraph induced by the radius-$t$ neighborhood $U$ of $v$, together with the distinct IDs of the nodes in $U$. For each node $v$ in $G$, we define $\Sigma_v^\A$ as the set of all possible output labels of $v$ that can possibly appear when we run $\A$ on $G$.  In other words, $\sigma \in \Sigma_v^\A$ implies that there exists an assignment of distinct IDs to nodes in the radius-$t$ neighborhood of $v$ such that the    output label of  $v$ is $\sigma$.

\begin{definition}[Induction hypothesis for layer $S_{\sR,i}$]\label{def-IH-R-2}
For each $1 \leq i \leq k+1$, the induction hypothesis for $S_{\sR,i}$ specifies that each $v \in S_{\sR,i}$ satisfies  $\Sigma_v^\A \subseteq \SigmaR{i}$.
\end{definition}

\begin{definition}[Induction hypothesis for layer $S_{\sC,i}$]\label{def-IH-C-2}
For each $1 \leq i \leq k$, the induction hypothesis for $S_{\sC,i}$ specifies that  there exists a choice $\SigmaC{i} \in \flexSCC(\SigmaR{i})$ such that each $v \in S_{\sC,i}$ satisfies  $\Sigma_v^\A \subseteq \SigmaC{i}$.
\end{definition}

Next, we prove that the induction hypotheses stated in \cref{def-IH-R-2,def-IH-C-2} hold.

\begin{lemma}[Base case: $S_{\sR, 1}$]\label{lem-base-case-2}
The  induction hypothesis for $S_{\sR, 1}$ holds.
\end{lemma}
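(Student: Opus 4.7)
The plan is to mirror the unrooted base case (Lemma~\ref{lem-base-case}) in the rooted formalism. I would fix an arbitrary $v \in S_{\sR,1}$ and show by contradiction that every label $\A$ can possibly assign to $v$ lies in $\trim(\Sigma) = \SigmaR{1}$.

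The key ingredient is the defining property of $\gamma$ recalled at the start of \cref{sect-lower-2}, specialised with $\tilde{\Sigma} := \Sigma$: for every $\sigma \in \Sigma \setminus \trim(\Sigma)$, there is no correct labeling of $T_\gamma$ whose root carries the label $\sigma$ and whose remaining nodes carry labels from $\Sigma$. This slots neatly into the definition of $S_{\sR,1}$, which is exactly the set of nodes $v$ such that the subgraph induced by $v$ and its descendants within radius $\gamma$ is isomorphic to $T_\gamma$ rooted at $v$.

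Concretely, I would suppose for contradiction that some $\sigma \in \Sigma_v^\A$ satisfies $\sigma \notin \SigmaR{1}$. By the definition of $\Sigma_v^\A$, there exists an ID assignment to the radius-$t$ neighborhood of $v$ that forces $\A$ to output $\sigma$ at $v$; extending this arbitrarily to a distinct-ID assignment of all of $G$ and running $\A$ produces a labeling that must be a correct solution of $\Pi$ by the assumed correctness of $\A$. Restricting this labeling to the radius-$\gamma$ rooted subtree at $v$ then yields a correct labeling of a copy of $T_\gamma$ in which the root is labeled $\sigma$ and every other label lies in $\Sigma$, contradicting the choice of $\gamma$. Hence $\Sigma_v^\A \subseteq \SigmaR{1}$.

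I do not expect a genuine obstacle: the only subtlety is that the restriction of the global labeling to the radius-$\gamma$ subtree must still be correct with respect to $\Pi$. This is immediate from \cref{def-correctness-rooted}, because $T_\gamma$ contains the full set of $\delta$ children of each of its non-leaf nodes, so no node configuration is truncated by the restriction and the leaves of $T_\gamma$ (having indegree $0$) impose no constraint.
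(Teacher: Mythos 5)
Your proposal is correct and follows essentially the same argument as the paper: it invokes the defining property of $\gamma$ with $\tilde{\Sigma} = \Sigma$, uses that every $v \in S_{\sR,1}$ roots a copy of $T_\gamma$, and concludes $\Sigma_v^\A \subseteq \SigmaR{1}$ from the correctness of $\A$. Your extra care about extending the ID assignment and about why the restriction to the $T_\gamma$ subtree remains a correct labeling only makes explicit what the paper leaves implicit.
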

 \begin{proof}
 Recall that the number $\gamma$ satisfies the following property. 
  For each subset $\tilde{\Sigma}\subseteq \Sigma$ and each $\sigma \in \tilde{\Sigma} \setminus \trim(\tilde{\Sigma})$, there exists no correct labeling of $T_\gamma$ where the label of the root $r$ is $\sigma$ and the  label of remaining nodes is in $\tilde{\Sigma}$.

 To prove the induction hypothesis for $S_{\sR, 1}$, consider any node $v \in S_{\sR, 1}$. By the definition of $S_{\sR, 1}$, the subgraph induced by $v$ and its descendants within radius-$\gamma$ of $v$ is isomorphic to $T_\gamma$ rooted at $v$. By setting $\tilde{\Sigma} = \Sigma$, we infer that  there is no correct labeling of $G$ such that the label of $v$ is in $\Sigma \setminus \trim(\Sigma) = \Sigma \setminus \SigmaR{1}$, so we must have $\VV_v^\A \subseteq \SigmaR{1}$, as $\A$ is correct.
 \end{proof}

\begin{lemma}[Inductive step: $S_{\sR,i}$]\label{lem-inductive-step-CR-2}
 Let $2 \leq i \leq k$. If the induction hypothesis for $S_{\sC,i-1}$ holds, then the  induction hypothesis for $S_{\sR,i}$ holds.
\end{lemma}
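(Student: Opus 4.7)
The plan is to adapt the argument of \cref{lem-base-case-2} to the inductive setting, substituting $\Sigma$ by $\SigmaC{i-1}$ as the ambient label set of the subtree hanging off $v$. Fix any node $v \in S_{\sR,i}$. By the definition of $S_{\sR,i}$ (for $i \geq 2$), the subgraph $H$ consisting of $v$ together with its descendants within distance $\gamma$ is isomorphic, as a rooted tree rooted at $v$, to $T_\gamma$, and moreover every node of $H$ lies in $S_{\sC,i-1}$ (in particular, $v \in S_{\sC,i-1}$ as well). The goal is to rule out, for this node $v$, every output label that lies in $\SigmaC{i-1} \setminus \SigmaR{i}$.

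Next, I would invoke the induction hypothesis for $S_{\sC,i-1}$ uniformly over $H$: by \cref{def-IH-C-2}, there is a fixed choice of $\SigmaC{i-1} \in \flexSCC(\SigmaR{i-1})$ such that every $u \in H$ satisfies $\Sigma_u^{\A} \subseteq \SigmaC{i-1}$. Now consider any execution of $\A$ on $G$ under any ID assignment. Because $\A$ is a correct algorithm for $\Pi$, the resulting labeling is locally consistent on every node of indegree $\delta$, and in particular on every internal node of $H$ (those within distance $\gamma - 1$ of $v$, which have indegree $\delta$ inside $H$ by the $T_\gamma$-isomorphism). Restricting the output to $H$ therefore yields a correct labeling of $T_\gamma$ in which every node receives a label from $\SigmaC{i-1}$ and whose root label is whatever label $\A$ assigns to $v$.

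Finally, I would appeal to the defining property of $\gamma$ chosen at the start of \cref{sect-lower-2}: for every $\tilde{\Sigma} \subseteq \Sigma$ and every $\sigma \in \tilde{\Sigma} \setminus \trim(\tilde{\Sigma})$, no correct labeling of $T_\gamma$ exists whose root label is $\sigma$ and whose remaining labels lie in $\tilde{\Sigma}$. Applying this with $\tilde{\Sigma} = \SigmaC{i-1}$, the labeling exhibited in the previous paragraph witnesses that the label $\A$ assigns to $v$ cannot lie in $\SigmaC{i-1} \setminus \trim(\SigmaC{i-1}) = \SigmaC{i-1} \setminus \SigmaR{i}$. Since this holds under every ID assignment, $\Sigma_v^{\A} \subseteq \trim(\SigmaC{i-1}) = \SigmaR{i}$, establishing the induction hypothesis for $S_{\sR,i}$.

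I do not expect a serious obstacle here; the only subtlety worth double-checking is that the induction hypothesis for $S_{\sC,i-1}$ indeed applies at $v$ itself (it does, because the definition of $S_{\sR,i}$ forces $v \in S_{\sC,i-1}$), and that $H$ really sits inside the rooted tree $G$ as a copy of $T_\gamma$ so that the $\gamma$-choice is applicable. The rest is a direct translation of the base-case argument from $\Sigma$ to $\SigmaC{i-1}$, mirroring \cref{lem-inductive-step-CR} in the unrooted setting but simplified since the rooted rule $\SigmaR{i} = \trim(\SigmaC{i-1})$ avoids the $\upharpoonright_{\DDD_{i-1}}$ step.
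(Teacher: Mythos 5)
Your proof is correct and follows essentially the same route as the paper's: invoke the induction hypothesis for $S_{\sC,i-1}$ on every node of the $T_\gamma$-shaped subtree below $v$ to confine all its output labels to $\SigmaC{i-1}$, then apply the defining property of $\gamma$ with $\tilde{\Sigma} = \SigmaC{i-1}$ (exactly the base-case argument with $\Sigma$ replaced by $\SigmaC{i-1}$) to conclude $\Sigma_v^{\A} \subseteq \trim(\SigmaC{i-1}) = \SigmaR{i}$. The paper merely abbreviates this by citing the base case, whereas you spell it out, including the correct observation that $v$ itself lies in $S_{\sC,i-1}$.
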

 \begin{proof}
  To prove the induction hypothesis for  $S_{\sR,i}$, consider any node $v \in S_{\sR,i}$. By the definition of $S_{\sR,i}$, the subgraph $S$ induced by $v$ and its descendants within radius-$\gamma$ of $v$ is isomorphic to $T_\gamma$ rooted at $v$ and contains only nodes in $S_{\sC,i-1}$. Our goal is to prove that $\VV_v^\A \subseteq \SigmaR{i} = \trim(\SigmaC{i-1})$.
  
  Consider any node $u$ in the subgraph $S$ induced by $v$ and its descendants within radius-$\gamma$ of $v$.
  As $u \in S_{\sC,i-1}$, the induction hypothesis for  $S_{\sC,i-1}$  implies that 
   \[\VV_u^\A \subseteq \SigmaC{i-1}.\]
  Hence the same argument in the proof of \cref{lem-base-case-2} shows that $\VV_v^\A \subseteq \trim(\SigmaC{i-1}) = \SigmaR{i}$, as required. 
 \end{proof}
 
 \begin{lemma}[Inductive step: $S_{\sC,i}$]\label{lem-inductive-step-RC-2}
 Let $1 \leq i \leq k$. If the induction hypothesis for layer $S_{\sR,i}$ holds, then the  induction hypothesis for layer $S_{\sC,i}$ holds.
\end{lemma}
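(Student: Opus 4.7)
My plan is to mirror the unrooted analogue \cref{lem-inductive-step-RC}, with half-edge configurations $\DDD^{\A}_{v,e_1,e_2}$ replaced by node-label sets $\Sigma_v^{\A}$ and the edge automaton $\M_\DDD$ replaced by the directed graph of the path-form of $\Pi \upharpoonright_{\SigmaR{i}}$. First I fix any central node $v^\ast$ in layer $(\sC,i)$ and set $\tilde{\Sigma} := \Sigma_{v^\ast}^{\A}$. For $s = \Theta(t)$ sufficiently large, the radius-$t$ view of every central node in layer $(\sC,i)$ is the same — $t$ directed steps up and down the $s$-node core path, with $\delta-1$ copies of $G_{\sR,i-1}$ hanging off every visited path-node — so $\tilde{\Sigma}$ does not depend on the choice of $v^\ast$. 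Since central nodes lie in $S_{\sC,i} \subseteq S_{\sR,i}$ by \cref{lem-containment-2}, the induction hypothesis for $S_{\sR,i}$ gives $\tilde{\Sigma} \subseteq \SigmaR{i}$.

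Next I would show that $\tilde{\Sigma}$ lies inside a single path-flexible strongly connected component of the directed graph of the path-form of $\Pi \upharpoonright_{\SigmaR{i}}$; this component will be our choice of $\SigmaC{i}$. For each $d \in \{0, 1, \ldots, |\Sigma|\}$ I pick, inside one core path, two central nodes $u_j, u_l$ at distance exactly $4t+3+d$ whose radius-$t$ neighborhoods are disjoint and also disjoint from the radius-$t$ neighborhood of any single edge; taking the hidden constant in $s = \Theta(t)$ large enough makes such pairs exist. For any $\sigma_1, \sigma_2 \in \tilde{\Sigma}$, I fix IDs in these two neighborhoods so that $\A$ outputs $\sigma_1$ at $u_j$ and $\sigma_2$ at $u_l$, complete the assignment globally with no edge-neighborhood repeats (so $\A$ is forced to produce a correct labeling), and read off the labels along the directed subpath $u_l \leftarrow \cdots \leftarrow u_j$. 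Every intermediate node is in $S_{\sC,i} \subseteq S_{\sR,i}$, so by the induction hypothesis its label lies in $\SigmaR{i}$, and the resulting sequence is a length-$(4t+3+d)$ walk $\sigma_2 \to \cdots \to \sigma_1$ in the restricted path-form graph. Sweeping $d$ through $|\Sigma|+1$ consecutive values and invoking \cref{lem-inflex-2} rules out path-inflexibility, since the gcd in that lemma cannot divide $|\Sigma|+1$ consecutive integers. Hence all of $\tilde{\Sigma}$ lies in a common path-flexible SCC, which I declare to be $\SigmaC{i} \in \flexSCC(\SigmaR{i})$.

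Finally I propagate the conclusion to every $v \in S_{\sC,i}$. If $v$ is central, any $\sigma \in \Sigma_v^{\A}$ already lies in $\tilde{\Sigma} \subseteq \SigmaC{i}$; otherwise I invoke \cref{lem-layer-C-property-2} to obtain a directed path $P = w_1 \leftarrow \cdots \leftarrow v \leftarrow \cdots \leftarrow w_2$ entirely contained in $S_{\sC,i}$, with $w_1, w_2$ both central in layer $(\sC,i)$. Given $\sigma \in \Sigma_v^{\A}$ I fix IDs around $v$ to realize the label $\sigma$ at $v$, complete the ID assignment globally, and read off the labels along $P$; every label is in $\SigmaR{i}$ by the induction hypothesis for $S_{\sR,i}$, so the resulting sequence is a walk $\tau_2 \to \cdots \to \sigma \to \cdots \to \tau_1$ in the restricted path-form graph, with $\tau_1 \in \Sigma_{w_1}^{\A} = \tilde{\Sigma} \subseteq \SigmaC{i}$ and $\tau_2 \in \Sigma_{w_2}^{\A} = \tilde{\Sigma} \subseteq \SigmaC{i}$. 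Since $\sigma$ is reached from $\tau_2$ and reaches $\tau_1$ in this graph, $\sigma$ sits in the same SCC as the elements of $\SigmaC{i}$, giving $\sigma \in \SigmaC{i}$ as required. The one delicate bookkeeping point — the same as in the unrooted proof — is ensuring the constant hidden in $s = \Theta(t)$ is large enough that for every $d \le |\Sigma|$ the pair $(u_j, u_l)$ fits strictly inside the central region of a single core path; once that is in place the rest is a direct transcription of the unrooted argument.
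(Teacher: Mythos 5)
Your proposal is correct and follows essentially the same route as the paper's proof: the same choice of $\tilde{\Sigma}$ from the (isomorphic) central nodes, the same family of walks of length $4t+3+d$ for $0 \leq d \leq |\Sigma|$ combined with \cref{lem-inflex-2} to place $\tilde{\Sigma}$ inside a path-flexible strongly connected component, and the same use of \cref{lem-layer-C-property-2} to propagate membership in $\SigmaC{i}$ to every $v \in S_{\sC,i}$. The only wording gap is where you infer that the label sequence read off a core path is a walk in the path-form graph of $\Pi \upharpoonright_{\SigmaR{i}}$ merely because the on-path labels lie in $\SigmaR{i}$; an edge of that graph exists only if the \emph{entire} node configuration, including the $\delta-1$ off-path children, uses labels in $\SigmaR{i}$, which the paper secures by observing that those children are also in $S_{\sR,i}$ --- an immediate fix via the same induction hypothesis.
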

 \begin{proof}
To prove the induction hypothesis for  $S_{\sC,i}$, we show that there exists a choice $\SigmaC{i} \in \flexSCC(\SigmaR{i})$  such that each $v \in S_{\sC,i}$ satisfies  $\Sigma_v^\A \subseteq \SigmaC{i}$.

We first consider the case that $v$ is a central node in layer $(\sC,i)$.  
Then it is clear that $\Sigma_v^\A$ is the same for all $v$ that is a central node in layer $(\sC,i)$, as the radius-$t$ neighborhood of these nodes $v$ are isomorphic, due to the definition of central nodes and the construction in \cref{def-lb-graphs-2}.
 For notational convenience, we write $\tilde{\Sigma}$ to denote the set  $\Sigma_v^\A$ for any central node $v$ in layer $(\sC,i)$.

\subparagraph{Plan of the proof} Consider any node $v \in S_{\sC,i}$, we claim that the node configuration $(\sigma \ : \ a_1 a_2 \cdots a_\delta)$ of $v$ resulting from running $\A$ uses only labels in  $\SigmaR{i}$. To see this, observe that  $v$ and all $\delta$ children of $v$ are in $S_{\sR,i}$, so the induction hypothesis for $S_{\sR,i}$ implies that their output labels must be in $\SigmaR{i}$. Hence all labels in $(\sigma \ : \ a_1 a_2 \cdots a_\delta)$ are in $\SigmaR{i}$.  

We write $\M$ to denote the directed graph representing the automaton associated with the path-form of the $\LCL$ problem $\Pi \upharpoonright_{\SigmaR{i}}$. The above claim implies the following. Consider any directed edge $u \leftarrow v$ such that both $u$ and $v$ are in $S_{\sC,i}$. Let $a$ be the output label of $u$ and let $b$ be the output label of $v$. Then $a \leftarrow b$ must be a directed edge in $\M$.

To prove that  there exists a choice $\SigmaC{i} \in \flexSCC(\SigmaR{i})$ such that $\Sigma_v^\A \subseteq \SigmaC{i}$ for all  $v\in S_{\sC,i}$. We will first show that $\tilde{\Sigma}$ must be a subset of a path-flexible strongly connected component of $\SigmaR{i}$, and then we fix $\SigmaC{i} \in \flexSCC(\SigmaR{i})$ to be this path-flexible strongly connected component.

Next, we will argue that for each $\sigma \in \Sigma_v^\A$, there exist a walk in $\M$ that starts from $\sigma$ and ends in $\tilde{\Sigma}$ and a walk in $\M$ that starts from $\tilde{\Sigma}$ and ends in $\sigma$. This shows that $\sigma$ is in the same strongly connected component as the members in $\tilde{\Sigma}$, so we conclude that $\Sigma_v^\A \subseteq \SigmaC{i}$.

\subparagraph{Part 1: $\tilde{\Sigma}$ is a subset of some $\SigmaC{i} \in \flexSCC(\SigmaR{i})$}
Consider a path $u_1 \leftarrow u_2  \leftarrow \cdots  \leftarrow u_s$ of $s$ nodes in layer $(\sC,i)$ of  $G$. 
 We choose $s = \Theta(t)$ to be sufficiently large to ensure that for each integer $0 \leq d \leq |\Sigma|$, there exist two nodes $u_j$ and $u_l$ in the path meeting the following conditions.
 \begin{itemize}
     \item $t+1 \leq j < l \leq s-t$, so $u_j$ and $u_l$ are central nodes.
     \item The distance $l-j$ between $u_j$ and $u_l$  equals $4t+3+d$. 
 \end{itemize}
 
 Similar to the proof of \cref{lem-inductive-step-RC}, the choice of the number $4t+3$ is to ensure that the union of the radius-$t$ neighborhood of any node $v$ and the radius-$t$ neighborhood of children of $v$ in $G$ does not node-intersect the radius-$t$ neighborhood of both $u_j$ and $u_l$.
 This implies that after arbitrarily fixing distinct IDs in the radius-$t$ neighborhood of $u_j$ and $u_l$, it is possible to complete the ID assignment of the entire graph $G$ in such a way that  the union of the radius-$t$ neighborhood of any node $v$ and the radius-$t$ neighborhood of children of $v$ in $G$ does not contain repeated IDs. If we run $\A$ with such an ID assignment, it is guaranteed that the output is correct.
 
 Consider the directed graph $\M$ and any of its two nodes $a$ and $b$ such that $a \in \tilde{\Sigma}$ and $b \in \tilde{\Sigma}$. Our choice of $\tilde{\Sigma}$ implies that there exists an assignment of distinct IDs to the radius-$t$ neighborhood of both $u_j$ and $u_l$ such that the output labels of $u_j$  and $u_l$ are $a$ and $b$, respectively.
 We complete the ID assignment of the entire graph $G$ in such a way that the union of the radius-$t$ neighborhood of any node $v$ and the radius-$t$ neighborhood of children of $v$  in $G$ does not contain repeated IDs.
 
 We write $\sigma_y$ to denote the output label of 
 $u_y$ resulting from running $\A$.  Hence  $a = \sigma_j \leftarrow \sigma_{j+1} \leftarrow \cdots \leftarrow \sigma_{l}=b$ is a walk $b \leadsto a$ in $\M$ of length $4t+3+d$. Therefore, for all choices of  $a \in \tilde{\Sigma}$ and $b \in \tilde{\Sigma}$ and  any $0 \leq d \leq |\Sigma|$, we may find a walk $b \leadsto a$ in $\M$ of length $4t+3+d$. This implies that all members in $\tilde{\Sigma}$ are in the same strongly connected component of $\M$.
 Furthermore, \cref{lem-inflex-2} implies that this strongly connected component is path-flexible, as the length of the walk can be any integer in between $4t+3$ and $4t+3+|\Sigma|$.

\subparagraph{Part 2:  $\Sigma_{v}^\A \subseteq \SigmaC{i}$ for each $v \in S_{\sC,i}$} For this part, 
we use \cref{lem-layer-C-property-2}, which shows that for each $v \in S_{\sC,i}$ in the graph  $G$,
there is a directed path $P = w_1 \leftarrow \cdots \leftarrow v \leftarrow \cdots \leftarrow w_2$ such that $w_1 \in P_i$ and $w_2 \notin P_i$ are central nodes in layer $(\sC,i)$ and all nodes in $P$ are in $S_{\sC,i}$.

Consider the output labels of the nodes in $P$ resulting from running $\A$. We write $\sigma_v$ to denote the output label of $v$.
Then $\sigma_{w_1} \leftarrow \cdots \leftarrow \sigma_v$ is a walk in $\M$ from $\sigma_v$ to a node in $\tilde{\Sigma}$ and  $\sigma_{v} \leftarrow \cdots \leftarrow \sigma_{w_2}$ is a walk in $\M$ from  a node in $\tilde{\Sigma}$ to $\sigma_v$. This shows that $\sigma_v$ is in the same strongly connected component of $\M$ as the members in $\tilde{\Sigma}$.

The same argument can be applied to all $\sigma_v \in \Sigma_{v}^\A$. The reason is that for each $\sigma_v \in \Sigma_{v}^\A$  there is an assignment of distinct IDs such that $\sigma_v$ is the output label of $v$. Hence we conclude that all members in $\Sigma_{v}^\A$ are in the same strongly connected component of $\M$ as the members in $\tilde{\Sigma}$, so  $\Sigma_{v}^\A \subseteq \SigmaC{i}$.
\end{proof}
 
 Applying \cref{lem-base-case-2,lem-inductive-step-CR-2,lem-inductive-step-CR-2,lem-inductive-step-RC-2} from $S_{\sR,1}$ all the way up to the last subset  $S_{\sR,k+1}$, we obtain the following result.
 
\begin{lemma}[Lower bound for the case $d_\Pi = k$]\label{lem-lower-finite-2}
If $d_\Pi = k$ for a finite integer $k$, then  $\Pi$ requires $\Omega(n^{1/k})$ rounds to solve on rooted trees of maximum indegree $\delta$.
\end{lemma}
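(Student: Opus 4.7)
The plan is to mirror the lower bound argument for unrooted trees (\cref{lem-lower-finite}), adapted to the rooted setting using the machinery just built up in \cref{def-lb-graphs-2}--\cref{def-IH-C-2}. Specifically, I would suppose for contradiction that there is an algorithm $\A$ solving $\Pi$ in $t$ rounds on the lower bound graph $G$ of \cref{def-lb-graph-main-2}. Since $G$ has $n = O(t^k)$ nodes (treating $\gamma$, which depends only on $\Pi$, as a constant independent of $t$), it suffices to derive a contradiction from the existence of such an $\A$ for every positive integer $t$, as this forces $t = \Omega(n^{1/k})$.

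The core of the proof is to apply the inductive lemmas \cref{lem-base-case-2}, \cref{lem-inductive-step-CR-2}, and \cref{lem-inductive-step-RC-2} alternately, climbing through the chain of subsets $S_{\sR,1} \supseteq S_{\sC,1} \supseteq S_{\sR,2} \supseteq S_{\sC,2} \supseteq \cdots \supseteq S_{\sR,k+1}$ established in \cref{lem-containment-2}. The base case \cref{lem-base-case-2} gives the induction hypothesis for $S_{\sR,1}$. Then, for each $i$ in turn, \cref{lem-inductive-step-RC-2} converts the hypothesis for $S_{\sR,i}$ into the hypothesis for $S_{\sC,i}$, selecting along the way a choice $\SigmaC{i} \in \flexSCC(\SigmaR{i})$; and \cref{lem-inductive-step-CR-2} then converts the hypothesis for $S_{\sC,i}$ into the hypothesis for $S_{\sR,i+1}$, with $\SigmaR{i+1} = \trim(\SigmaC{i})$. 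Iterating up to $i=k$ yields a fully determined sequence $(\SigmaR{1}, \SigmaC{1}, \ldots, \SigmaR{k+1})$ together with the induction hypothesis for $S_{\sR,k+1}$, namely that every $v \in S_{\sR,k+1}$ satisfies $\Sigma_v^\A \subseteq \SigmaR{k+1}$.

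To close the argument, I would invoke \cref{lem-containment-2} to pick some $v \in S_{\sR,k+1}$. Because $\A$ is correct, its output on $v$ must be some label, so $\Sigma_v^\A \neq \emptyset$ and therefore $\SigmaR{k+1} \neq \emptyset$. But then the assembled sequence $(\SigmaR{1}, \SigmaC{1}, \SigmaR{2}, \SigmaC{2}, \ldots, \SigmaR{k+1})$ satisfies all the requirements for a good sequence of length $k+1$, contradicting the hypothesis that $d_\Pi = k$. Since the argument holds for every integer $t$ and the number of nodes of $G$ is $n = O(t^k)$, we conclude that $\Pi$ requires $\Omega(n^{1/k})$ rounds on rooted trees of maximum indegree $\delta$.

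There is no substantive obstacle left; all the real work has been front-loaded into the three inductive lemmas and the structural \cref{lem-containment-2}. The only things to verify explicitly are bookkeeping items: that the $\SigmaC{i}$ chosen in the inductive step for $S_{\sC,i}$ is exactly the one fed into $\trim$ to define $\SigmaR{i+1}$ (immediate from the statement of \cref{lem-inductive-step-CR-2}), and that $\A$ is run in the $\LOCAL$ model on $G$ with worst-case IDs so that the sets $\Sigma_v^\A$ are well-defined. Structurally the argument is essentially identical to the unrooted case \cref{lem-lower-finite}, with single labels playing the role that node configurations played there.
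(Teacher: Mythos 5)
Your proposal is correct and follows essentially the same route as the paper's proof: assume a $t$-round algorithm, chain \cref{lem-base-case-2}, \cref{lem-inductive-step-RC-2}, and \cref{lem-inductive-step-CR-2} up to $S_{\sR,k+1}$, use \cref{lem-containment-2} and the correctness of $\A$ to conclude $\SigmaR{k+1}\neq\emptyset$, and obtain a good sequence of length $k+1$ contradicting $d_\Pi=k$. The final step converting this to $\Omega(n^{1/k})$ via $n=O(t^k)$ also matches the paper.
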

\begin{proof}
Assume that there is a $t$-round algorithm solving $\Pi$ on $G$.
By \cref{lem-base-case-2,lem-inductive-step-CR-2,lem-inductive-step-CR-2,lem-inductive-step-RC-2}, we infer that the induction hypothesis for the last subset  $S_{\sR,k+1}$ holds. By \cref{lem-containment-2}, $S_{\sR,k+1} \neq \emptyset$, so there is a node $v$ in $G$  such that  $\Sigma_v^\A \subseteq \SigmaR{k+1}$. Therefore, the correctness of $\A$ implies that $\SigmaR{k+1} \neq \emptyset$, which implies that $(\SigmaR{1}, \SigmaC{1}, \SigmaR{2}, \SigmaC{2}, \ldots, \SigmaR{k+1})$ chosen in the induction hypothesis is a good sequence, contradicting the assumption that $d_\Pi = k$.
Hence such a $t$-round algorithm $\A$ that solves $\Pi$ does not exist. As $t$ can be any positive integer and $t = \Omega(n^{1/k})$, where $n$ is the number of nodes in $G$, we conclude the proof.
\end{proof}

Now we are ready to prove \cref{thm-unrooted-characterization-2}.

\begin{proof}[Proof of \cref{thm-unrooted-characterization-2}]
The upper bound part of the theorem follows from \cref{lem-upper-finite-2,lem-upper-infinite-2}. The lower bound part of the theorem follows from \cref{lem-lower-0-2,lem-lower-finite-2}.
\end{proof}

\subsection{Complexity of the characterization}\label{sect-time-2}

In this section, we prove \cref{thm-unrooted-poly-time-2}. We are given a description of an $\LCL$ problem $\Pi=(\delta, \Sigma,\CC)$ on $\delta$-regular rooted trees. We assume that the description is given in the form of listing all the node configurations in $\CC$. Therefore, the description length of $\Pi$ is $\ell = O(|\CC| \delta \log |\Sigma|)$. We allow $\delta$ to be a non-constant as a function of $\ell$. We will design an algorithm that computes all possible good sequences $(\SigmaR{1}, \SigmaC{1}, \SigmaR{2}, \SigmaC{2}, \ldots, \SigmaR{k})$  in time polynomial in $\ell$.

For the case of $d_\Pi = \infty$, there are good sequences that are arbitrarily long. Recall that we have $\SigmaR{1} \supseteq \SigmaC{1} \supseteq \cdots \supseteq \SigmaR{k}$. Hence if $k > |\Sigma|$, there must exist some index $1 \leq i < k$ such that $\SigmaR{i}  = \SigmaC{i}  = \SigmaR{i+1}$. This immediately implies that $\SigmaR{i}  = \SigmaC{i}  = \SigmaR{i+1} = \SigmaC{i+1} = \cdots$. The reason is that $\SigmaR{i}  = \SigmaC{i}$ implies that $\SigmaR{i}$ itself is  the only element of $\flexSCC(\SigmaR{i})$. We conclude that any good sequence with $k > |\Sigma|$ must stabilizes at some point $i \leq |\Sigma|$, in the sense that $\SigmaR{i}  = \SigmaC{i}  = \SigmaR{i+1} = \SigmaC{i+1} = \cdots$. 

\subparagraph{High-level plan}
Recall that the rules for a good sequence are as follows.
\begin{align*}
    \SigmaR{i} & =
    \begin{cases}
    \trim(\Sigma) & \text{if $i=1$},\\
    \trim(\SigmaC{i-1}) &  \text{if $i>1$},\\
    \end{cases}\\
    \SigmaC{i} &\in \flexSCC(\SigmaR{i}).
\end{align*}

To compute all  good sequences $(\SigmaR{1}, \SigmaC{1}, \SigmaR{2}, \SigmaC{2}, \ldots, \SigmaR{k})$, we go through  all choices of $\SigmaC{i} \in \flexSCC(\SigmaR{i})$ and apply the rules recursively until we cannot proceed any further. The process stops when $\SigmaC{i} = \SigmaR{i}$ (the sequence stabilizes) or $\SigmaC{i} = \emptyset$ or $\SigmaR{i} = \emptyset$ (the sequence ends).

We start with describing the algorithm for computing $\trim(\tilde{\Sigma})$ for a given $\tilde{\Sigma} \subseteq \Sigma$.

\begin{lemma}[Algorithm for $\trim$]\label{lem-compute-trim-2}
The set $\trim(\tilde{\Sigma})$ can be computed in $O(|\CC| \delta |\tilde{\Sigma}| + |\tilde{\Sigma}|^2)$ time, for any given $\tilde{\Sigma} \subseteq \Sigma$.
\end{lemma}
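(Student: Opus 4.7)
The plan is to compute $\trim(\tilde{\Sigma})$ by a straightforward fixed-point iteration. Set $\Sigma_0 = \tilde{\Sigma}$, and for $i \geq 1$ define
\[
\Sigma_i = \bigl\{\sigma \in \Sigma_{i-1} \ : \ \exists\, (\sigma \ : \ a_1 a_2 \cdots a_\delta) \in \CC \text{ with } a_j \in \Sigma_{i-1} \text{ for all } 1 \leq j \leq \delta\bigr\}.
\]
The algorithm simply computes $\Sigma_1, \Sigma_2, \ldots$ until it reaches an index $i$ with $\Sigma_i = \Sigma_{i-1}$, and outputs this common value.

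For correctness, I will prove by induction on $i$ that $\Sigma_i$ equals the set of labels $\sigma \in \tilde{\Sigma}$ for which $T_i$ admits a correct labeling with root labeled $\sigma$ and all remaining nodes labeled from $\tilde{\Sigma}$. The base case $i=0$ is immediate since $T_0$ is a single node. For the inductive step, a correct labeling of $T_i$ with root label $\sigma$ decomposes into a choice of some $(\sigma \ : \ a_1 \cdots a_\delta) \in \CC$ together with correct labelings of the $\delta$ subtrees $T_{i-1}$, each of whose root label $a_j$ must, by induction, lie in $\Sigma_{i-1}$; conversely any such choice yields a correct labeling of $T_i$. Since $\tilde{\Sigma} \supseteq \Sigma_1 \supseteq \Sigma_2 \supseteq \cdots$ and once $\Sigma_i = \Sigma_{i-1}$ the recurrence yields $\Sigma_j = \Sigma_i$ for all $j \geq i$, the stable value equals $\bigcap_{i \geq 0} \Sigma_i$, which by the inductive characterization is exactly $\trim(\tilde{\Sigma})$ (using also the standard observation, already noted in the proof of \cref{lem-trimming-2}, that a labeling of $T_{s}$ extends to a labeling of $T_{s+1}$ by restriction, so failure at some height propagates to all larger heights).

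For the running time, each strict containment $\Sigma_i \subsetneq \Sigma_{i-1}$ drops the size by at least one, so the iteration stabilizes after at most $|\tilde{\Sigma}|$ rounds. A single round costs $O(|\CC|\delta + |\tilde{\Sigma}|)$: maintain a Boolean membership array for $\Sigma_{i-1}$ (initialized in $O(|\tilde{\Sigma}|)$ time), then scan through each of the $|\CC|$ configurations $(\sigma \ : \ a_1 \cdots a_\delta)$, checking in $O(\delta)$ time whether all $a_j$ are in $\Sigma_{i-1}$; mark each $\sigma$ that admits at least one such surviving configuration. Summing over the $O(|\tilde{\Sigma}|)$ rounds gives the claimed bound $O(|\CC|\delta|\tilde{\Sigma}| + |\tilde{\Sigma}|^2)$.

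There is no real obstacle here: the only point requiring any care is the equivalence between the fixed-point characterization and the definition of $\trim$, and this is an immediate consequence of the same monotonicity argument used in \cref{lem-trimming-2}. A faster algorithm (propagating removals lazily, in the style of topological sort) would yield $O(|\CC|\delta + |\tilde{\Sigma}|)$ time, but the weaker bound above already suffices for the overall polynomial-time claim of \cref{thm-unrooted-poly-time-2}, so the simple naive iteration is sufficient.
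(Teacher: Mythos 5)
Your proposal is correct and follows essentially the same route as the paper: the identical fixed-point iteration $\Sigma_0 = \tilde{\Sigma}$, $\Sigma_i = \{\sigma \in \Sigma_{i-1} : \exists (\sigma : a_1\cdots a_\delta) \in \CC \text{ with all } a_j \in \Sigma_{i-1}\}$, the same semantic characterization of $\Sigma_i$ via correct labelings of $T_i$, stabilization within $|\tilde{\Sigma}|$ rounds, and the same per-round cost $O(|\CC|\delta + |\tilde{\Sigma}|)$ using a membership array. Your write-up is in fact slightly more explicit than the paper's about why the fixed point equals $\trim(\tilde{\Sigma})$, but there is no substantive difference in approach.
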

\begin{proof}
  We write $\Sigma_i$ to denote set of all possible  $\sigma \in \tilde{\Sigma}$ such that there is a correct labeling of the rooted tree $T_i$ where the label of each  node is in $\tilde{\Sigma}$ and the  label of the root $r$ is $\sigma$. The set $\Sigma_i$ can be computed recursively as follows.
\begin{itemize}
    \item For the base case,  $\Sigma_0$ is the set of all labels appearing in $\tilde{\Sigma}$. 
    \item For the inductive step,  each $\sigma \in \Sigma_{i-1}$ is added to $\Sigma_{i}$ if there exists a node configuration $(\sigma \ : \ a_1 a_2 \cdots a_\delta) \in \CC$ such that $a_j \in \Sigma_{i-1}$ for all $1 \leq j \leq \delta$.
\end{itemize}

The above recursive computation implies that given $\Sigma_{i-1}$ has been computed, the computation of  $\Sigma_{i}$ costs $O(|\CC| \delta + |\tilde{\Sigma}|)$ time. The algorithm simply goes over each $(\sigma \ : \ a_1 a_2 \cdots a_\delta) \in \CC$ and checks whether $a_j \in \Sigma_{i-1}$ for all $1 \leq j \leq \delta$. If we store $\Sigma_{i-1}$ as binary string of length $|\tilde{\Sigma}|$, testing whether $a_j \in \Sigma_{i-1}$ costs $O(1)$ time. Therefore, the process of going through all $(\sigma \ : \ a_1 a_2 \cdots a_\delta) \in \CC$ costs $O(|\CC| \delta)$ time. After that, using $O(|\CC| + |\tilde{\Sigma}|)$ time, we may calculate $\Sigma_{i}$ and store it as a binary string of length $|\tilde{\Sigma}|$.

Clearly, we have $\Sigma_1 \supseteq \Sigma_2  \supseteq \cdots$. Once $\Sigma_i = \Sigma_{i+1}$, the sequence stabilizes:  $\Sigma_i = \Sigma_{i+1} = \Sigma_{i+2} = \cdots$. It is clear that the sequence stabilizes at some $i \leq |\tilde{\Sigma}|$.  We write $\Sigma^\ast$ to denote the fix point $\Sigma_i$ such that $\Sigma_i = \Sigma_{i+1} = \Sigma_{i+2} = \cdots$.
It is clear that $\trim(\tilde{\Sigma}) = \Sigma^\ast$, and it can be computed in $O(|\CC| \delta + |\tilde{\Sigma}|) \cdot |\tilde{\Sigma}| =   O(|\CC| \delta |\tilde{\Sigma}| + |\tilde{\Sigma}|^2)$ time.
\end{proof}

  Next, we give an algorithm that computes all path-flexible strongly connected components $\Sigma' \in \flexSCC(\tilde{\Sigma})$, for any given $\tilde{\Sigma}\subseteq \Sigma$.

\begin{lemma}[Algorithm for $\flexSCC$]\label{lem-compute-flex-SCC-2}
The set of all $\Sigma' \in \flexSCC(\tilde{\Sigma})$ can be computed in $O(|\tilde{\Sigma}|^3 + |\CC|\delta)$  time, for any given $\tilde{\Sigma}\subseteq \Sigma$.
\end{lemma}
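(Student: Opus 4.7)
The plan is to first construct the directed graph $\M$ representing the path-form of $\Pi \upharpoonright_{\tilde{\Sigma}}$, then find its strongly connected components, and finally test path-flexibility of each one. The graph $\M$ has vertex set $\tilde{\Sigma}$ and edge set consisting of the directed edges $a \rightarrow \sigma$ coming from the path-form $\PathCC$ restricted to labels in $\tilde{\Sigma}$. To build $\M$, we scan through each node configuration $(\sigma : a_1 a_2 \cdots a_\delta) \in \CC$, check in $O(\delta)$ time (using a bitmap membership table for $\tilde{\Sigma}$ built in $O(|\Sigma|)$ preprocessing) whether $\sigma$ and each $a_j$ lie in $\tilde{\Sigma}$, and, if so, insert the directed edges $a_j \rightarrow \sigma$ into $\M$. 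Duplicate edges can be suppressed using an $|\tilde{\Sigma}| \times |\tilde{\Sigma}|$ adjacency matrix. This step costs $O(|\CC|\delta)$ time and yields a graph with $|V(\M)|=|\tilde{\Sigma}|$ nodes and $O(|\tilde{\Sigma}|^2)$ edges.

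Next, I compute the strongly connected components of $\M$ using a standard linear-time algorithm (e.g., Tarjan's or Kosaraju's), which runs in $O(|V(\M)| + |E(\M)|) = O(|\tilde{\Sigma}|^2)$ time. For each SCC $U$, the remaining task is to decide whether $U$ is path-flexible, i.e., whether every node $\sigma \in U$ is path-flexible in the sense of \cref{lem-inflex-2}. By connectivity within $U$, it suffices to check one representative $\sigma \in U$: $\sigma$ is path-flexible iff there exists $K$ with a length-$k$ walk $\sigma \leadsto \sigma$ for all $k \geq K$. Letting $L$ be the set of possible lengths of such walks, this holds iff $\gcd(L) = 1$; and, by the same argument appearing in the proof of \cref{lem-compute-flex-SCC}, $\gcd(L)$ equals $\gcd(L')$ where $L'$ consists of the members of $L$ that are at most $2|V(\M)|-1 \leq 2|\tilde{\Sigma}|-1$.

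To compute $L'$ for a representative $\sigma$, I iteratively maintain, for $i = 0, 1, \ldots, 2|\tilde{\Sigma}|-1$, the set $U_i \subseteq \tilde{\Sigma}$ of nodes reachable from $\sigma$ by a walk of length exactly $i$, starting from $U_0 = \{\sigma\}$; then $i \in L'$ iff $\sigma \in U_i$. Each update $U_{i-1} \to U_i$ can be done in $O(|\tilde{\Sigma}|^2)$ time by going over the edges of $\M$, so all $U_i$ are produced in $O(|\tilde{\Sigma}|^3)$ time per SCC; summing this cost across the disjoint SCCs gives $O(|\tilde{\Sigma}|^3)$ in total, since each node and each edge of $\M$ appears in at most one SCC. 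Adding the $O(|\CC|\delta)$ graph-construction cost yields the claimed $O(|\tilde{\Sigma}|^3 + |\CC|\delta)$ bound.

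The only nontrivial point is the gcd-truncation argument showing that restricting to lengths at most $2|\tilde{\Sigma}| - 1$ does not change the gcd, but this is cited from~\cite{lcls_on_paths_and_cycles} and already used in \cref{lem-compute-flex-SCC}; no other obstacle arises, so the lemma follows by assembling these three polynomial-time subroutines.
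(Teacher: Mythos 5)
Your construction of $\M$, the SCC computation, and the gcd-based flexibility test are exactly the paper's approach, and the reduction to walk lengths at most $2|V(\M)|-1$ is the same citation the paper uses. The one place your argument does not actually deliver the claimed bound is the final summation: you compute each $U_i$ over the \emph{entire} graph $\M$, iterating over all $O(|\tilde{\Sigma}|^2)$ edges for $2|\tilde{\Sigma}|$ steps, which is $O(|\tilde{\Sigma}|^3)$ \emph{per SCC}; summing a per-SCC cost of $O(|\tilde{\Sigma}|^3)$ over up to $|\tilde{\Sigma}|$ components gives $O(|\tilde{\Sigma}|^4)$, and the disjointness of the SCCs does not rescue this, because your per-component cost is not measured in terms of the component's own size.

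The fix is small but necessary: restrict the walk-length computation for a representative $\sigma \in U$ to the subgraph of $\M$ induced by $U$. This is legitimate because every closed walk $\sigma \leadsto \sigma$ stays inside the SCC of $\sigma$ (each node on such a walk both reaches and is reachable from $\sigma$), so $L$ is determined by the induced subgraph, and the truncation bound becomes $2|U|-1$. Then each component costs $O(|U|\cdot|U|^2)=O(|U|^3)$, and $\sum_U |U|^3 \le \bigl(\sum_U |U|\bigr)^3 = O(|\tilde{\Sigma}|^3)$, which is what the paper's proof asserts. With that adjustment your argument matches the paper's and establishes the stated bound.
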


\begin{proof}
Let $\M$ be the directed graph representing the automaton associated with the path-form of the $\LCL$ problem $\Pi \upharpoonright_{\tilde{\Sigma}}$. 
The directed graph $\M$  has $|\tilde{\Sigma}|$ nodes and   $O(|\tilde{\Sigma}|^2)$ directed edges, and it can be constructed in $O(|\CC|\delta)$ time by going through all node configurations in $\CC$.
The strongly connected components of $\M$ can be computed in time linear in the number of nodes and edges of $\M$, which is $O(|\tilde{\Sigma}|^2)$.

By to the proof of \cref{lem-compute-flex-SCC},
for each strongly connected component $U$ of $\M$, deciding whether $U$ is path-flexible costs $O(|U|^3)$ time.  The summation of the time complexity $O(|U|^3)$, over all strongly connected component $U$ of $\M$, is $O(|\tilde{\Sigma}|^3)$.
To summarize, all $\Sigma' \in \flexSCC(\tilde{\Sigma})$ can be computed in $O(|\tilde{\Sigma}|^3 + |\CC|\delta)$ time.
\end{proof}

 Combining \cref{lem-compute-trim-2,lem-compute-flex-SCC-2}, we obtain the following result.

\begin{lemma}[Computing all good sequences]\label{lem-compute-good-seq-2}
The set of all good sequences can be computed in $O(|\Sigma|^4 + |\CC|\delta |\Sigma|^2)$ time, for any given $\LCL$ problem $\Pi=(\delta, \Sigma, \CC)$.
\end{lemma}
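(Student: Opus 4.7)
The plan is to describe the recursive search that enumerates all good sequences, argue that the recursion is shallow and wide in a controlled way (number of branches at each level is bounded, and the "sizes" that feed into the subroutines sum to at most $|\Sigma|$ across branches of the same level), and then add up the per-level costs using \cref{lem-compute-trim-2,lem-compute-flex-SCC-2}.

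First I would set up the recursion explicitly. Starting from $\SigmaR{1} = \trim(\Sigma)$ (computed once), at each recursive node we have some current $\SigmaR{i}$; we compute $\flexSCC(\SigmaR{i})$, and for every $\SigmaC{i}$ in this set we recurse on $\SigmaR{i+1} = \trim(\SigmaC{i})$. The recursion along a branch stops when $\SigmaR{i+1} = \emptyset$ (the branch ends), when $\flexSCC(\SigmaR{i}) = \emptyset$ (the branch ends), or when the sequence stabilizes in the sense that $\SigmaR{i+1} = \SigmaR{i}$, at which point that branch certifies $d_\Pi = \infty$. I would note that the depth of each branch is at most $|\Sigma|$, since along a branch we have $\SigmaR{1} \supseteq \SigmaC{1} \supseteq \SigmaR{2} \supseteq \SigmaC{2} \supseteq \cdots$ and any failure of strict decrease triggers the stabilization stopping rule (as discussed in the paragraph preceding the lemma).

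The crucial structural observation is a level-wise disjointness property: for each fixed depth $i$, the sets $\SigmaR{i}$ that appear across all surviving branches of the recursion tree are pairwise disjoint subsets of $\Sigma$, and likewise for the $\SigmaC{i}$. This follows by induction on $i$: at depth $1$ there is a single set $\SigmaR{1}$; given that the $\SigmaR{i}$'s at depth $i$ are pairwise disjoint, at each recursive node the children are indexed by elements of $\flexSCC(\SigmaR{i})$, which by definition is a family of pairwise disjoint subsets of $\SigmaR{i}$, so the $\SigmaC{i}$'s coming out of a single node are disjoint from each other, and across different nodes at the same level they are disjoint because their ambient $\SigmaR{i}$'s are; then $\SigmaR{i+1} \subseteq \SigmaC{i}$ inherits disjointness. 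In particular, the number of branches at depth $i$ is at most $|\Sigma|$, and $\sum_{\text{branches at depth }i} |\SigmaR{i}| \le |\Sigma|$ and similarly for $\SigmaC{i}$.

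Now I would bound the cost at depth $i$ by summing the subroutine costs across branches, using the disjointness bound. By \cref{lem-compute-flex-SCC-2}, computing $\flexSCC(\SigmaR{i})$ at one branch costs $O(|\SigmaR{i}|^3 + |\CC|\delta)$, so summing over at most $|\Sigma|$ branches and using $\sum |\SigmaR{i}|^3 \le \bigl(\sum |\SigmaR{i}|\bigr)^3 \le |\Sigma|^3$ gives $O(|\Sigma|^3 + |\Sigma|\cdot|\CC|\delta)$ for the SCC step at level $i$. By \cref{lem-compute-trim-2}, computing $\trim(\SigmaC{i})$ at one branch costs $O(|\CC|\delta|\SigmaC{i}| + |\SigmaC{i}|^2)$, and summing across branches with $\sum |\SigmaC{i}| \le |\Sigma|$ yields $O(|\CC|\delta|\Sigma| + |\Sigma|^2)$. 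So the per-level cost is $O(|\Sigma|^3 + |\Sigma|\cdot|\CC|\delta)$, and multiplying by the at most $|\Sigma|$ levels gives the claimed $O(|\Sigma|^4 + |\CC|\delta|\Sigma|^2)$ bound.

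The main (minor) obstacle will be making the accounting across branches rigorous, specifically verifying the inequality $\sum_j |\SigmaR{i,j}|^3 \le \bigl(\sum_j |\SigmaR{i,j}|\bigr)^3 \le |\Sigma|^3$ and being careful that the additive $|\CC|\delta$ terms in the subroutine costs are multiplied only by the number of branches (bounded by $|\Sigma|$), not by something larger; everything else is bookkeeping assembled from \cref{lem-compute-trim-2,lem-compute-flex-SCC-2} and the disjointness observation.
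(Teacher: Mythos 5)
Your proposal is correct and follows essentially the same route as the paper's proof: enumerate the recursion tree, use the fact that the sets appearing at each fixed depth are pairwise disjoint subsets of $\Sigma$ to bound the per-level cost via \cref{lem-compute-trim-2,lem-compute-flex-SCC-2}, and multiply by the depth bound $|\Sigma|$. Your explicit inductive argument for the level-wise disjointness and the branch count is a slightly more careful version of what the paper asserts in one line, and the accounting (including the additive $|\CC|\delta$ terms being charged once per branch) matches the paper's.
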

\begin{proof}
By \cref{lem-compute-trim-2}, given $\SigmaC{i} \subseteq \Sigma$, the cost of computing $\SigmaR{i+1}$ is $O(|\CC| \delta |\SigmaC{i}| + |\SigmaC{i}|^2)$ time. Since all  sets $\SigmaC{i}$ in the depth $i$ of the recursion are disjoint, the total cost for this step of the recursion is $O(|\CC| \delta |\Sigma| + |\Sigma|^2)$.

By \cref{lem-compute-flex-SCC-2}, given $\SigmaR{i} \subseteq \Sigma$, the cost of computing all possible $\SigmaC{i}$ is $O(|\SigmaR{i}|^3 + |\CC|\delta)$ time.  Since all  sets $\SigmaR{i}$ in the depth $i$ of the recursion are disjoint, the total cost for this step of the recursion is $O(|\Sigma|^3 + |\CC|\delta |\Sigma|)$. 

The depth of the recursion is at most $|\Sigma|$, so the total cost of computing all good sequences is $O(|\Sigma|^4 + |\CC|\delta |\Sigma|^2)$.
\end{proof}

We are ready to prove  \cref{thm-unrooted-poly-time-2}.

\begin{proof}[Proof of \cref{thm-unrooted-poly-time-2}]
By \cref{lem-compute-good-seq-2}, the set of all good sequences can be computed in polynomial time, and we can compute $d_\Pi$ given the set of all good sequences.  If $d_\Pi = k$ is a positive integer, then from the discussion in \cref{sect-upper-2} we know how to turn a good sequence
$(\SigmaR{1}, \SigmaC{1}, \SigmaR{2}, \SigmaC{2}, \ldots, \SigmaR{k})$ into a description of an  $O(n^{1/k})$-round algorithm for $\Pi$. If $d_\Pi = \infty$, then similarly a good sequence
$(\SigmaR{1}, \SigmaC{1}, \SigmaR{2}, \SigmaC{2}, \ldots, \SigmaR{O(\log n)})$ leads to a description of an $O(\log n)$-round algorithm for $\Pi$.
\end{proof}

\bibliography{references}

\end{document}